\crefname{equation}{}{}
\numberwithin{equation}{section}
\title{Momentum Distribution of a Fermi Gas in the Random Phase Approximation}
\author[1,*]{Niels Benedikter}
\author[2,*]{Sascha Lill}
\affil[1]{ORCID: \href{https://orcid.org/0000-0002-1071-6091}{0000-0002-1071-6091}, e--mail: \href{mailto:niels.benedikter@unimi.it}{niels.benedikter@unimi.it}}
\affil[2]{ORCID: \href{https://orcid.org/0000-0002-9474-9914}{0000-0002-9474-9914}, e--mail: \href{mailto:sascha.lill@unimi.it}{sascha.lill@unimi.it}}
\affil[*]{Università degli Studi di Milano, Via Cesare Saldini 50, 20133 Milano, Italy}
\newcommand{\bA}{\boldsymbol{A}}
\newcommand{\bB}{\boldsymbol{B}}
\newcommand{\bC}{\boldsymbol{C}}
\newcommand{\bD}{\boldsymbol{D}}
\newcommand{\bE}{\boldsymbol{E}}
\newcommand{\bF}{\boldsymbol{F}}
\newcommand{\cA}{\mathcal{A}}
\newcommand{\cC}{\mathcal{C}}
\newcommand{\cD}{\mathcal{D}}
\newcommand{\cE}{\mathcal{E}}
\newcommand{\cF}{\mathcal{F}}
\newcommand{\cI}{\mathcal{I}}
\newcommand{\cN}{\mathcal{N}}
\newcommand{\cO}{\mathcal{O}}
\newcommand{\cS}{\mathcal{S}}
\newcommand{\fn}{\mathfrak{n}}
\newcommand{\fC}{\mathfrak{C}}
\newcommand{\CCC}{\mathbb{C}}
\newcommand{\NNN}{\mathbb{N}}
\newcommand{\RRR}{\mathbb{R}}
\newcommand{\TTT}{\mathbb{T}}
\newcommand{\ZZZ}{\mathbb{Z}}
\newcommand{\Zbb}{\mathbb{Z}}
\newcommand{\1}{\mathbb{I}}
\renewcommand{\a}{\textnormal{a}}
\newcommand{\ad}{\mathrm{ad}}
\renewcommand{\b}{\textnormal{b}}
\newcommand{\Coul}{\textnormal{Coul}}
\renewcommand{\d}{\textnormal{d}}
\newcommand{\di}{\textnormal{d}}
\newcommand{\DV}{\mathrm{DV}}
\newcommand{\diam}{\mathrm{diam}}
\newcommand{\eff}{\mathrm{eff}}
\newcommand{\F}{\mathrm{F}}
\newcommand{\HF}{\mathrm{HF}}
\newcommand{\I}{\mathrm{I}}
\newcommand{\II}{\mathrm{II}}
\newcommand{\III}{\mathrm{III}}
\newcommand{\IV}{\mathrm{IV}}
\newcommand{\V}{\mathrm{V}}
\newcommand{\nor}{\mathrm{nor}}
\newcommand{\op}{\mathrm{op}}
\newcommand{\RPA}{\mathrm{RPA}}
\newcommand{\SR}{\mathrm{SR}}
\newcommand{\supp}{\mathrm{supp}}
\newcommand{\norm}[1]{\lVert #1 \rVert}
\newcommand{\kF}{k_\F}
\newcommand{\BF}{B_\F}
\newcommand{\BFc}{B_\F^c}
\newcommand{\Ik}{\mathcal{I}_k}
\newcommand{\north}{\Gamma^{\textnormal{nor}}}
\newcommand{\fock}{\mathcal{F}}
\newcommand{\Ncal}{\mathcal{N}}
\newcommand{\Ecal}{\mathcal{E}}
\newcommand{\Nbb}{\mathbb{N}}
\newcommand{\Ical}{\mathcal{I}}
\newcommand{\Ccal}{\mathcal{C}}
\newcommand{\Cbb}{\mathbb{C}}
\newcommand{\tagg}[1]{ \stepcounter{equation} \tag{\theequation}
\label{#1} } 
\newtheorem{theorem}{Theorem}[section]
\newtheorem{proposition}[theorem]{Proposition}
\newtheorem{lemma}[theorem]{Lemma}
\theoremstyle{definition}
\theoremstyle{definition}
\theoremstyle{definition}
\theoremstyle{definition}
\begin{document}
\maketitle
\begin{abstract}
We consider a system of interacting fermions on the three-dimensional torus in a mean-field scaling limit. Our objective is computing the occupation number of the Fourier modes in a trial state obtained through the random phase approximation (in its collective bosonization formulation) for the ground state. We prove that the trial state's momentum distribution has a jump discontinuity, i.\,e., a well-defined Fermi surface. Moreover the Fermi momentum does not depend on the interaction potential (it is universal). Our result shows that the random phase approximation in the mean-field scaling limit is in principle sufficiently precise to identify a non-trivial Fermi liquid phase.
\end{abstract}


\section{Introduction and Main Result}
\label{sec:intro}
\label{subsec:mainresult}

We consider a quantum system of $ N $ spinless fermionic particles on the torus $ \TTT^3 := [0, 2 \pi]^3 $, which may be understood as a simple model of a metal. This system is described by the Hamilton operator
\begin{equation}
	H_N := \sum_{j = 1}^N - \hbar^2 \Delta_{x_j} + \lambda \sum_{i < j}^N V(x_i - x_j)
\label{eq:HN}
\end{equation}
acting on wave functions in the antisymmetric tensor product $ L_{\a}^2(\TTT^{3N}) = \bigwedge_{i=1}^N L^2(\TTT^3) $.
As the general case with $N \simeq 10^{23}$ is too difficult to analyze, we will consider the asymptotics for particle number $ N \to \infty $ in the \emph{mean-field scaling limit} introduced by \cite{NS81}, i.\,e., we set
\begin{equation}
	\hbar := N^{-\frac 13} \;, \qquad \text{and} \qquad \lambda := N^{-1}\;.
\end{equation}
The ground state energy of the system is defined as the infimum of the spectrum
\[E_N := \inf \sigma(H_N) = \inf_{\substack{\psi \in L^2_\textnormal{a}(\mathbb{T}^{3N})\\\norm{\psi}=1}} \langle \psi,H_N \psi\rangle\;.\]
Any eigenvector of $H_N$ with eigenvalue $E_N$ is called a ground state. In the present paper we analyze the momentum distribution (i.\,e., the Fourier transform of the one--particle reduced density matrix) of the random phase approximation of the ground state.

\smallskip

In the non-interacting case of interaction potential $ V = 0 $, the ground states are given by Slater determinants comprising $N$ plane waves with different momenta $k_j \in \ZZZ^3$ of minimal kinetic energy $ |k_j|^2 $, i.\,e.,
\begin{equation}
\label{eq:planewaveslater}
\psi(x_1,x_2,\ldots,x_N) = \frac{1}{\sqrt{N!}} \det\left(\frac{1}{(2\pi)^{3/2}} e^{i k_j \cdot x_i}\right)_{j,i=1}^N \;.
\end{equation}
This is (up to a phase) unique if we impose that the number of particles exactly fills a ball in momentum space; i.\,e., if
\begin{equation}
\label{eq:fermiball}
 N = |B_{\F}| \quad \textnormal{for} \quad	 B_{\F} := \{k \in \ZZZ^3 : |k| \le k_{\F} \} \quad \textnormal{with some} \quad k_{\F} > 0\;.
\end{equation}
This means that the Fermi momentum $\kF$ scales like\footnote{In \cite{BNPSS21}, $ \kappa $ is defined as $ (\frac 34 \pi)^{\frac 13} $, so in that notation $ \kappa = \hbar k_{\F} (1 + \cO(\hbar)) $.}
\begin{equation}
k_{\F} = \kappa N^{\frac 13} \qquad \text{with} \qquad \kappa = \left( \frac{3}{4 \pi} \right)^{\frac 13} + \cO(N^{-\frac 13})\;.
\label{eq:kappa}
\end{equation}
The set of momenta $\BF$ is called the Fermi ball. We also define its complement
\[\BFc := \ZZZ^3 \setminus \BF \;.\]

As a first step towards including the effects of a non-vanishing interaction potential $V$ one may consider the Hartree--Fock approximation. In this approximation, the expectation value $\langle \psi, H_N \psi \rangle$ is minimized over the choice of $N$ orthonormal orbitals $\{\varphi_j: j = 1,\ldots, N\} \subset L^2(\TTT^3)$ in the Slater determinant
\[\psi(x_1,x_2,\ldots,x_N) = \frac{1}{\sqrt{N!}} \det\left(\varphi_j(x_i)\right)_{j,i=1}^N \;.\]
(This is to be compared to the general quantum many-body problem, in which also linear combinations of Slater determinants are permitted.)
In general, the Hartree-Fock minimizer will not have plane waves as orbitals. However, with our particular assumptions on the potential, the scaling limit, and the particle number, one can show \cite[Appendix A]{BNPSS21} that \cref{eq:planewaveslater} is also the (unique up to a phase) Hartree--Fock minimizer. Thus in the Hartree--Fock approximation the momentum distribution remains the trivial
\begin{equation}
\label{eq:nkhf}
\langle \psi, a^*_q a_q \psi\rangle = \begin{cases} 0 \quad &\text{for } q \in B_{\F}^c \\
		1 \quad &\text{for } q \in B_{\F} \;. \end{cases}
\end{equation}

It is highly non-trivial to understand if this jump in the momentum distribution survives the presence of an interaction potential when going beyond Hartree--Fock theory, and if it does, how its location and height are affected by the interaction. In physics, it has become known as \emph{Luttinger's theorem} that the ``interaction may
deform the FS [Fermi surface], but it cannot change its volume. In the
isotropic case, where symmetry requires the FS to
remain a sphere, its radius must then remain $k_\F$ (the
Fermi momentum of the unperturbed system)'' \cite{Lut60}. In other words, the Fermi momentum is conjectured to be universal, i.\,e., independent of the interaction potential $V$. This is in contrast to the height of the jump, called \emph{quasiparticle weight} $Z$, which generally depends on $V$. As discussed, the Hartree--Fock approximation predicts that $k_\F$ is independent of $V$, but also that the quasiparticle weight is $Z=1$ independent of $V$. So to observe any effect of the interaction, we have to employ a more precise approximation for the ground state. In \cite{BNPSS20,BNPSS21,BPSS22,CHN21,CHN22,CHN23,Chr23PhD,CHN24} it has been shown that the ground state energy can be approximated to higher precision using the random phase approximation, in its formulation as bosonization of particle--hole pair excitations. In fact, there it was shown that
the ground state energy in the mean-field scaling limit has an expansion as
\begin{equation}
\label{eq:rpaprecision}
 E_N = E^{\textnormal{HF}}_N + E^{\textnormal{RPA}}_N + \mathcal{O}(N^{-1/3-\alpha})
\end{equation}
for some $\alpha > 0$. Here $E^{\textnormal{HF}}_N$ is the expectation value of the Hamiltonian in the Slater determinant of plane waves \cref{eq:planewaveslater}; it is a sum of three terms called the kinetic, direct, and exchange term, of orders $N$, $N$, and $N^{0}$, respectively (unless $V$ is taken as the Coulomb potential). The correction $E^{\textnormal{RPA}}_N$ to the Hartree--Fock energy is given by an explicit integral formula \cite{BNPSS20,BNPSS21,BPSS22} of order $N^{-1/3}$. (The validity of the expansion to the order of $E^{\textnormal{HF}}_N$ was proven much earlier by \cite{GS94}; Hartree--Fock theory has moreover been derived as the leading-order approximation of the dynamics in \cite{BPS14,BD23} and with mixed states as initial data in \cite{BJPSS16}.)

Our goal in this paper is to exhibit the prediction of the random phase approximation for the momentum distribution. We will take a trial state constructed by bosonization and compute the deviation of its momentum distribution from \cref{eq:nkhf},
\begin{equation}
	n_q := \begin{cases}
		\langle \psi, a_q^* a_q \psi \rangle		\quad &\text{for } q \in B_{\F}^c \\
		1 - \langle \psi, a_q^* a_q \psi \rangle	\quad &\text{for } q \in B_{\F}	\;.
	\end{cases}
\label{eq:nqEbar}
\end{equation}
Our trial state is the same as in \cite{BNPSS20,BNPSS21,BPSS22}. Trial states of a similar form have been used in \cite{CHN23,Chr23PhD} for mean-field Fermi gases, as well as for dilute Fermi gases in \cite{FGHP21,Gia22,Gia23,GHNS24}.

Since to obtain sufficiently sharp estimates we use a technically complicated bootstrap, in this paper we only consider interaction potentials $V$ with compactly supported Fourier transform\footnote{The interaction $ V(x-y) $ being a two-particle multiplication operator, we use the convention $ V(x) = \sum_{k \in \ZZZ^3} \hat{V}_k e^{ik \cdot x} $ for its Fourier transform. This is in contrast to wave functions, whose Fourier transform is defined to be $L^2$-unitary, $ \psi(x_1, \ldots, x_N) = (2 \pi)^{-\frac{3N}{2}} \sum_{k_1, \ldots, k_N \in \ZZZ^3} \hat{\psi}(k_1, \ldots, k_N) e^{i(k_1 \cdot x_1 + \ldots + k_N \cdot x_N)} $.}. This should be generalizable (for the ground state energy even the Coulomb case has been covered \cite{CHN24}) without fundamental changes but at the cost of readability. So to state our main theorem, we assume that $ \supp(\hat{V}) \subset B_R(0) $ for some $ R > 0 $. Further, we adopt the convention that $ C $ is a positive constant (in particular not depending on $ N$, $V $, or $ q $) but whose value may change from line to line.

\begin{theorem}[Main Result]
\label{thm:main}
Assume that the Fourier transform $ \hat{V} $ of the interaction potential is non-negative and compactly supported.
Then, there exists a sequence of trial states $ \psi_N \in L_{\a}^2(\TTT^{3N}) $ with particle numbers $N$ corresponding to completely filled Fermi balls as in \cref{eq:fermiball} such that
\begin{itemize}
\item the sequence of trial states is energetically close to the ground state up to the precision of the random phase approximation (see \cref{eq:rpaprecision}) in the sense that there exists some $ \alpha > 0 $ such that
\begin{equation}
\label{eq:mainenergy}
	\langle \psi_N, H_N \psi_N \rangle - E_N \le C N^{-\frac{1}{3}-\alpha}\;;
\end{equation}
\item and for any $ \epsilon > 0 $ and all momenta $q \in \Zbb^3$ such that\footnote{This restriction is a technical assumption needed for the step from the bosonized momentum distribution \cref{eq:nqb} to the explicit integral in \cref{eq:main}, compare to the estimates of \cref{eq:threesuberrors}. The justification of the bosonization, \cref{thm:main2}, is valid independently.}
\begin{equation}
\label{eq:cQepsilon}
	q \notin \left\{ p \in \ZZZ^3 \mid \exists k \in B_R(0) : \frac{|k \cdot p|}{\lvert k \rvert \lvert p \rvert} \in (0, \epsilon) \right\}
\end{equation}
the trial states' momentum distribution $ n_q $ can be estimated as
\begin{equation}
\label{eq:main}
	0
	\le n_q
	\le N^{-\frac{2}{3}} \sum_{k \in \cD^q \cap \ZZZ^3}  \frac{\hat{V}_k}{2 \kappa |k|} \frac{1}{\pi} \int_0^\infty \frac{(\mu^2 - \lambda_{q,k}^2)(\mu^2 + \lambda_{q,k}^2)^{-2}}{1 + Q_k^{(0)}(\mu)} \; \d \mu + \cE\;,
\end{equation}
where
\begin{equation}
\label{eq:abbreviations1}
\begin{aligned}
	\lambda_{q,k} &:= \frac{|k \cdot q|}{\lvert k \rvert \lvert q \rvert}\;, \qquad
	Q_k^{(0)}(\mu) := 2\pi \kappa \hat{V}_k \left( 1 - \mu \arctan \left( \frac{1}{\mu} \right) \right) \;, \\
	\cD^q &:= \begin{cases}
		\{ k \in B_R(0) : q+k \in B_{\F}^c \} \quad & \textrm{if } q \in B_{\F} \\
		\{ k \in B_R(0) : q-k \in B_{\F}\} \quad & \textrm{if } q \in B_{\F}^c \;.\\
	\end{cases}
\end{aligned}
\end{equation}
\end{itemize}
The error term $\mathcal{E}$ is bounded by
\begin{equation}
\label{eq:mainerror}
	\lvert \cE \rvert
	\le C \epsilon^{-1} N^{-\frac{2}{3}- \frac{1}{12}}\;.
\end{equation}
In the trial state $ \psi_N $ which we construct in \cref{sec:trialstate}, for ``most'' $ q \in \ZZZ^3 $, the upper bound in \cref{eq:main} is an equality. Due to the technical details of the trial state construction, we defer the precise definition of ``most'' to Proposition \ref{prop:optimality}.
\end{theorem}
The theorem is proven in \cref{sec:proofmain}, based on the sharpened bosonization strategy explained in \cref{sec:strategyofproof}. A key role if played by the bootstrap of the bosonization approximation, which justifies the use of the quasibosonic picture for $n_q$ as a ``pointwise'' observable compared to the earlier results on the random phase approximation concerning only the ground state energy in which the properties of the state enter only ``averaged'' over the entire range of momenta.

%
One can also obtain a series expansion for $ n_q $ in terms of Friedrichs diagrams \cite{Lil23,BL23}. The bosonized momentum distribution $ n_q^{(\b)} $ arises by restricting to a subset of diagrams. The next-smaller diagrams render contributions of order $N^{-1} $, to be compared to the error bound \eqref{eq:mainerror} of order $N^{-\frac{2}{3} - \frac{1}{12}}$. However, no convergence of the diagrammatic expansion was established. Establishing convergence is a common difficulty with perturbative expansions requiring significant effort, whereas in our present analysis it is proven rather easily (see the proof of \cref{lem:bootstrap}).

As as a measure for the height of the jump at the Fermi surface we define the \emph{quasiparticle weight} as
\[Z := 1 - \sup_{q \in \BF} n_q - \sup_{q \in \BFc} n_q \;.\]
(Note that in our convention \eqref{eq:nqEbar}, $n_q$ only represents the excitations with respect to the non-interacting Fermi ball.) As a corollary of our main theorem, we obtain an estimate for $Z$.
\begin{theorem}[Jump at the Fermi Surface]
\label{thm:jump}
Under the assumptions of \cref{thm:main}, the trial states $ \psi_N \in L_{\a}^2(\TTT^{3N}) $ exhibit a jump at the Fermi surface, in the sense that
\begin{equation}\label{eq:jump}
	 Z \geq 1 - C N^{- \frac{2}{3}+ \frac{1}{12}}\;.
\end{equation}
\end{theorem}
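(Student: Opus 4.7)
The plan is a direct application of \cref{thm:main} with a judicious choice of $\epsilon$. Since $Z = 1 - \sup_{q \in \BF} n_q - \sup_{q \in \BFc} n_q$, the task reduces to the pointwise bound $n_q \le CN^{-2/3+1/12}$, uniform in $q$. The natural choice is $\epsilon := N^{-1/12}$, which turns the error term \cref{eq:mainerror} into $\lvert \mathcal{E}\rvert \le CN^{-2/3}$, already of smaller order than the target rate.

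To control the main term in \cref{eq:main} I would exploit the identity $\int_0^\infty \frac{\mu^2 - \lambda^2}{(\mu^2+\lambda^2)^2}\,\d\mu = 0$, which follows from the antiderivative $-\mu/(\mu^2+\lambda^2)$. Subtracting this zero mode rewrites the integral in \cref{eq:main} as
\[
-\frac{1}{\pi}\int_0^\infty \frac{\mu^2 - \lambda_{q,k}^2}{(\mu^2+\lambda_{q,k}^2)^2}\cdot \frac{Q_k^{(0)}(\mu)}{1+Q_k^{(0)}(\mu)}\,\d\mu.
\]
Since $Q_k^{(0)}\geq 0$ forces $Q_k^{(0)}/(1+Q_k^{(0)}) \in [0,1]$, and the direct calculation gives $\int_0^\infty \lvert \mu^2 - \lambda^2\rvert/(\mu^2+\lambda^2)^2\,\d\mu = 1/\lambda$, each summand in the main term is bounded in absolute value by $1/(\pi\lambda_{q,k})$. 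For $q \in \cQ_\epsilon$ and $k$ with $\lambda_{q,k}>0$ the defining property of $\cQ_\epsilon$ enforces $\lambda_{q,k} \ge \epsilon = N^{-1/12}$, so this integral is at most $N^{1/12}/\pi$. Since $\lvert \cC^q \cap \ZZZ^3\rvert \le \lvert B_R(0) \cap \ZZZ^3\rvert$ is bounded independently of $N$ and $q$, and $\hat V_k/(2\kappa\lvert k\rvert)$ is uniformly bounded on $B_R(0)\cap\ZZZ^3$, the main term is at most $CN^{-2/3+1/12}$, yielding $n_q \le CN^{-2/3+1/12}$ for every $q \in \cQ_\epsilon$.

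The most delicate step is extending this pointwise bound to the full supremum defining $Z$: with $\epsilon = N^{-1/12}$ the set $\cQ_\epsilon$ does not cover all of $\BF$ or $\BFc$, and for $k \perp q$ (allowed inside $\cQ_\epsilon$) the $1/\lambda_{q,k}$ estimate breaks down. I would close the gap by sharpening the integral via a further integration by parts against $\frac{\d}{\d\mu}[Q_k^{(0)}(\mu)/(1+Q_k^{(0)}(\mu))]$: using $\lvert (Q_k^{(0)})'(\mu)\rvert \le C\min(1,\mu^{-3})$, the $1/\lambda$ bound improves to the logarithmic $C(1+\log(1/\lambda))$. Combined with the lattice-integrality lower bound $\lambda_{q,k} \ge 1/(\lvert k\rvert \lvert q\rvert) \ge cN^{-1/3}$ whenever $k\cdot q\neq 0$ and $q$ is of Fermi scale, this gives a uniform estimate of order $\log N$ on each summand, hence $n_q \le CN^{-2/3}\log N \le CN^{-2/3+1/12}$ across the entire Fermi (co-)ball, as required for \cref{thm:jump}.
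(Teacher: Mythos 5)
Your first two paragraphs are sound and in fact mirror the paper's own computation of the leading term: the bound $\bigl|\int_0^\infty \frac{\mu^2-\lambda^2}{(\mu^2+\lambda^2)^2}\frac{\d\mu}{1+Q}\bigr| \le \lambda^{-1}$ combined with a cutoff $\lambda^{-1}\le C N^{1/12}$ is exactly what produces the rate $N^{-2/3+1/12}$. The gap is in the third paragraph, and it is structural: \cref{thm:main} is the wrong starting point for \cref{thm:jump}, because the supremum defining $Z$ runs over \emph{all} $q$, while \cref{eq:main} (a) is only asserted for $q\in\cQ_\epsilon$ and (b) can be vacuous even there. Concerning (a): to place every relevant $q$ in some $\cQ_{\epsilon(q)}$ you must take $\epsilon(q)$ as small as $\min_k \lambda_{q,k}\gtrsim (|k|\,|q|)^{-1}\sim N^{-1/3}$, and the error term \cref{eq:mainerror} then degrades to $\epsilon^{-1}N^{-2/3-1/12}\sim N^{-5/12}$, far above the target $N^{-7/12}$. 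Your integral manipulations improve only the main term, not $\cE$, whose $\epsilon^{-1}$ arises from the comparisons $\lambda_{\alpha_q,k}\leftrightarrow\lambda_{q,k}$ and $Q_k\leftrightarrow Q_k^{(0)}$ inside the proof of \cref{prop:maintilde} and is not accessible from the statement of \cref{thm:main}. Concerning (b): for $q\in\BF$ close to the Fermi surface the set $\cC^q$ can contain lattice points $k$ with $k\cdot q=0$, and such $q$ may well lie in $\cQ_\epsilon$, since \cref{eq:cQepsilon} only excludes $\lambda_{q,k}\in(0,\epsilon)$, not $\lambda_{q,k}=0$. For such $k$ the integral in \cref{eq:main} diverges at $\mu=0$, the right-hand side is $+\infty$, and neither the zero-mode subtraction nor the logarithmic improvement (which still blows up as $\lambda\to 0$) rescues it.

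The paper avoids both problems by not going through \cref{thm:main} at all: it combines \cref{thm:main2}, which gives $|n_q-n_q^{(\b)}|\le CN^{-1+2\delta}$ uniformly in $q$ with no angular restriction, with \cref{prop:main3}, whose sum runs over $\tilde{\cC}^q$ rather than $\cC^q$. The set $\tilde{\cC}^q$ in \cref{eq:cCqtilde} carries the built-in cutoff $|k\cdot\hat{\omega}_{\alpha_q}|\ge N^{-\delta}$ (inherited from the index sets $\cI_k^\pm$), so $\lambda_{\alpha_q,k}^{-1}\le R N^{\delta}=R N^{1/12}$ holds automatically for every summand, and the troublesome near-perpendicular $k$ simply do not appear in $n_q^{(\b)}$. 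To make your argument work you would have to descend to the level of $n_q^{(\b)}$ and $\tilde{\cC}^q$, which is precisely what the paper's proof does.
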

This does not depend on $ \epsilon $, nor is there any restriction like in \eqref{eq:cQepsilon}.
The proof is given in \cref{sec:proofmain}. We expect that the sharp bound is of the form $Z \geq 1 - C N^{- \frac{2}{3}}$, as proposed in the physics literature \cite{DV60} extrapolated to the mean-field scaling limit.

\smallskip

The presence of a jump in the momentum distribution is characteristic of the Fermi liquid phase. Fermi liquid theory was phenomenologically introduced by Landau \cite{Lan56}, who argued that the interaction becomes suppressed due to correlations between particles, resulting in a system that on mesoscopic scales appears to be composed of extremely weakly interacting fermions (which share the quantum numbers of the electrons but have, e.\,g., a renormalized mass). Bosonization as a microscopic justification of Fermi liquid theory was suggested by \cite{Hal94,CF94}.

A rigorous proof of Fermi liquid theory has been undertaken in a series of ten papers surveyed in \cite{FKT04} in spatial dimension $d=2$. This program used multiscale methods of constructive field theory to construct a convergent perturbation series. To suppress the superconducting instability, an asymmetric Fermi surface was assumed; an example was constructed in \cite{FKT00}. The case $d=3$ was partially analyzed by \cite{DMR01}. In this framework, Salmhofer's criterion \cite{Sal98} was formulated as a more precise characterization of the Fermi liquid phase; however, it makes reference to positive temperature, which we believe unnecessary in the mean-field scaling limit.

Our results concern a trial state, not the actual ground state. It is one of the fundamental problems of mathematical condensed matter theory to understand if similar statements hold for the ground state. This is a very subtle question. It is expected that due to the Kohn--Luttinger instability the ground state always has a superconducting part which smoothens out the discontinuity. However, we conjecture that the Kohn--Luttinger effect changes the momentum distribution only on a scale which is extremely close to the Fermi surface, well separated from the scale on which we observe the characteristic Fermi liquid behavior. A rigorous proof is challenging because a-priori bounds through the ground state energy would have to be of extremely high precision; in fact, a single pair excitation $a^*_p a^*_h$ may change the momentum distribution from $\langle \psi_N, a^*_p a_p \psi_N \rangle =0 $ to $\langle \psi_N, a^*_p a_p \psi_N \rangle =1 $ at a kinetic energy cost as small as order $\hbar^2 = N^{-2/3}$; this has to be compared to the resolution of the ground state energy \cref{eq:mainenergy} that is only $\hbar N^{-\alpha} = N^{-\frac{1}{3}-\alpha}$. Nevertheless, we believe our result is interesting for two main reasons. First, it shows that the random phase approximation in the mean-field scaling limit is sufficient to identify a Fermi liquid: it is neither trivial (as the Hartree--Fock approximation), nor do we need higher orders of the expansion in $N^{-1/3}$. Second, our trial state \cref{eq:psi} being given in terms of unitary transformations $\mathfrak{R}$ and $T$ as $\mathfrak{R}T\Omega$, it is natural to study the transformed Hamiltonian $T^* \mathfrak{R}^* H_N \mathfrak{R} T$ to obtain a-priori estimates on the deviation of the momentum distribution from its random phase approximation formula. It remains very difficult to push this approach to the precision obtained for the trial state and which would prove Fermi liquid behavior, but some rough statements for sufficiently wide averages in momentum are possible; a discussion of the obtainable estimates shall appear elsewhere.

\subsection{Comparison with the Physics Literature}
\label{subsec:literaturecomp}
The physics literature \cite{DV60, Lam71a, Lam71b} considers the system in the thermodynamic limit, where sums over momenta become integrals. Our estimates are not uniform in the system's volume, but we can formally extrapolate \cref{eq:main} to the thermodynamic limit.
 To do so, we rescale the torus $ \TTT^3 $ to the torus $ L \TTT^3 = [0, 2 \pi L]^3 $. The corresponding momentum space is $ L^{-1} \ZZZ^3 $ and we can replace sums over $ \ZZZ^3 $ by sums over $ L^{-1} \ZZZ^3 $. The number of momenta in the Fermi ball $ B_{\F} := \{ k \in L^{-1} \ZZZ^3 \; : \; |k| \le k_{\F} \} $ is now
\begin{equation}
	N = |B_{\F}| \approx \frac{4 \pi}{3} k_{\F}^3 L^3 \;.
\end{equation}
We consider $L \to \infty$ followed by the high density limit $\kF \to \infty$. The density is
\begin{equation}
	\rho := \frac{N}{(2 \pi L)^3} = \frac{k_{\F}^3}{6 \pi^2} (1 + \cO(k_{\F}^{-1} L^{-1}))\;.
\label{eq:rho}
\end{equation}
Setting $ \hbar := k_{\F}^{-1} $, we can define $ Q_k^{(0)}(\mu) $ via \cref{eq:abbreviations1}, and the r.\,h.\,s.\ of \cref{eq:main} becomes
\begin{equation}
\label{eq:zzz}
	n_q(k_{\F}, L)
	\approx \sum_{k \in \cD^q \cap L^{-1} \ZZZ^3} \frac{1}{\pi} \frac{\hat{V}_k}{2 \hbar \kappa N |k|} \int_0^\infty \frac{(\mu^2 - \lambda_{q,k}^2)(\mu^2 + \lambda_{q,k}^2)^{-2}}{1 + Q_k^{(0)}(\mu)} \; \d \mu \;.
\end{equation}
In \cref{subsec:infvolapp} we compute the large-volume limit of \cref{eq:zzz} and obtain \cref{eq:nqbfinalresult}. For comparison, in \cref{subsec:DV60comp} we extrapolate \cite{DV60}'s result to short-ranged interaction potentials and obtain \cref{eq:nqDVSRfinal} for the momentum distribution outside the Fermi ball. In the high-density limit $ k_{\F} \to \infty $, \cref{eq:nqDVSRfinal} converges to half our result \cref{eq:nqbfinalresult}. In view of this remaining discrepancy we have moreover verified that our result agrees with the formula that can be obtained from \cite[Theorem~1.1]{Chr23PhD} through a formal Feynman--Hellmann argument. Unfortunately we have not been able to pin down the origin of the discrepancy in \cite{DV60}, presumably due to their only outlined adaption of the result of \cite{GB57}, where the latter also do not specify the choice of units or the Hamiltonian used.

\section{Construction of the Trial State}
\label{sec:trialstate}

The definition of the trial state $ \psi_N $ uses second quantization. That means, we extend the $ N $--particle space $ L_{\a}^2(\TTT^{3N}) $ by introducing the fermionic Fock space
\[
\cF := \bigoplus_{n = 0}^\infty L_{\a}^2(\TTT^{3n})\;.
\]
To each momentum mode $ q \in \ZZZ^3 $, we assign the plane wave
\[
f_q \in L^2(\TTT^3)\;, \qquad f_q(x) := (2 \pi)^{-\frac 32} e^{i q \cdot x}
\]
and the respective creation and annihilation operators on Fock space
\[
a_q^* := a^*(f_q)\;, \qquad a_q := a(f_q)
\]
which satisfy the canonical anticommutation relations (CAR)
\begin{equation}
	\{a_q, a_{q'}^*\} = \delta_{q, q'}, \qquad
	\{a_q, a_{q'}\} = \{a_q^*, a_{q'}^*\} = 0 \qquad \text{for all } q, q' \in \ZZZ^3\;.
\label{eq:CAR}
\end{equation}
The number operator on Fock space is defined as
\begin{equation}
	\cN := \sum_{q \in \ZZZ^3} a_q^* a_q
\label{eq:cN}
\end{equation}
and the vacuum vector is $ \Omega := (1, 0, 0, \ldots) \in \cF $, which satisfies $ a_q \Omega = 0$ for all $q \in \ZZZ^3 $.

\paragraph{The trial state} As a trial state $ \psi_N \in L_{\a}^2(\TTT^{3N}) \subset \cF $ for Theorem \ref{thm:main}, we use the state constructed by means of the random phase approximation (in its formulation as bosonization of particle--hole excitations) in \cite[(4.20)]{BNPSS20}, i.\,e.,
\begin{equation}
	\psi_N := \mathfrak{R} T \Omega\;,
\label{eq:psi}
\end{equation}
with a particle--hole transformation $ \mathfrak{R}: \cF \to \cF $ and an almost-bosonic Bogoliubov transformation $ T: \cF \to \cF $ that are both defined below. According to \cite{BNPSS20,BNPSS21,BPSS22}, the state \cref{eq:psi} is energetically close to the ground state, i.\,e., $\langle \psi_N, H_N \psi_N\rangle = E^{\textnormal{HF}}_N + E^{\textnormal{RPA}}_N + \mathcal{O}(N^{-\frac{1}{3}-\frac{1}{27}})$ in agreement with \cref{eq:rpaprecision}.

\paragraph{The particle--hole transformation} The particle--hole transformation is the unitary operator $ \mathfrak{R}: \cF \to \cF $ defined by its action on creation operators
\begin{equation}
	\mathfrak{R}^* a_q^* \mathfrak{R} := \chi(q \in \BFc)\, a^*_q  + \chi(q \in \BF) \, a_q\;, \qquad \forall q \in \Zbb^3
\end{equation}
and its action on the vacuum
\[\mathfrak{R}^*\Omega := \prod_{k_j \in \BF} a^*_{k_j} \Omega \,.\]
The latter product is (up to an irrelevant phase $e^{i\pi}$) uniquely defined and one easily verifies that it is a Slater determinant of plane waves as in \cref{eq:planewaveslater}. We have $\mathfrak{R}^* = \mathfrak{R}$.

\paragraph{Particle--hole pair operators} The key observation \cite{BNPSS20, BNPSS21, BPSS22} motivating the choice of $ T $ is that after the transformation $ \mathfrak{R} $, the Hamiltonian $ H $ becomes almost quadratic in some almost-bosonic operators $ c^*$ and $ c $. For their definition we use a patch decomposition of a shell around the Fermi surface. The requirements for that decomposition are described in the following; an example of such a patch decomposition was given in \cite{BNPSS20}. We divide half of the Fermi surface $ \partial B_{\F} := \{k \in \RRR^3: \lvert k\rvert = \kF\}$ into a number $ M/2 \in \NNN $ of patches $ \tilde{P_\alpha} $, each of surface area $ \sigma(\tilde{P_\alpha}) = 4 \pi k_{\F}^2/M $. The number of patches $ M $ is a parameter that will be chosen as a function of the particle number $N$, subject to the constraint
\begin{equation}
	N^{2\delta} \ll M \ll N^{\frac{2}{3}- 2 \delta}\;, \qquad \textnormal{where } 0 < \delta < \frac{1}{6}\;.
\label{eq:Mdelta}
\end{equation}
We assume that the patches do not degenerate into very long and narrow shapes as $N \to \infty$, or more precisely we assume that always
\begin{equation}
	\diam(\tilde{P}_\alpha)
	\le C N^{\frac{1}{3}} M^{-\frac{1}{2}}\;.
\label{eq:patchreg}
\end{equation}
Inside each $ \tilde{P}_\alpha $, we now choose a slightly smaller patch $ P_\alpha $ such that the distance between two adjacent patches is at least $ 2R $, that is, twice the diameter of the support of $ \hat{V} $. By radially extending $ P_\alpha $, we obtain the final patches $ B_\alpha $ with thickness $ 2R $ (see Figure \ref{fig:patches3d}):
\begin{equation}
	B_\alpha := \Big\{r q \in \RRR^3 : q \in P_\alpha, r \in \Big[ 1 - \frac{R}{k_{\F}}, 1 + \frac{R}{k_{\F}} \Big] \Big\}\;.
\label{eq:Balpha}
\end{equation}
The patches are separated by corridors wider than $ 2R $. To cover also the southern hemisphere, we define the patch $B_{\alpha + \frac M2}$ by applying the reflection $ k \mapsto -k $ to $ B_\alpha $\;.

\begin{figure}
\begin{minipage}{0.45\textwidth}\centering
 \includegraphics[width=4.5cm]{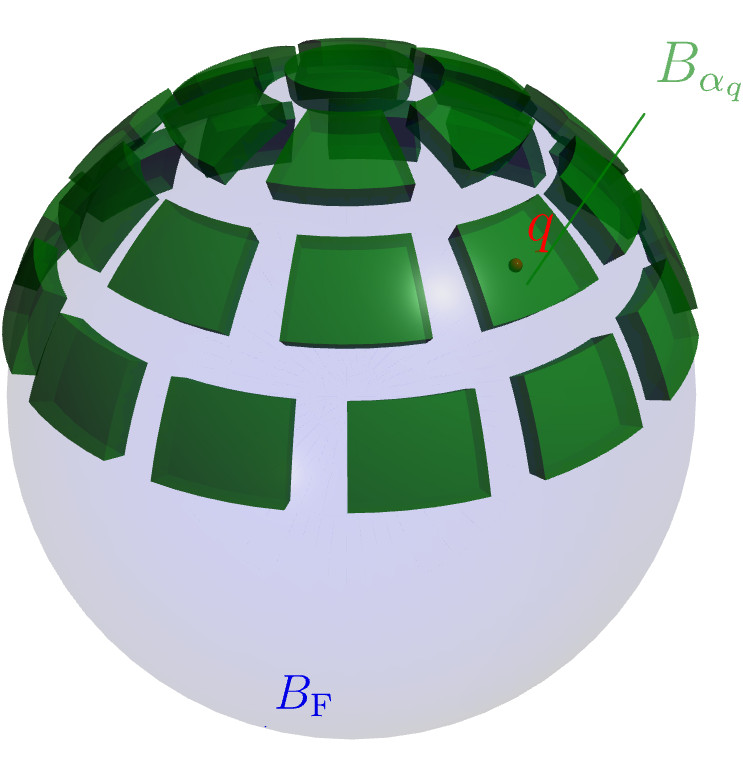}
 \caption{Patches on the Fermi ball in momentum space, with patch $ B_{\alpha_q} $ including $ q $.}
 \label{fig:patches3d}
 \end{minipage}
\hfill
\begin{minipage}{0.5\textwidth}\centering
	\scalebox{0.75}{\begin{tikzpicture}

\draw[thick, blue] (-3,2) arc(90:58:15);
\fill[opacity = .1, blue] (-3,-2.5) -- (-3,2) arc(90:58:15) -- ({-3+15*sin(32)},-2.5);

\filldraw[thick, fill opacity = .1, green!50!black] (-3,2.5) arc(90:80:15.5) -- ++({-sin(10)},{-cos(10)}) arc(80:90:14.5);
\filldraw[thick, fill opacity = .1, green!50!black] ({-3 + 15.5*sin(15)},{-13 + 15.5*cos(15)}) arc(75:60:15.5) -- ++({-sin(30)},{-cos(30)}) arc(60:75:14.5) -- cycle;

\filldraw[thick, blue!50!red, fill opacity = .1] (-1,2) circle (0.5);
\fill [red] (-1,2) circle (0.08) node[anchor = north west]{$ q_1 $};
\filldraw[thick, blue!50!red, dashed, fill opacity = .05] (0.5,1.6) circle (0.5);
\fill [red] (0.5,1.6) circle (0.08) node[anchor = south west]{$ q_2 $};
\filldraw[thick, blue!50!red, dashed, fill opacity = .05] (4.25,0) circle (0.5);
\fill [red] (4.25,0) circle (0.08) node[anchor = north west]{$ q_3 $};

\node[blue] at (4.3,-2) {$ B_{\F} $};
\node[gray] at (4.3,2.7) {$ B_{\F}^c $};
\draw[blue] ({-3 + 15*sin(3)},{-13 + 15*cos(3)}) -- ++(0.2,-0.8) node[anchor = north]{$ \partial B_{\F} $};
\draw[green!50!black] ({-3 + 15.5*sin(4)},{-13 + 15.5*cos(4)}) -- ++(0.5,0.3) node[anchor = west]{$ B_{\alpha_1} $};
\draw[green!50!black] ({-3 + 15.5*sin(20)},{-13 + 15.5*cos(20)}) -- ++(0.5,0.3) node[anchor = west]{$ B_{\alpha_2} $};

\draw[dashed] (-1,2) -- ++(0,-2.2);
\draw[dashed] (-1.5,2) -- ++(0,-2.2);
\draw[<-] (-1.5,0) -- ++(-0.3,0);
\draw[<-] (-1,0) -- ++(0.3,0);
\node at (-1.25,-0.4) {$ R $};

\end{tikzpicture}}
	\captionof{figure}{Close-up of a patch: in \cref{eq:edgeofthepatch}, $ q_1 $ is an included momentum, whereas $ q_2 $ and $ q_3 $ are excluded.}
	\label{fig:edgeofthepatch}
 \end{minipage}
\end{figure}

The center of patch $ B_\alpha $, a vector in $\RRR^3$, will be denoted $ \omega_\alpha \in P_\alpha $, with associated direction vector $ \hat{\omega}_\alpha := \omega_\alpha / |\omega_\alpha| $. For any $ k \in \ZZZ^3 \cap B_R(0) $, we define the index sets
\begin{equation}
\begin{aligned}
	\cI_{k}^+ & := \left\{ \alpha \in \{1, \ldots, M\} : k \cdot \hat{\omega}_\alpha \ge N^{-\delta}\right\}, \\
	\cI_{k}^- & := \left\{ \alpha \in \{1, \ldots, M\} : k \cdot \hat{\omega}_\alpha \le -N^{-\delta}\right\}, \\
	\cI_{k} & := \cI_{k}^+ \cup \cI_{k}^- \;.
\end{aligned}
\end{equation}
Visually speaking, $ \cI_k $ excludes a belt of patches near the equator of the Fermi ball (if the direction of $ k $ is taken as north).

Generally, we assume all momenta $ k, p, h $ to be in $ \ZZZ^3 $ and do not write this condition under summations. Moreover, we adopt the following convention: whenever a momentum is denoted by a lowercase $ p $ (``particle''), then we abbreviate the condition $ p \in B_{\F}^c \cap B_\alpha $ by $ p : \alpha $. Likewise, for a momentum denoted by $ h $ (``hole''), the condition $ h \in B_{\F} \cap B_\alpha $ is abbreviated as $ h : \alpha $. For $ k \in \ZZZ^3 \cap B_R(0) $ and $ \alpha \in \cI_{k}^+ $, we now define the \emph{particle--hole pair creation operator}
\begin{equation}
	b^*_\alpha(k)
	:= \frac{1}{n_{\alpha, k}} \sum_{p, h: \alpha} \delta_{p, h+k} a_p^* a_h^*
	= \frac{1}{n_{\alpha, k}} \sum_{\substack{p: p \in B_{\F}^c \cap B_\alpha \\ p-k \in B_{\F} \cap B_\alpha}} a_p^* a_{p-k}^*
\label{eq:bstar}
\end{equation}
with normalization constant $ n_{\alpha, k} $ defined by
\begin{equation}
	n_{\alpha, k}^2 := \sum_{p, h : \alpha} \delta_{p, h+k} = \sum_{\substack{p: p \in B_{\F}^c \cap B_\alpha \\ p-k \in B_{\F} \cap B_\alpha}} 1	\;.
\label{eq:n}
\end{equation}
Observe that if $\alpha \not \in \cI_k^+$, then $b^*_\alpha(k)$ will usually be an empty sum, in which case we understand it as the zero operator. (Strictly speaking near the equator the sums may still contain a small number of summands. The index set $\cI_k^+$ is defined such that for $\alpha \in \cI_k^+$ the sum \cref{eq:n} contains a number of summands that grows sufficiently fast as $N \to \infty$; see \cref{lem:fn}.) Therefore we introduce the ``northern'' half-space
\[ 
	H^{\nor} := \big\{ k \in \RRR^3 : k_3 > 0 \text{ or } (k_3 = 0 \text{ and } k_2 > 0)
	\text{ or } (k_3 = k_2 = 0 \text{ and } k_1 > 0) \big\}\;,
\]
as well as the half-ball
\begin{equation}
\label{eq:Gammanor}
	\Gamma^{\nor} := H^{\nor} \cap \ZZZ^3 \cap B_R(0) \;,
\end{equation}
and then, for $ k \in \Gamma^{\nor} $, define
\begin{equation}
\label{eq:cstar}
	c^*_\alpha(k) := \begin{cases}
		b^*_\alpha(k) \quad &\text{for } \alpha \in \cI_{k}^+\\
		b^*_\alpha(-k) \quad &\text{for } \alpha \in \cI_{k}^-	\;.
	\end{cases}
\end{equation}
In \cref{lem:approxCCR} we are going to show that these pair operators satisfy approximately the canonical commutation relations (CCR) of bosons
\[
[c_\alpha(k), c_\beta(\ell)] = 0\;, \qquad [c_\alpha(k), c^*_\beta(\ell)] \approx \delta_{\alpha, \beta} \delta_{k, \ell} \;.
\]

\paragraph{The almost-bosonic Bogoliubov transformation}
The almost-bosonic Bogoliubov transformation is the unitary operator on fermionic Fock space $ T: \cF \to \cF $ chosen such that it would diagonalize an effective quadratic Hamiltonian
\[
h_{\eff} = \sum_{k \in \Gamma^{\nor}} h_{\eff}(k)
\]
where
\[
\begin{aligned}
	h_{\eff}(k) = \! \sum_{\alpha, \beta \in \cI_{k}} \!\left( (D(k) + W(k))_{\alpha, \beta} c^*_\alpha(k) c_\beta(k) + \frac{1}{2} \widetilde{W}(k)_{\alpha, \beta} \big( c^*_\alpha(k) c^*_\beta(k) + c_\beta(k) c_\alpha(k) \big) \right),
\end{aligned}
\]
\emph{if} the $ c^* $ and $ c $ operators would exactly satisfy the CCR of bosons (see \cite[(1.47)]{BNPSS21}). Here, $ D(k) $, $W(k)$, and $\widetilde{W}(k)$ are symmetric matrices in $\RRR^{|\cI_{k}| \times |\cI_{k}|} $ given in block form
\[
	D(k) = \begin{pmatrix}
		d(k) & 0\\
		0 & d(k)\\
	\end{pmatrix}\;, \qquad
	W(k) = \begin{pmatrix}
		b(k) & 0\\
		0 & b(k)\\
	\end{pmatrix}\;, \qquad
	\widetilde{W}(k) = \begin{pmatrix}
		0 & b(k)\\
		b(k) & 0\\
	\end{pmatrix}\;,
\]
with the smaller matrices $ d(k)$ and $b(k)$ in $\RRR^{|\cI_{k}^+| \times |\cI_{k}^+|} $ given by
\begin{equation}
\label{eq:db}
	d(k) := \sum_{\alpha \in \cI_{k}^+} |\hat{k} \cdot \hat{\omega}_\alpha| \; |\alpha \rangle \langle \alpha |\;, \qquad
	b(k) := \sum_{\alpha, \beta \in \cI_{k}^+} \frac{\hat{V}_k}{2 \hbar \kappa N |k|} n_{\alpha, k} n_{\beta, k} \; |\alpha \rangle \langle \beta |\;,
\end{equation}
where $ | \alpha \rangle $ is the $ \alpha $--th canonical basis vector of $ \RRR^{|\cI_{k}^+|} $ (and $ \hat{k} := k/|k| $). We are going to define $T$ by the explicit formula which was given in \cite{BNPSS20,BNPSS21,BPSS22} as
\begin{equation}\label{eq:T}
	T := e^{-S}\, \qquad S := -\frac{1}{2} \sum_{k \in \Gamma^{\nor}} \sum_{\alpha, \beta \in \cI_{k}} K(k)_{\alpha, \beta} \big( c^*_\alpha(k) c^*_\beta(k) - \mathrm{h.c.} \big) \;.
\end{equation}
The operator $S$ is anti-selfadjoint (i.\,e., $ S^* = -S $) and the matrix $K(k) \in \RRR^{|\cI_{k}| \times |\cI_{k}|} $ is defined via
\begin{align*}
	K(k) &:= \log |S_1(k)^T|\;, \tagg{eq:K}\\
	S_1(k) &:= (D(k) + W(k) - \widetilde{W}(k))^{\frac{1}{2}} E(k)^{-\frac{1}{2}}\;, \tagg{eq:S1} \\
	E(k) &:= \left( (D(k) + W(k) - \widetilde{W}(k))^{\frac{1}{2}} (D(k) + W(k) + \widetilde{W}(k)) (D(k) + W(k) - \widetilde{W}(k))^{\frac{1}{2}} \right)^{\frac{1}{2}}\!.
\end{align*}
This concludes the construction of the trial state.

\subsection{Optimality of the Main Result}
The upper bound \cref{eq:main} in our main result is sharp if the momentum $ q $, satisfying \eqref{eq:cQepsilon}, is located in the interior of a patch. The precise statement is as follows, proven in \cref{sec:proofmain}.

\begin{proposition}[Optimality]
\label{prop:optimality}
Under the assumptions of Theorem \ref{thm:main}, whenever $ q $ is in the interior of a patch $ B_{\alpha_q} $ in the sense that
\begin{equation}
\label{eq:edgeofthepatch}
\begin{aligned}
	 \text{for } q \in B_{\F}^c && \text{we have } B_R(q) \cap B_{\F} &\subset B_{\alpha_q}\;,\\
		\text{for } q \in B_{\F} && \text{we have  } B_R(q) \cap B_{\F}^c &\subset B_{\alpha_q}\;,
\end{aligned}
\end{equation}
(this is represented in \cref{fig:edgeofthepatch}) the upper bound \cref{eq:main} becomes an equality:
\begin{equation}
\label{eq:optimality}
	n_q
	= N^{-\frac{2}{3}} \sum_{k \in \cC^q\cap \ZZZ^3}  \frac{\hat{V}_k}{2 \kappa |k|} \frac{1}{\pi} \int_0^\infty \frac{(\mu^2 - \lambda_{q,k}^2)(\mu^2 + \lambda_{q,k}^2)^{-2}}{1 + Q_k^{(0)}(\mu)} \; \d \mu + \cE\;,
\end{equation}
where $\mathcal{E}$ is bounded as in \cref{eq:mainerror}, and the set of allowed momentum transfers is
\begin{equation}
\label{eq:cCq}
	\cC^q := \begin{cases}
		B_R(0) \cap H^{\nor} \cap \left( (B_{\F} - q) \cup (\BF + q) \right) \quad &\text{for } q \in B_{\F}^c\\
		B_R(0) \cap H^{\nor} \cap \left( (B_{\F}^c - q) \cup (\BFc + q) \right) \quad &\text{for } q \in B_{\F} \;.
	\end{cases}
\end{equation}
\end{proposition}
The set $ \cC^q $ is obtained from $ \cD^q $ in \eqref{eq:abbreviations1} by reflecting all $ k $ in the lower to the upper half-space (i.\,e., $ k \mapsto -k $ whenever $ k \notin H^{\nor} $).

\section{Strategy of Proof of the Main Theorem}
\label{sec:strategyofproof}

The statement \cref{eq:mainenergy} in \cref{thm:main}, that $ \psi_N $ replicates the ground state energy, was proven in \cite{BNPSS20,BNPSS21,BPSS22}. So we need to establish \cref{eq:main} for the momentum distribution. Computing the expectation value of $n_q$ is non-trivial for two reasons: first, $n_q$ is a fermionic observable that does not have a bosonic representation in terms of the $c^*_\alpha(k)$- and $c_\alpha(k)$-operators; and second, because $n_q$ is typically of order $N^{-2/3}$, i.\,e., extremely small, it necessitates very precise bounds on the errors. To obtain these sharp error bounds, we employ a bootstrap which is novel to the bosonization context. This should be compared to the earlier works on bosonization, where error terms could be controlled by integrated quantities like the expectation value of $\Ncal$ (so the sum of $n_q$ over all momenta $q$), which was estimated much less precisely as being overall of order one.

\paragraph{The Bootstrap} We aim to show that $ n_q $ can be approximated by a bosonized $ n_q^{(\b)} $. Whenever $ q $ is not inside some patch $ B_{\alpha_q} $, $n_q$ vanishes, so we will set $ n_q^{(\b)} = 0 $ in that case. For all other $ q $, we will see that $ n_q $ amounts to a sum over contributions depending on the momentum exchange $ k $.
As explained in \cref{lem:g}, define
\begin{equation}
\label{eq:cCqtilde}
\begin{aligned}
	\tilde{\cC}^q := \begin{cases}
		B_R(0)
		\cap H^{\nor}
		\cap \left( \big( (B_{\F} \cap B_{\alpha_q}) - q \big) \cup \big( - (B_{\F} \cap B_{\alpha_q}) + q \big) \right)
		&\quad \text{for } q \in B_{\F}^c \\
		\quad \cap \{k : |k \cdot \hat{\omega}_{\alpha_q}| \ge N^{-\delta}\} \\
		B_R(0)
		\cap H^{\nor}
		\cap \left( \big( (B_{\F}^c \cap B_{\alpha_q}) - q \big) \cup \big( - (B_{\F}^c \cap B_{\alpha_q}) + q \big) \right)
		&\quad \text{for } q \in B_{\F} \\
		\quad \cap \{k : |k \cdot \hat{\omega}_{\alpha_q}| \ge N^{-\delta}\} \;.\\
	\end{cases}
\end{aligned}
\end{equation}
Note that $ \tilde{\cC}^q $ agrees with $ \cC^q $, defined in \eqref{eq:cCq}, up to the exclusion of $k$ such that $ |k \cdot \hat{\omega}_{\alpha_q}| < N^{-\delta} $ and the restriction to $ B_{\alpha_q} $.
In the exactly bosonic approximation the momentum distribution can be computed explicitly to be
\begin{equation}
\label{eq:nqb}
	n_q^{(\b)}
	:= \frac{1}{2} \sum_{k \in \tilde{\cC}^q \cap \ZZZ^3} \frac{1}{n_{\alpha_q, k}^2} \big( \cosh(2 K(k)) - 1 \big)_{\alpha_q, \alpha_q}\;.
\end{equation}
The rigorous justification that $ n_q \approx n_q^{(\b)} $ is given by the next theorem.

\begin{theorem}[Bosonized Momentum Distribution]
\label{thm:main2}
Assume that $\hat{V}$ is non-negative and compactly supported. Then
\begin{equation}
\label{eq:main2}
	|n_q - n_q^{(\b)}|
	\le C N^{-1 + 2 \delta}\;.
\end{equation}
\end{theorem}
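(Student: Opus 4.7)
The plan is to write $n_q$ as a squared norm, apply Duhamel along the generator $S$, identify the principal term with $n_q^{(\b)}$ using the approximate CCR of the pair operators, and bound the error using the $\cN$-control on the transformed vacuum known from \cite{BNPSS20,BNPSS21,BPSS22}.

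First, since $R^*a_q^*a_qR = a_q^*a_q$ for $q \in B_\F^c$ and $R^*a_q^*a_qR = 1 - a_q^*a_q$ for $q \in B_\F$, combined with $\|T\Omega\| = 1$, \cref{eq:nqEbar} reduces in both cases to
\[
n_q \;=\; \|a_qT\Omega\|^2 \;=\; \|(T^*a_qT)\Omega\|^2\,.
\]
Since $a_q\Omega = 0$, Duhamel gives $(T^*a_qT)\Omega = \int_0^1 T_s^*[S, a_q]T_s\Omega\, ds$ with $T_s := e^{-sS}$. The inner commutator $[S, a_q]$ is computed from $[c_\beta(k), a_q] = 0$ (by CAR) and the identity of \cref{lem:g}, $[c^*_\alpha(k), a_q] = -n_{\alpha_q,k}^{-1} a^*_{q-k}$ for $\alpha = \alpha_q \in \cI_k^+$ with $q-k \in B_\F \cap B_{\alpha_q}$ (and the $\cI_k^-$ analogue). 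The support conditions force $k \in \tilde{\cC}^q$, and one obtains
\[
[S, a_q] \;=\; \sum_{k \in \tilde{\cC}^q}\sum_{\beta \in \cI_k}\frac{K(k)_{\alpha_q,\beta}}{n_{\alpha_q,k}}\, a^*_{q\mp k}\, c^*_\beta(k)\,.
\]

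To evaluate the leading bosonic contribution I would replace the $c$-operators by exact bosons, so that $T_s^*c^*_\beta(k)T_s = \sum_\gamma[\cosh(sK(k))]_{\beta\gamma}\,c^*_\gamma(k) + \sum_\gamma[\sinh(sK(k))]_{\beta\gamma}\,c_\gamma(k)$ and $T_s$ commutes with $a^*_{q\mp k}$. Using $c_\gamma(k)\Omega = 0$ and the matrix identity $\int_0^1 K(k)\cosh(sK(k))\,ds = \sinh K(k)$, this gives the main-term decomposition
\[
(T^*a_qT)\Omega \;=\; \sum_{k \in \tilde{\cC}^q}\sum_\gamma\frac{(\sinh K(k))_{\alpha_q,\gamma}}{n_{\alpha_q,k}}\, a^*_{q\mp k}\, c^*_\gamma(k)\Omega + \cR_q\,,
\]
where $\cR_q$ absorbs all deviations from the exact-bosonic ansatz. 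The near-orthogonality $\langle a^*_{q\mp k}c^*_\beta(k)\Omega,\, a^*_{q\mp k'}c^*_{\beta'}(k')\Omega\rangle \approx \delta_{k,k'}\delta_{\beta,\beta'}$ then identifies the squared norm of the main term as
\[
\sum_{k \in \tilde{\cC}^q}\frac{(\sinh^2 K(k))_{\alpha_q,\alpha_q}}{n_{\alpha_q,k}^2} \;=\; n_q^{(\b)}\,,
\]
using $\sinh^2 K = \tfrac{1}{2}(\cosh 2K - 1)$.

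The remainder $\cR_q$ has three sources: (a) the CCR defect of \cref{lem:approxCCR}, producing corrections proportional to the pair-number operator $\cN$; (b) cross terms arising from the non-trivial action of $T_s$ on $a^*_{q\mp k}$ (rather than letting it commute); and (c) the boundary exclusion $|k\cdot\hat{\omega}_{\alpha_q}| < N^{-\delta}$ built into $\tilde{\cC}^q$. The hard part is controlling the iteration of Duhamel: each additional commutator with $S$ formally carries $\|S\| \sim M^{1/2}$, so a geometric series in operator norm would diverge. The resolution, along the lines of \cite{BNPSS20,BNPSS21,BPSS22}, is to keep bounds in normally-ordered form and to use that the transformed state $T_s\Omega$ has $\cO(M)$ expected pair-excitations, which absorbs these factors. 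Combining $\|K(k)\| \le C$ uniformly in $k$, $|\tilde{\cC}^q| = \cO(1)$, $n_{\alpha_q,k}^2 \gtrsim N^{2/3}/M$, and $M \ll N^{2/3-2\delta}$ from \cref{eq:Mdelta}, the total error collapses to $\cO(N^{-1+2\delta})$, giving \cref{eq:main2}.
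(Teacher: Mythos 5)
Your route is genuinely different from the paper's: you compute $n_q = \lVert a_q T\Omega\rVert^2$ and apply a single Duhamel step to $a_q$ itself, whereas the paper expands $\langle \Omega, e^{S}a_q^*a_q e^{-S}\Omega\rangle$ as a Lie--Schwinger series, defines an exactly bosonic surrogate $\ad^n_{q,(\b)}$ at every order, and controls the order-by-order defects. Your algebra up to the main-term identification is essentially correct (the formula for $[S,a_q]$, the hyperbolic rotation of the $\tilde c^\sharp$ operators, $\int_0^1 K\cosh(sK)\,\d s = \sinh K$, and $\sinh^2K=\tfrac12(\cosh 2K-1)$ all check out), and the norm formulation has a real structural advantage: the remainder enters $n_q$ only through $2\Re\langle v,\cR_q\rangle+\lVert\cR_q\rVert^2$ with $\lVert v\rVert\sim (n_q^{(\b)})^{1/2}\sim N^{-1/3+\delta/2}$, so a bound $\lVert\cR_q\rVert\lesssim N^{-2/3+\delta}$ would already yield \cref{eq:main2}. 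In the paper's expectation-value formulation the defects contribute additively, which is why the paper cannot get below $N^{-2/3+3\delta/2}$ in one pass and must run the bootstrap iteration of \cref{lem:bootstrap} (improving $\langle\xi_t,a_{q'}^*a_{q'}\xi_t\rangle\le N^{-r}$ from $r=0$ up to $r=\tfrac23-\delta$) to reach $N^{-1+2\delta}$.

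The genuine gap is that the entire quantitative control of $\cR_q$ — which is where all the work of Sections 6--8 of the paper lives — is asserted rather than carried out. Your final sentence lists bounds ($\lVert K(k)\rVert\le C$, $|\tilde{\cC}^q|=\cO(1)$, $n_{\alpha_q,k}^2\gtrsim N^{2/3}/M$, $M\ll N^{2/3-2\delta}$) whose combination does not produce $N^{-1+2\delta}$ or indeed any specific power; what is actually needed is a convergent all-orders Duhamel expansion with per-order bounds of the type of \cref{lem:cE2estimate}, using the $\ell^2$-row bound $\Vert K(k)_{\alpha,\cdot}\Vert_2\le C\hat V_k M^{-1/2}$ (which is what compensates the $M^{1/2}$ growth of $\Vert S\Vert$), the defect bounds of \cref{lem:cEbound,lem:cccEbound}, and — for the cross terms (b), where $[S,a^*_{q\mp k}]$ produces operators $a_{q'}$ on modes $q'$ near the Fermi surface — a bound on $\Vert a_{q'}\xi_t\Vert$. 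Moreover, your stated mechanism for taming the iteration is wrong: $T_s\Omega$ does \emph{not} have $\cO(M)$ expected pair excitations; \cref{lem:gronwall} shows $\langle\xi_t,(\cN+1)^m\xi_t\rangle=\cO(1)$, and it is this $\cO(1)$ bound together with the $M^{-1/2}$ row norms of $K(k)$ that makes the series converge. Until you supply the uniform-in-order estimate $\lVert\cR_q\rVert\le CN^{-2/3+\delta}$ (or demonstrate that your scheme still needs, and then supplies, a bootstrap on the occupation numbers), the proof is a plausible strategy rather than a proof.
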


The proof of \cref{thm:main2} is given in \cref{sec:proofmain2}, but let us explain the interplay of bosonization and the bootstrap. We expand $ n_q $ in the trial state $\psi_N = \mathfrak{R} e^{-S}\Omega$ as
\begin{equation}
\label{eq:BCH}
\begin{aligned}
	n_q
	= \langle \Omega, e^S a_q^* a_q e^{-S} \Omega \rangle = \sum_{n = 0}^\infty \frac{1}{n!} \langle \Omega, \ad^n_S(a_q^* a_q) \Omega \rangle \;,
\end{aligned}
\end{equation}
where $ \ad^n_A(B) := [A, \ldots[A, [A,B]] \ldots] $ denotes the $n$--fold commutator, with $S$ as defined in \cref{eq:T}.
%
In the exactly bosonic approximation we assume that the operators $ c^*$, $c $ satisfy canonical commutation relations.
Then for $ n \ge 1 $, the form of $ \ad^n_{q, (\b)} $ alternates between $\ad^n_{q, (\b)} \sim c^* c^* - c c$ for $n \text{ odd}$ and $\ad^n_{q, (\b)} \sim c^* c + \textnormal{const}$ for $n \text{ even}$. The only non-vanishing contribution to the vacuum expectation value is for even $n$ due to the const-terms, which sum up to a $ \cosh $--series and render $ n_q^{(\b)} $ in \cref{eq:nqb}. The difficulty in the proof of \cref{thm:main2} is proving that the deviation from exactly bosonic canonical commutator relations (the term $\Ecal$ in \cref{eq:cgcommutator}) has small effect. Here the bootstrap is crucial: by Duhamel's formula, we express $ |n_q - n_q^{(\b)}| $ in terms of expectation values in the parametrized states
\begin{equation}
\label{eq:xit}
	\xi_t := e^{-t S} \Omega\;, \qquad t \in [-1, 1]\;.
\end{equation}
Employing the lemmas of \cref{sec:bosonizationerror,sec:bosonizedterms} we bound these expectation values using $ \langle \xi_t, a_q^* a_q \xi_t \rangle $. The bootstrap is then based on \cref{lem:bootstrap}. Initially we know that $ 0 \le \langle \xi_t, a_q^* a_q \xi_t \rangle \le 1 $, which we write with $r :=0$ as
\begin{equation}
\label{eq:bootstrapbound1}
	\langle \xi_t, a_q^* a_q \xi_t \rangle
	= \cO(N^{-r}) \;.
\end{equation}
Using this bound within Duhamel's formula, Lemma \ref{lem:bootstrap} provides us with
\begin{equation}
\label{eq:bootstrapbound2}
	|n_q - n_q^{(\b)}|
	= |\langle \xi_1, a_q^* a_q \xi_1 \rangle - n_q^{(\b)}|
	= \cO(N^{-r'})
\end{equation}
for $ r' = \frac 23 - \frac 32 \delta + \frac r2 $. The same holds if the trial state $ \mathfrak{R} e^{-S} \Omega $ is replaced with $ \mathfrak{R} e^{-t S} \Omega $. Together with the observation \cref{eq:nqbscaling} that $ n_q^{(\b)} = \cO(N^{- \frac{2}{3}+ \delta}) $ (and the same if the trial state is replaced by its $t$--dependent version), we obtain that \eqref{eq:bootstrapbound1} is valid for $ r = \frac 23 - \frac 32 \delta $. Plugging this again into \eqref{eq:bootstrapbound2} yields $ r = \frac 23 - \delta $, which is the optimal exponent. Then $ r' $ is the claimed error exponent from \cref{thm:main2}.

\paragraph{Post-Processing} We evaluate the hyperbolic cosine of the matrix $K(k)$ in \cref{eq:nqb} by functional calculus, which brings us close to the final form \cref{eq:main}. The computation is given in \cref{sec:proofmain3}, the result is the next proposition.

\begin{proposition}
\label{prop:main3}
If $ q \in B_{\alpha_q} $ for some $ 1 \le \alpha_q \le M $, then
\begin{equation}
\label{eq:main3}
	n_q^{(\b)}
	= \sum_{k \in \tilde{\cC}^q\cap \ZZZ^3} \frac{1}{\pi} \frac{\hat{V}_k}{2 \hbar \kappa N |k|} \int_0^\infty \frac{(\mu^2 - \lambda_{\alpha_q,k}^2)(\mu^2 + \lambda_{\alpha_q,k}^2)^{-2}}{1 + Q_k(\mu)} \; \d \mu
\end{equation}
with
\begin{equation}
\label{eq:abbreviations2}
	\lambda_{\alpha,k} := |\hat{k} \cdot \hat{\omega}_\alpha|\;, \qquad
	Q_k(\mu) := \frac{\hat{V}_k}{\hbar \kappa N |k|} \sum_{\alpha \in \cI_k^+} n_{\alpha, k}^2 (\mu^2 + \lambda_{\alpha,k}^2)^{-1} \lambda_{\alpha,k}\;.
\end{equation}
\end{proposition}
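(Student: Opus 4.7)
The plan is to evaluate $(\cosh(2K(k)) - 1)_{\alpha_q,\alpha_q}$ by combining three ingredients: the block structure arising from $\cI_k = \cI_k^+ \sqcup \cI_k^-$, the rank-one nature of $b(k)$, and integral representations for operator square roots. Suppressing the $k$-dependence and assuming $\alpha_q \in \cI_k^+$ (the case $\alpha_q \in \cI_k^-$ is entirely analogous), I first observe that the orthogonal rotation $O := \frac{1}{\sqrt{2}}\begin{pmatrix} I & I \\ I & -I \end{pmatrix}$ satisfies $O^2 = I$ and simultaneously block-diagonalizes $A := D+W-\widetilde W$ and $B := D+W+\widetilde W$ into $\diag(d, d+2b)$ and $\diag(d+2b, d)$. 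Consequently $E = (A^{1/2}BA^{1/2})^{1/2}$ and $e^{\pm 2K} = (A^{1/2}E^{-1}A^{1/2})^{\pm 1}$ are block-diagonal, with blocks built from $M := d^{1/2}(d+2b)d^{1/2}$ and $N := (d+2b)^{1/2}d(d+2b)^{1/2}$. Since $|\alpha_q\rangle$ corresponds to $\frac{1}{\sqrt{2}}(e_{\alpha_q}, e_{\alpha_q})^T$ in the rotated basis, one obtains $(\cosh(2K))_{\alpha_q,\alpha_q} = \frac{1}{4}(T_1 + T_2 + T_3 + T_4)$, with $T_1 := \langle\alpha_q|d^{1/2}M^{-1/2}d^{1/2}|\alpha_q\rangle$, $T_2 := \langle\alpha_q|d^{-1/2}M^{1/2}d^{-1/2}|\alpha_q\rangle$, and $T_3, T_4$ the analogous expressions with $d \leftrightarrow d+2b$ and $M \leftrightarrow N$.

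Next I reduce to only two matrix elements via the singular value decomposition $d^{1/2}(d+2b)^{1/2} = U\Sigma V^T$. From $M^{\pm 1/2} = U\Sigma^{\pm 1}U^T$, $N^{\pm 1/2} = V\Sigma^{\pm 1}V^T$ and the identity $U^T d^{1/2} = \Sigma V^T(d+2b)^{-1/2}$ (obtained by isolating $U^T P V = \Sigma$ and multiplying by $(d+2b)^{-1/2}$ on the right), one-line computations yield the symmetries $T_1 = T_4$ and $T_2 = T_3$, so that $(\cosh(2K))_{\alpha_q,\alpha_q} = \frac{1}{2}(T_1 + T_2)$.

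For the final computation I employ the operator integrals $M^{-1/2} = \frac{2}{\pi}\int_0^\infty (M+\mu^2)^{-1}\,\d\mu$ and $M^{1/2} = \frac{2}{\pi}\int_0^\infty M(M+\mu^2)^{-1}\,\d\mu$, together with the Sherman--Morrison formula applied to $M + \mu^2 = \mu^2 + d^2 + 2c|d^{1/2}n\rangle\langle d^{1/2}n|$, where $c := \hat V_k/(2\hbar\kappa N|k|)$ and $|n\rangle := \sum_{\alpha \in \cI_k^+} n_{\alpha,k}|\alpha\rangle$. This yields
\[
(M+\mu^2)^{-1} = (\mu^2+d^2)^{-1} - \frac{2c\,(\mu^2+d^2)^{-1}d^{1/2}|n\rangle\langle n|d^{1/2}(\mu^2+d^2)^{-1}}{1 + Q_k(\mu)},
\]
where the denominator equals $1 + Q_k(\mu)$ because $2c\langle n|d(\mu^2+d^2)^{-1}|n\rangle = Q_k(\mu)$. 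Using that $|\alpha_q\rangle$ is an eigenvector of the diagonal matrix $d$ with eigenvalue $\lambda := \lambda_{\alpha_q,k}$, taking matrix elements, and performing the $\mu$-integrations, I obtain
\[
T_1 + T_2 = 2 + \frac{4 c\, n_{\alpha_q,k}^2}{\pi}\int_0^\infty \frac{\mu^2 - \lambda^2}{(\mu^2+\lambda^2)^2\,(1+Q_k(\mu))}\,\d\mu.
\]
Substitution into \cref{eq:nqb} gives \cref{eq:main3}: the constant $2$ cancels with $-1$ from $\cosh(2K)-1$ after the factor $1/2$; the prefactor $n_{\alpha_q,k}^2$ cancels against $1/n_{\alpha_q,k}^2$ in \cref{eq:nqb}; and $2c$ combined with the remaining factor of $1/2$ produces $\frac{1}{\pi}\cdot\frac{\hat V_k}{2\hbar\kappa N|k|}$. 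The sum is automatically restricted to $k \in \tilde{\cC}^q$ because $c^*_{\alpha_q}(k)$ vanishes outside this set.

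The hard part is the SVD-based symmetry $T_1 = T_4$ and $T_2 = T_3$. Without it, one would have to invert operators of the form $(d+2b)^{\pm 1/2}N^{\pm 1/2}(d+2b)^{\pm 1/2}$ via a second Sherman--Morrison step on $d+2b$, leading to substantially more intricate algebra and more delicate bookkeeping of $\mu$-integrals; the symmetry collapses the four terms into two and makes the cancellation between the divergent ``$\int\!\lambda^2/(\mu^2+\lambda^2)$'' pieces of $T_1$ and ``$\int\!\mu^2/(\mu^2+\lambda^2)$'' pieces of $T_2$ manifest in the numerator $\mu^2-\lambda^2$.
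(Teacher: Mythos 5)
Your proposal is correct, and its skeleton coincides with the paper's: the same rotation $\frac{1}{\sqrt 2}\binom{\;\1\;\;\;\1\;}{\;\1\;-\1\;}$ block-diagonalizes $|S_1^T|^{\pm 2}$, the same integral representations of $A^{\pm 1/2}$ are used, and the same single Sherman--Morrison step on $M+\mu^2 = \mu^2+d^2+2g\,|d^{1/2}n\rangle\langle d^{1/2}n|$ produces the denominator $1+Q_k(\mu)$. The genuine difference is how the four diagonal matrix elements are reduced to two. The paper evaluates all four terms separately — the term $\textnormal{(B)}$ requires applying Sherman--Morrison twice (once for $(d+2b)^{-1}$ and once for $(d+\mu^2(d+2b)^{-1})^{-1}$), and $\textnormal{(D)}$ requires the manipulation $ABA(ABA+\mu^2)^{-1}=A(A^2+\mu^2 B^{-1})^{-1}A$ — and only observes a posteriori that $\textnormal{(D)}=\textnormal{(A)}$ and $\textnormal{(C)}=\textnormal{(B)}$. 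You instead establish these equalities a priori as operator identities, $d^{1/2}M^{-1/2}d^{1/2}=(d+2b)^{-1/2}N^{1/2}(d+2b)^{-1/2}$ and $d^{-1/2}M^{1/2}d^{-1/2}=(d+2b)^{1/2}N^{-1/2}(d+2b)^{1/2}$, from the singular value decomposition of $d^{1/2}(d+2b)^{1/2}$ (using $M=PP^T$, $N=P^TP$ and $U^Td^{1/2}=\Sigma V^T(d+2b)^{-1/2}$); this halves the computation and avoids the double Sherman--Morrison entirely. The remaining evaluation of $T_1$ and $T_2$, the $\mu$-integrations, and the bookkeeping of constants against \cref{eq:nqb} all check out, including the cancellation $\frac12(T_1+T_2)-1$ and the factor $n_{\alpha_q,k}^{-2}$. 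The only cosmetic caveat is that your identification of $|\alpha_q\rangle$ with $\frac{1}{\sqrt2}(e_{\alpha_q},e_{\alpha_q})^T$ picks up a sign in the second component when $\alpha_q\in\cI_k^-$, which is harmless for diagonal elements of block-diagonal matrices, as you note.
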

Finally we approximate the sum within $ Q_k(\mu) $ by a surface integral over the half--sphere, which renders $ Q_k(\mu) \approx Q^{(0)}_k(\mu) $. We arrive at a formula resembling \cref{eq:main}, but with $ \tilde{\cC}^q $ instead of $ \cC^q $. Eq.~\cref{eq:main} then follows by bounding the contributions from $ k \in (\tilde{\cC}^q \setminus \cC^q) \cap \ZZZ^3 $. These computations are done in \cref{sec:proofmain}.

\smallskip

\section{Generalized Pair Operators}
\label{subsec:generalizedpair}

Recall the definition \cref{eq:cstar} of the bosonized pair creation operator $ c^*_\alpha(k) $. It will be convenient to use more general pair operators, similar to the weighted pair operators from \cite[Lemma~5.3]{BPSS22}. For a function $ g: \BFc \times \BF \to \CCC $, we define
\begin{equation}
	c^*(g) := \sum_{\substack{p \in \BFc\\h \in \BF}} g(p, h) a_p^* a_h^*\;, \qquad
	c(g) := \sum_{\substack{p \in \BFc\\h \in \BF}} \overline{g(p, h)} a_h a_p\;.
\label{eq:cgstar}
\end{equation}
We may then identify
\begin{equation}
	c^*_\alpha(k) = c^*(d_{\alpha, k})\;, \quad \text{where} \quad
	d_{\alpha, k}(p, h) := \begin{cases}
		\displaystyle \delta_{p, h+k} \frac{1}{n_{\alpha, k}} \chi(p, h : \alpha) \quad &\text{if } \alpha \in \cI_{k}^+\\
		\displaystyle \delta_{p, h-k} \frac{1}{n_{\alpha, k}} \chi(p, h : \alpha) \quad &\text{if } \alpha \in \cI_{k}^-
	\end{cases}
\label{eq:cstarabbreviation}
\end{equation}
and
\[
 \chi(p, h : \alpha) := \chi_{B_{\F}^c \cap B_\alpha}(p) \chi_{B_{\F} \cap B_\alpha}(h) \;.
\]
In the following we adopt the shorthand notation
\begin{equation}
\label{eq:pmk}
	\pm k := \begin{cases}
		+k \quad \text{if } \alpha \in \cI_{k}^+\\
		-k \quad \text{if } \alpha \in \cI_{k}^-
	\end{cases}
	, \qquad
	\mp k := \begin{cases}
		-k \quad \text{if } \alpha \in \cI_{k}^+\\
		+k \quad \text{if } \alpha \in \cI_{k}^-
	\end{cases} .
\end{equation}
So
\begin{equation}
	d_{\alpha, k}(p, h) = \delta_{p, h \pm k} \frac{1}{n_{\alpha, k}} \chi(p, h : \alpha)\;.
\label{eq:d}
\end{equation}

The generalized pair operators satisfy the following commutation relations.
\begin{lemma}[Generalized approximate CCR]
\label{lem:cgcommutator}
Consider $ g, \tilde{g}: \BFc \times \BF \to \CCC $. Then
\begin{equation}
\label{eq:cgcommutator}
	\left[c(g), c(\tilde{g})\right] = \left[c^*(g), c^*(\tilde{g})\right] = 0 \;, \qquad
	\left[c(g), c^*(\tilde{g})\right] = \langle g, \tilde{g} \rangle + \cE(g, \tilde{g}) \;,
\end{equation}
where $ \langle \cdot, \cdot \rangle $ is the inner product on $ \ell^2(\BFc \times \BF) $, and where
\begin{equation}
\label{eq:cEgg}
	\cE(g, \tilde{g})
	:= - \sum_{\substack{p \in \BFc\\h_1, h_2 \in \BF}} \overline{g(p, h_1)} \tilde{g}(p, h_2) a_{h_2}^* a_{h_1}
	- \sum_{\substack{p_1, p_2 \in \BFc\\h \in \BF}} \overline{g(p_1, h)} \tilde{g}(p_2, h) a_{p_2}^* a_{p_1}\;.
\end{equation}
\end{lemma}
\begin{proof}
Direct computation using the CAR \cref{eq:CAR}.
\end{proof}

The generalized pair operators arise from the first commutator $ \ad^1_S(a_q^* a_q) = [S, a_q^* a_q] $.
\begin{lemma}[Occupation number of a single mode in $c^*_\alpha(k)$]
\label{lem:g}
Let $ k \in \Gamma^{\nor} $, $ \alpha \in \cI_{k} $ and $ q \in B_{\alpha_q} $ for some $ 1 \le \alpha_q \le M $. Then, we have
\[ \left[c^*_\alpha(k), a_q^* a_q\right] = \begin{cases}
	0 					\quad &\text{if } \alpha \neq \alpha_q\\
	- c^*(g_{q, k}) 	\quad &\text{if } \alpha = \alpha_q\\
	\end{cases}
\]
with
\begin{equation}
\label{eq:g}
	g_{q, k}(p, h) := \delta_{p, h \pm k} \frac{1}{n_{\alpha_q, k}} \chi(p, h : \alpha_q) (\delta_{h, q} + \delta_{p, q})\;.
\end{equation}
In particular, $ \left[ c^*_\alpha(k), a_q^* a_q \right] = 0 $ whenever $ \alpha_q \notin \cI_{k} $.
\end{lemma}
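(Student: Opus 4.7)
The lemma follows from a direct CAR computation combined with the geometric fact that distinct patches $B_\alpha$, $B_{\alpha_q}$ are disjoint. First, using $\{a_q, a_p^*\} = \delta_{p,q}$, $\{a_q, a_h^*\} = \delta_{h,q}$, and $\{a_p^*, a_h^*\} = 0$, a one-line calculation gives
\begin{equation*}
[a_q^* a_q,\, a_p^* a_h^*] = (\delta_{p,q} + \delta_{h,q})\, a_p^* a_h^*,
\end{equation*}
so that the number operator $a_q^* a_q$ acts on each pair $a_p^* a_h^*$ simply as a multiplicative weight $\delta_{p,q}+\delta_{h,q}$.

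Using the identification $c^*_\alpha(k) = c^*(d_{\alpha,k})$ from \eqref{eq:cstarabbreviation}--\eqref{eq:d} and bilinearity of the commutator, I would then obtain
\begin{equation*}
[c^*_\alpha(k),\, a_q^* a_q]
= - \sum_{p \in \BFc,\, h \in \BF} d_{\alpha,k}(p,h)\,(\delta_{p,q} + \delta_{h,q})\, a_p^* a_h^*.
\end{equation*}
The weight $d_{\alpha,k}$ is supported on pairs with $p,h \in B_\alpha$ (indeed, $\chi(p,h:\alpha)$ forces $p \in \BFc \cap B_\alpha$ and $h \in \BF \cap B_\alpha$). By construction in Section \ref{sec:trialstate} (adjacent patches separated by corridors of width at least $2R$, see \eqref{eq:Balpha}) the patches $B_\alpha$ are pairwise disjoint. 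Hence if $\alpha \neq \alpha_q$, then $q \notin B_\alpha$ and the products $\delta_{p,q}\, d_{\alpha,k}(p,h)$ and $\delta_{h,q}\, d_{\alpha,k}(p,h)$ vanish identically, yielding the first case.

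In the complementary case $\alpha = \alpha_q$, the right-hand side above is by definition $-c^*(g_{q,k})$ with $g_{q,k}(p,h) = d_{\alpha_q,k}(p,h)(\delta_{h,q} + \delta_{p,q})$, which matches \eqref{eq:g}; depending on whether $q \in \BF$ or $q \in \BFc$, only one of the two Kronecker deltas will contribute nontrivially. The final assertion about $\alpha_q \notin \mathcal{I}_k$ is immediate: under that hypothesis any $\alpha \in \mathcal{I}_k$ automatically satisfies $\alpha \neq \alpha_q$, so only the vanishing case applies. There is no real obstacle; the only point to check is the patch disjointness, which is built into the construction of $B_\alpha$ in \eqref{eq:Balpha}.
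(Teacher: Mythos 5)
Your computation is correct and is precisely the ``direct computation using the CAR'' that the paper invokes without detail: the identity $[a_p^* a_h^*, a_q^*a_q] = -(\delta_{p,q}+\delta_{h,q})a_p^*a_h^*$, linearity, and the disjointness of the patches $B_\alpha$ (guaranteed by the corridors in the construction of \cref{eq:Balpha}) give exactly the claimed case distinction and the formula \cref{eq:g}. No gaps; this matches the paper's intended argument.
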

\begin{proof}
Direct computation using the CAR \cref{eq:CAR}.
\end{proof}

\Cref{lem:g} motivates the definition of $ \tilde{\cC}^q $ in \cref{eq:cCqtilde}, chosen such that $ [c^*_\alpha(k), a_q^* a_q] $ does not vanish:
\begin{itemize}
\item For $ k \notin \Gamma^{\nor} = H^{\nor} \cap \ZZZ^3 \cap B_R(0) $, this commutator would not even be defined.
\item The condition $ |k \cdot \hat{\omega}_{\alpha_q}| \ge N^{-\delta} $ ensures $ \alpha_q \in \cI_{k} $ as otherwise $ [c^*_\alpha(k), a_q^* a_q] = 0 $.
\item For $ q \in \BF $, the condition $ k \in \big( (B_{\F} \cap B_{\alpha_q}) - q \big) \cup \big( - (B_{\F} \cap B_{\alpha_q}) + q \big) $ guarantees that the factor $ \chi(p, h : \alpha_q) $ in $ g_{q, k}(p, h) $ does not vanish. Analogously, for $ q \in \BF^c $, the condition $ k \in \big( (B_{\F}^c \cap B_{\alpha_q}) - q \big) \cup \big( - (B_{\F}^c \cap B_{\alpha_q}) + q \big) $ guarantees that $ \chi(p, h : \alpha_q) $ does not vanish.
\end{itemize}

The simplest case of \cref{lem:cgcommutator} are the approximate CCR from \cite[Lemma~4.1]{BNPSS20}.
\begin{lemma}[Approximate CCR]
\label{lem:approxCCR}
Let $ k, \ell \in \Gamma^{\nor} $ and $ \alpha \in \cI_{k}, \beta \in \cI_{\ell} $. Then we have
\begin{equation}
\label{eq:approxCCR}
	\left[ c_\alpha(k), c^*_\beta(\ell) \right] = \begin{cases}
		0 \quad &\text{if } \alpha \neq \beta\\
		\delta_{k, \ell} + \cE_\alpha(k, \ell) \quad &\text{if } \alpha = \beta
	\end{cases}
\end{equation}
with the deviation operator $ \cE_\alpha(k, \ell)^* = \cE_\alpha(\ell, k) $ explicitly given by
\begin{equation}
\label{eq:cEalphaql}
	\cE_\alpha(k, \ell) :=
	- \sum_{p, h_1, h_2 : \alpha} \frac{\delta_{h_1, p \mp k} \delta_{h_2, p \mp \ell}}{n_{\alpha, k} n_{\alpha, \ell}} a_{h_2}^* a_{h_1}
	- \sum_{p_1, p_2, h : \alpha} \frac{\delta_{h, p_1 \mp k} \delta_{h, p_2 \mp \ell}}{n_{\alpha, k} n_{\alpha, \ell}} a_{p_2}^* a_{p_1}\;.
\end{equation}
\end{lemma}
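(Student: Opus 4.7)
My plan is to apply the general \cref{lem:cgcommutator} to the specific weight functions \cref{eq:d}. Via the identification \cref{eq:cstarabbreviation}, namely $c^*_\alpha(k) = c^*(d_{\alpha,k})$ and $c_\alpha(k) = c(d_{\alpha,k})$, the conclusion of \cref{lem:cgcommutator} immediately yields
\[ [c_\alpha(k), c^*_\beta(\ell)] = \langle d_{\alpha,k}, d_{\beta,\ell}\rangle + \cE(d_{\alpha,k}, d_{\beta,\ell})\;, \]
so it suffices to compute the scalar product and the error term explicitly for this particular choice of weight functions. The $ c c $ and $ c^* c^* $ commutators in \cref{eq:approxCCR} are covered directly by \cref{eq:cgcommutator1}.

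For the scalar product, I substitute \cref{eq:d} to obtain
\[ \langle d_{\alpha,k}, d_{\beta,\ell}\rangle = \frac{1}{n_{\alpha,k} n_{\beta,\ell}} \sum_{p,h} \delta_{p, h \pm k}\, \delta_{p, h \pm \ell}\, \chi(p,h : \alpha)\, \chi(p,h : \beta)\;. \]
For $\alpha \neq \beta$ this vanishes because the patches $B_\alpha$ and $B_\beta$ are disjoint by construction, so the product of indicator functions is zero. For $\alpha = \beta$, I distinguish cases according to the signs $\pm k$ and $\pm \ell$ dictated by $\alpha \in \cI_k^{\pm}$ and $\alpha \in \cI_\ell^{\pm}$. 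In the two equal-sign cases, the two Kronecker deltas jointly force $k = \ell$, and the remaining sum produces $n_{\alpha,k}^2$ by the definition \cref{eq:n}, which cancels the prefactor and yields $\delta_{k,\ell}$. In the two opposite-sign cases, the deltas would enforce $k = -\ell$; but $k, \ell \in \Gamma^{\nor} \subset H^{\nor}$ implies $-\ell \notin H^{\nor}$, so this configuration simply cannot occur. Hence $\langle d_{\alpha,k}, d_{\alpha,\ell}\rangle = \delta_{k,\ell}$.

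For the error term, I substitute \cref{eq:d} into \cref{eq:cEgg} and use the elementary identity $\delta_{p, h \pm k} = \delta_{h, p \mp k}$ to flip the Kronecker deltas. For $\alpha = \beta$, this rewrites $\cE(d_{\alpha,k}, d_{\alpha,\ell})$ line-for-line in the form \cref{eq:cEalphaql}. For $\alpha \neq \beta$, both sums collapse to zero since they require $p_1, p_2$ or $h_1, h_2$ to lie in the empty intersection $B_\alpha \cap B_\beta$. Finally, the adjoint identity $\cE_\alpha(k, \ell)^* = \cE_\alpha(\ell, k)$ follows immediately by taking the adjoint of each monomial $a_{h_2}^* a_{h_1}$ or $a_{p_2}^* a_{p_1}$ and swapping dummy indices. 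I do not anticipate any real obstacle here: the proof is essentially bookkeeping of the signs in \cref{eq:pmk}, with the only conceptual input being that the restriction to the northern half-space $\Gamma^{\nor}$ is exactly what rules out the spurious cross-sign contributions to the scalar product.
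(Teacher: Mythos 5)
Your proposal is correct and follows exactly the paper's own route: the proof in the paper is precisely the one-line observation that the claim follows from \cref{lem:cgcommutator} applied with $g = d_{\alpha,k}$ and $\tilde{g} = d_{\beta,\ell}$, and your explicit evaluation of the scalar product (including the observation that $k = -\ell$ is excluded by $k,\ell \in \Gamma^{\nor}$) and of the error term fills in the same bookkeeping the paper leaves implicit.
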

\begin{proof}
Follows from \cref{eq:cstarabbreviation} with $ g = d_{\alpha, k} $ and $ \tilde{g} = d_{\beta, \ell} $.
\end{proof}

To compute $\ad^{n+1}_S(a_q^* a_q) = [S, \ad^{n}_S(a_q^* a_q)]$ the following commutator is useful.
\begin{lemma}
\label{lem:dg}
Let $k,\ell \in \Gamma^{\nor}$, $\alpha \in \cI_k$, $ \alpha_q \in \cI_{\ell} $, $ q \in B_{\alpha_q} $ and $g_{q,\ell}$ as in \cref{eq:g}. Then
\begin{equation}
	\left[ c_\alpha(k), c^*(g_{q, \ell}) \right] = \begin{cases}
		0 \quad &\text{if } \alpha \neq \alpha_q\\
		\delta_{k, \ell} \rho_{q, k} + \cE^{(g)}_q(k, \ell) \quad &\text{if } \alpha = \alpha_q
	\end{cases}
\end{equation}
with
\begin{equation}	\label{eq:dg}
	\rho_{q, k} := \begin{cases}
		{n_{\alpha_q, k}^{-2}}\, \chi(q \mp k \in B_{\F} \cap B_{\alpha_q}) \quad &\text{if } q \in B_{\F}^c\\
		{n_{\alpha_q, k}^{-2}}\, \chi(q \pm k \in B_{\F}^c \cap B_{\alpha_q}) \quad &\text{if } q \in B_{\F}
	\end{cases}
\end{equation}
and with the operator $ \cE^{(g)}_q(k, k)^* = \cE^{(g)}_q(k, k) $ explicitly given as
\begin{equation}
\label{eq:cEg}
	\cE^{(g)}_q(k, \ell) := \begin{cases}\displaystyle
		& \hspace{-1em}- \frac{1}{n_{\alpha_q,k} n_{\alpha_q,\ell}}
		\left( \chi \Big( \substack{q \mp k \in B_{\F} \cap B_{\alpha_q} \\ q \mp \ell \in B_{\F} \cap B_{\alpha_q} } \Big) a_{q \mp \ell}^* a_{q \mp k}
		- \chi \Big( \substack{q \mp \ell \in B_{\F} \cap B_{\alpha_q} \\ q \mp \ell \pm k \in B_{\F}^c \cap B_{\alpha_q} } \Big) a_q^* a_{q \mp \ell \pm k} \right) \\ & \hspace{24em}\text{if } q \in B_{\F}^c\,,\\
		\displaystyle
		& \hspace{-1em} - \frac{1}{n_{\alpha_q,k} n_{\alpha_q,\ell}}
		\left( \chi \Big( \substack{q \mp \ell \in B_{\F}^c \cap B_{\alpha_q} \\ q \pm \ell \mp k \in B_{\F} \cap B_{\alpha_q} } \Big) a_q^* a_{q \pm \ell \mp k}
		- \chi \Big( \substack{q \pm \ell \in B_{\F}^c \cap B_{\alpha_q} \\ q \pm k \in B_{\F}^c \cap B_{\alpha_q} } \Big) a_{q \pm \ell}^* a_{q \pm k} \right) \\ & \hspace{24em}\text{if } q \in B_{\F} \,.
	\end{cases}
\end{equation}
\end{lemma}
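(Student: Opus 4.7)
The plan is to reduce the lemma to the generalized approximate CCR of \cref{lem:cgcommutator} by rewriting $c_\alpha(k) = c(d_{\alpha,k})$ via the identification \cref{eq:cstarabbreviation}. Then
\[
	[c_\alpha(k), c^*(g_{q,\ell})]
	= \langle d_{\alpha,k}, g_{q,\ell}\rangle + \cE(d_{\alpha,k}, g_{q,\ell})\;,
\]
and the task is to compute both the inner product and the error term $\cE$ of \cref{eq:cEgg} by evaluating the Kronecker deltas.

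First I would dispatch the case $\alpha \neq \alpha_q$. Both weight functions carry the patch indicators $\chi(p,h:\alpha)$ and $\chi(p,h:\alpha_q)$; because the patches $B_\alpha$ and $B_{\alpha_q}$ are disjoint by construction, the product of these indicators vanishes identically in $p$ and $h$. This kills every summand in both $\langle d_{\alpha,k}, g_{q,\ell}\rangle$ and in the two sums defining $\cE$, proving the commutator is zero in this case. Next, for $\alpha = \alpha_q$ I would evaluate the scalar product. Substituting the definitions \cref{eq:d} and \cref{eq:g}, the factor $\delta_{p,h\pm k}\,\delta_{p,h\pm\ell}$ forces $\pm k = \pm\ell$ as elements of $\ZZZ^3$; since both $k,\ell \in \Gamma^{\nor} \subset H^{\nor}$ and the signs in $\pm k, \pm \ell$ are both determined by $\alpha_q$, this implies $k=\ell$ (and equality of signs). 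After setting $k=\ell$, the residual sum on $(p,h)$ collapses via $\delta_{h,q}+\delta_{p,q}$: if $q \in B_{\F}^c$, the constraint $h \in B_\F$ rules out $h=q$, so only $p=q$ survives and yields the indicator $\chi(q \mp k \in B_\F \cap B_{\alpha_q})/n_{\alpha_q,k}^2$; if $q\in B_\F$, symmetrically only $h=q$ survives. These are precisely $\delta_{k,\ell}\,\rho_{q,k}$ as in \cref{eq:dg}.

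It remains to compute $\cE(d_{\alpha_q,k}, g_{q,\ell})$, which is the main bookkeeping step. Each of the two sums in \cref{eq:cEgg} yields, after the $\delta_{p,h\pm k}$ and $\delta_{p,h\pm\ell}$ collapses, two further subcases coming from $\delta_{h,q}+\delta_{p,q}$ (respectively $\delta_{h,q}+\delta_{p_2,q}$). In each of the cases $q \in B_\F^c$ and $q \in B_\F$, exactly one of these two subcases survives per term (the other is excluded because the putative equality would place a hole in $B_\F^c$ or a particle in $B_\F$). Substituting the resulting momenta into the particle and hole indicators from $\chi(p,h:\alpha_q)$ and using $q \in B_{\alpha_q}$ to trivialize the $\chi$--factor at $q$, one reads off the explicit expression \cref{eq:cEg}. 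Finally, for $k=\ell$ the formal adjoint $\cE^{(g)}_q(k,k)^*=\cE^{(g)}_q(k,k)$ follows by inspection: swapping the roles of $h_1\leftrightarrow h_2$ (respectively $p_1\leftrightarrow p_2$) in the two surviving terms interchanges the creation and annihilation operators and the two indicator conditions, returning the same expression.

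The whole argument is elementary and modelled on the proof of \cref{lem:approxCCR}; the only real obstacle is bookkeeping. The $\pm k$ and $\mp k$ conventions from \cref{eq:pmk} have to be tracked consistently (they are fixed by $\alpha_q \in \cI_k^{\pm}$ and $\alpha_q \in \cI_\ell^{\pm}$), and one has to systematically discard the impossible subcases for each of $q \in B_\F$ and $q \in B_\F^c$. No estimate is needed: this is a combinatorial identity, so the work is entirely in making the case analysis exhaustive and correct.
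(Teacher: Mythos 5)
Your proposal is correct and is exactly the computation the paper leaves implicit: \cref{lem:dg} carries no written proof, and the intended argument is the same specialization of \cref{lem:cgcommutator} (with $g = d_{\alpha_q,k}$ and $\tilde g = g_{q,\ell}$) that the paper invokes for \cref{lem:approxCCR}. Your case analysis is sound: disjointness of the patches kills everything for $\alpha\neq\alpha_q$; in the scalar product the two Kronecker deltas force $\pm k=\pm\ell$, and since $H^{\nor}\cap(-H^{\nor})=\emptyset$ the possibility $k=-\ell$ is excluded, leaving $k=\ell$; and the surviving subcase of $\delta_{h,q}+\delta_{p,q}$ is correctly identified in each of $q\in\BF$, $q\in\BFc$. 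One caveat on the final step: carrying your plan through, \emph{both} contributions to $\cE(d_{\alpha_q,k},g_{q,\ell})$ inherit the minus sign from the two sums in \cref{eq:cEgg}, so the computation yields \cref{eq:cEg} with a plus rather than a minus between the two terms inside the parentheses; the relative sign printed in \cref{eq:cEg} appears to be a typo rather than something you can literally ``read off''. This is harmless downstream (only norms $\lVert \cE^{(g)}_q(k,\ell)^\sharp\psi\rVert$ are ever used, via the triangle inequality, and for $k=\ell$ each of the two terms is separately self-adjoint, so the claimed identity $\cE^{(g)}_q(k,k)^*=\cE^{(g)}_q(k,k)$ survives either sign), but a fully written-out proof should either state the corrected sign or note the discrepancy.
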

By $\chi \Big( \substack{q \mp k \in B_{\F} \cap B_{\alpha_q} \\ q \mp \ell \in B_{\F} \cap B_{\alpha_q} } \Big)$ we denote the characteristic function of the set of all $q \in \Zbb^3$ satisfying \emph{both} $q \mp k \in B_{\F} \cap B_{\alpha_q}$ \emph{and} $q \mp \ell \in B_{\F} \cap B_{\alpha_q} $, with the sign of $\pm k$ and $\mp k$ as defined in \cref{eq:pmk}.

\section{Momentum Distribution from Bosonization}
\label{subsec:bosonizationapprox}

Recall that the exact momentum distribution, according to \cref{eq:BCH}, is given by
\begin{equation}
\label{eq:nqbarformula}
	n_q = \sum_{n = 0}^\infty \frac{1}{n!} \langle \Omega, \ad^n_S(a_q^* a_q) \Omega \rangle\;.
\end{equation}
The evaluation of multi-commutators $ \ad^n_S (a_q^* a_q) $ results in a rather involved expression. However, the dominant contribution is obtained pretending that bosonization was exact, i.\,e., if we drop $ \cE_\alpha(k, \ell) $ in the approximate CCR \cref{eq:approxCCR}.

\smallskip

For $ n = 0 $, we choose $ \ad^0_{q, (\b)} = a_q^* a_q = \ad^0_S (a_q^* a_q) $, so bosonization is exact.

For $ n \ge 1 $, the bosonized multi-commutator $ \ad^n_{q, (\b)} $ is expressed using six types of terms ($ \bA$, $\bB$, $\bC$, $\bD$, $\bE $, and $ \bF $). If $ q \in B_{\alpha_q} $ for some $ 1 \le \alpha_q \le M $, we define
\begin{equation}
\label{eq:adnb}
	\ad^n_{q, (\b)} := \begin{cases}
		\displaystyle 2^{n-1} \bA_n + \bB_n + \bB_n^* + \sum_{m = 1}^{n-1} \binom{n}{m} \bC_{n-m, m} \quad & \text{if } n \text{ is even}\\
		\displaystyle \bE_n + \bE_n^* + \sum_{m = 1}^s \binom{n}{m} \bD_{n-m, m} + \sum_{m = 1}^s \binom{n}{m} \bF_{m, n-m} \quad &\substack{\displaystyle \text{if } n \text{ is odd},\\ \displaystyle
		n = 2s + 1.}
	\end{cases}
\end{equation}
If $ q $ is not inside any patch we set $ \ad^n_{q, (\b)} := 0 $. With $\rho_{q,k}$ from \cref{eq:dg} the terms are
\begin{equation}
\label{eq:ABC}
\begin{aligned}
	\bA_n &:= \sum_{k \in \tilde{\cC}^q \cap \ZZZ^3}
		\big( K(k)^n \big)_{\alpha_q, \alpha_q} \rho_{q, k}\qquad \in \mathbb{C}\;,\\
	\bB_n &:= \sum_{k \in \tilde{\cC}^q \cap \ZZZ^3} \sum_{\alpha_1 \in \Ik}
		\big( K(k)^n \big)_{\alpha_q, \alpha_1} \;
		c^*(g_{q, k}) c_{\alpha_1}(k)\;,\\
	\bC_{m, m'} &:= \sum_{k \in \tilde{\cC}^q \cap \ZZZ^3} \sum_{\alpha_1, \alpha_2 \in \Ik}
		\big( K(k)^m \big)_{\alpha_q, \alpha_1} \big( K(k)^{m'} \big)_{\alpha_q, \alpha_2} \rho_{q, k} \;
		c^*_{\alpha_1}(k) c_{\alpha_2}(k)\;,
\end{aligned}
\end{equation}
and
\begin{equation}
\label{eq:DEF}
\begin{aligned}
	\bD_{m, m'} &:= \sum_{k \in \tilde{\cC}^q \cap \ZZZ^3} \sum_{\alpha_1, \alpha_2 \in \Ik}
		\big( K(k)^m \big)_{\alpha_q, \alpha_1} \big( K(k)^{m'} \big)_{\alpha_q, \alpha_2} \rho_{q, k} \;
		c_{\alpha_1}(k) c_{\alpha_2}(k)\;,\\
	\bE_n &:= \sum_{k \in \tilde{\cC}^q \cap \ZZZ^3} \sum_{\alpha_1 \in \Ik}
		\big( K(k)^n \big)_{\alpha_q, \alpha_1}
		c^*(g_{q, k}) c^*_{\alpha_1}(k)\;,\\
	\bF_{m, m'} &:= \sum_{k \in \tilde{\cC}^q \cap \ZZZ^3} \sum_{\alpha_1, \alpha_2 \in \Ik}
		\big( K(k)^m \big)_{\alpha_q, \alpha_1} \big( K(k)^{m'} \big)_{\alpha_q, \alpha_2} \rho_{q, k} \;
		c^*_{\alpha_1}(k) c^*_{\alpha_2}(k)\;.
\end{aligned}
\end{equation}
Since $\rho_{q,k}$ is real, and since $K(k)$ is a real and symmetric matrix, we have
\begin{equation}
\label{eq:ABCidentities}
\bA_n^* = \bA_n\;, \qquad
	\bC_{m, m'}^* = \bC_{m', m} \;,
\end{equation}
\begin{equation}
\label{eq:DEFidentities}
\bD_{m, m'} = \bD_{m', m}\;, \qquad
	\bF_{m, m'} = \bF_{m', m}\;, \qquad
	\bD_{m, m'}^* = \bF_{m', m}\;.
\end{equation}
Replacing $ \ad^n_S (a^*_q a_q) $ by $ \ad^n_{q, (\b)} $ in  \cref{eq:nqbarformula} yields the bosonization approximation $ n_q^{(\b)} $:  If $ q \in B_{\alpha_q} $ for some $ 1 \le \alpha_q \le M $, then\footnote{Note that $ k \in \tilde{\cC}^q \cap \ZZZ^3 $ enforces $ \alpha_q \in \cI_{k} $, so the denominator $ n_{\alpha_q, k}^2 $ does not vanish.}
\begin{align*}
	&\sum_{n = 0}^\infty \frac{1}{n!} \langle \Omega, \ad^n_{q, (\b)} \Omega \rangle
	= \sum_{m = 1}^\infty \frac{2^{2m-1}}{(2m)!} \langle \Omega, \bA_{2m} \Omega \rangle
	= \sum_{m = 1}^\infty \frac{2^{2m-1}}{(2m)!} \sum_{k \in \tilde{\cC}^q \cap \ZZZ^3} \big( K(k)^{2m} \big)_{\alpha_q, \alpha_q} \rho_{q, k}\\
	& = \frac{1}{2} \sum_{k \in \tilde{\cC}^q \cap \ZZZ^3} \frac{1}{n_{\alpha_q, k}^2} \big( \cosh(2 K(k)) - 1 \big)_{\alpha_q, \alpha_q} = n_q^{(\b)}\;.		\tagg{eq:nqbformula}
\end{align*}
Otherwise, $ \ad^n_{q, (\b)} = n_q^{(\b)} = 0 $. Both agrees with \cref{eq:nqb}.

\section{Controlling the Bosonization Error}
\label{sec:bosonizationerror}

In this section we compile the basic estimates required to control the bosonization.
\begin{lemma}[Bound on Powers of $ K $]
\label{lem:Knbound}
Suppose that $\hat{V}$ is non-negative. Then there is $ C > 0 $ such that for all $ k \in \Gamma^{\nor}, \alpha, \beta \in \cI_{k} $ and $ n \in \NNN $, we have
\begin{equation}
\label{eq:Knbound}
	\lvert (K(k)^n)_{\alpha, \beta}\rvert \le (C \hat{V}_k)^n M^{-1}\;.
\end{equation}
\end{lemma}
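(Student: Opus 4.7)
The key structural fact is that $b(k) = \mu_k |v\rangle\langle v|$, with $\mu_k := \hat V_k/(2\hbar\kappa N|k|)$ and $v := \sum_{\alpha \in \cI_k^+} n_{\alpha,k} |\alpha\rangle$, has rank one. The plan is to exploit this to derive a closed factorized formula for the entries of $K(k)$, from which the bound on all its powers follows at once.

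First, we block-diagonalize $D + W \pm \widetilde{W}$. Ordering $\cI_k = \cI_k^+ \sqcup \cI_k^-$ so that patches in $\cI_k^-$ are antipodes of patches in $\cI_k^+$, the orthogonal transformation $U := \tfrac{1}{\sqrt 2}\begin{pmatrix}\mathbb{I} & \mathbb{I}\\ \mathbb{I} & -\mathbb{I}\end{pmatrix}$ simultaneously brings $U^T(D+W-\widetilde{W})U$ into $d \oplus (d+2b)$ and $U^T(D+W+\widetilde{W})U$ into $(d+2b) \oplus d$. This reduces the problem to functional calculus applied to the rank-one perturbation $d + 2b = d + 2\mu_k |v\rangle\langle v|$ of the diagonal matrix $d$; correspondingly $K(k) = U(K_+ \oplus K_-)U^T$ with each $K_\pm$ acting on $\mathbb{R}^{|\cI_k^+|}$.

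Second, we apply Sherman--Morrison. For any scalar function $f$ analytic on a neighborhood of the spectra of $d$ and $d+2b$,
\[
(z - d - 2b)^{-1} = (z - d)^{-1} + \frac{2\mu_k\,(z-d)^{-1}|v\rangle\langle v|(z-d)^{-1}}{1 - 2\mu_k\langle v,(z-d)^{-1}v\rangle},
\]
so integrating against $f$ via Cauchy's formula yields $f(d+2b)_{\alpha,\beta} = f(d_\alpha)\,\delta_{\alpha,\beta} + n_{\alpha,k}\,n_{\beta,k}\,\Phi^f_{\alpha,\beta}$. The hypothesis $\hat V_k \ge 0$ ensures positivity of $d+2b$ and uniform control of the denominator on the integration contour, giving $|\Phi^f_{\alpha,\beta}| \le C\mu_k \sup|f|$. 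Composing this through the square roots, inverse, and logarithm that define $K(k) = \log|S_1(k)^T|$, and absorbing the remaining scalar factors $d_\alpha, d_\beta$ into the coefficient, the output is a factorized formula
\[
K(k)_{\alpha,\beta} = n_{\alpha,k}\,n_{\beta,k}\,\Psi_{\alpha,\beta}(k), \qquad |\Psi_{\alpha,\beta}(k)| \le C\hat V_k\,\|v\|^{-2}.
\]

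Finally, the entrywise bound follows from patch counting, and the bound on powers by iteration. The patch regularity \cref{eq:patchreg} together with a standard volume argument give $n_{\alpha,k}^2 \le C N^{2/3} M^{-1}$ and $\|v\|^2 = \sum_\gamma n_{\gamma,k}^2 \ge c N^{2/3}|k|$, so $n_{\alpha,k}\,n_{\beta,k}/\|v\|^2 \le C M^{-1}$; this disposes of the case $n=1$. For $n \ge 2$ the factorized form propagates under matrix multiplication because the intermediate summations produce $\sum_\gamma n_{\gamma,k}^2 = \|v\|^2$, cancelling one factor of $\|v\|^{-2}$ per step, so that
\[
|(K(k)^n)_{\alpha,\beta}|
\le (C\hat V_k)^n\,\frac{n_{\alpha,k}\,n_{\beta,k}}{\|v\|^2}
\le (C\hat V_k)^n\,M^{-1},
\]
as claimed. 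The main obstruction is the non-commutativity of $d$ and $b$, which precludes any literal simultaneous diagonalization of $D+W\pm\widetilde{W}$; the Sherman--Morrison identity at the level of resolvents is the device that nonetheless isolates the rank-one direction and yields the factorized structure of $K(k)$ needed for the entrywise bound on $K(k)^n$.
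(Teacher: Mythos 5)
The paper disposes of this lemma in one line: the $n=1$ entrywise bound $|K(k)_{\alpha,\beta}|\le C\hat V_k M^{-1}$ is imported from \cite[Lemma~7.1]{BPSS22}, and the power bound then follows from the trivial induction
$|(K(k)^n)_{\alpha,\beta}|\le\sum_{\gamma\in\cI_k}|(K(k)^{n-1})_{\alpha,\gamma}|\,|K(k)_{\gamma,\beta}|\le M\cdot (C\hat V_k)^{n-1}M^{-1}\cdot C\hat V_k M^{-1}=(C\hat V_k)^nM^{-1}$,
using only $|\cI_k|\le M$. Your propagation step via the factorized form $n_{\alpha,k}n_{\beta,k}\Psi_{\alpha,\beta}$ is correct and in fact slightly stronger, but it is not needed: once the $n=1$ bound is in hand, the na\"ive entrywise induction already closes the argument. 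Your patch-counting inputs ($n_{\alpha,k}^2\lesssim N^{2/3}M^{-1}$, $\|v\|^2\gtrsim N^{2/3}|k|$) and the block diagonalization by $U$ are also consistent with what is done in \cref{sec:proofmain3} and in \cite{BPSS22}.

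The genuine gap is in your re-derivation of the $n=1$ bound, specifically the sentence ``composing this through the square roots, inverse, and logarithm that define $K(k)=\log|S_1(k)^T|$.'' That composition is where all the analysis lives, and it cannot be reduced to a single application of Sherman--Morrison. The resolvents of $d^{1/2}(d+2b)d^{1/2}=d^2+2\tilde b$ are indeed rank-one perturbations of resolvents of $d^2$, but $|S_1^T|^2=d^{1/2}\bigl(d^{1/2}(d+2b)d^{1/2}\bigr)^{-1/2}d^{1/2}$ is the identity plus a matrix whose entries are integrals of rank-one kernels --- it is \emph{not} itself a rank-one perturbation of a diagonal matrix, so applying $\tfrac12\log(\cdot)$ to it requires a further contour or Neumann expansion whose convergence must be justified (one needs the spectrum of $|S_1^T|^2$ bounded away from $0$, i.e.\ quantitative positivity coming from $\hat V_k\ge 0$ and a bound like $\|b\|\lesssim\hat V_k$ relative to $d$), and one must check that the factorized entrywise structure survives the resummation, with the $\sum_\gamma n_{\gamma,k}^2=\|v\|^2$ cancellation controlling each order. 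You assert the conclusion $|\Psi_{\alpha,\beta}|\le C\hat V_k\|v\|^{-2}$ without carrying out this step; as written the argument for \cref{eq:Knbound} at $n=1$ is therefore incomplete, precisely at the point that \cite[Lemma~7.1]{BPSS22} is invoked to handle.
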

\begin{proof}
This follows using \cite[Lemma~7.1]{BPSS22}, which states that
$|K(k)_{\alpha, \beta}| \le C \hat{V}_k M^{-1}$.
\end{proof}
%
\begin{lemma}[Bounds on $ n_{\alpha,k} $]
\label{lem:fn}
For all $ k \in \Gamma^{\nor} $ and $ \alpha \in \cI_{k} $ we have
\begin{equation}
\label{eq:fn}
	n_{\alpha, k} \ge C \fn \qquad \text{for} \qquad \fn := N^{\frac{1}{3}-\frac{\delta}{2}} M^{-\frac{1}{2}}\;.
\end{equation}
\end{lemma}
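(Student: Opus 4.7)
The plan is to interpret $n_{\alpha,k}^2$ as a lattice-point count in a shell-shaped region near the Fermi surface and to derive the lower bound from a volume estimate combined with a direct lattice counting argument. I focus on $\alpha \in \cI_k^+$; the case $\alpha \in \cI_k^-$ follows by substituting $k \mapsto -k$ in \cref{eq:n}.

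First, I unpack the defining formula: $n_{\alpha,k}^2$ is the number of $h \in \ZZZ^3$ with $h \in B_{\F} \cap B_\alpha$ and $h + k \in B_{\F}^c \cap B_\alpha$. Writing $k_\parallel := k \cdot \hat{\omega}_\alpha \geq N^{-\delta}$ and using that over the patch $h \cdot k = k_{\F} k_\parallel + O(1)$ (the $O(1)$ coming from the radial thickness $2R$ of $B_\alpha$ combined with the tangential bend $\diam(P_\alpha)/k_{\F}$), the joint membership $|h| \leq k_{\F} < |h+k|$ is equivalent to $|h|$ lying in a radial slab of length $\gtrsim k_\parallel$ just below $\partial B_{\F}$. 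The auxiliary constraint that $h + k$ (not only $h$) lies in $B_\alpha$ excises only a tangential boundary strip of width $\lesssim R$ from $P_\alpha$, which is a negligible fraction of the patch by \cref{eq:patchreg}.

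Second, since the tangential area of $B_\alpha$ equals $|P_\alpha| = 4\pi k_{\F}^2/M \sim N^{2/3}/M$, the admissible shell-patch has volume
\[
	V \;\geq\; C \cdot \frac{N^{2/3}}{M} \cdot k_\parallel \;\geq\; C \cdot \frac{N^{2/3-\delta}}{M} \;=\; C \fn^2 \;.
\]
Third, I pass from the volume estimate to a lattice-point count. By \cref{eq:Mdelta} the patch diameter $\lesssim N^{1/3}/\sqrt{M}$ exceeds $N^\delta \gg 1$, so the shell-patch is wide in both tangential directions. When $k_\parallel \geq 1$, each tangential lattice column contributes $\geq \lfloor k_\parallel \rfloor$ radial lattice points and a standard box-packing argument yields $n_{\alpha,k}^2 \geq C V$, closing the estimate.

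\textbf{Main obstacle.} The delicate regime is $N^{-\delta} \leq k_\parallel < 1$, where the shell is thinner than one lattice spacing in the radial direction. A crude volume-minus-boundary bound fails since the two radial faces of the shell carry total area $\sim N^{2/3}/M$, which is comparable to $V / k_\parallel \gg V$. The resolution is to exploit that, as $h$ sweeps across the $\sim N^{2/3}/M$ tangential lattice columns covering $P_\alpha$, the radial position of $\partial B_{\F}$ varies: the curvature of the sphere over the tangential extent $N^{1/3}/\sqrt{M}$ produces a spread of size $\sim N^{1/3}/M$, complemented by the generic irrationality of $\hat{\omega}_\alpha$ relative to $\ZZZ^3$ which equidistributes the residues $\{k_{\F} - |h|\} \in [0,1)$. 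This implies that a proportion $\sim k_\parallel$ of tangential columns contains a lattice point in the shell, producing $\gtrsim (N^{2/3}/M) \cdot k_\parallel \geq C V$ lattice points and thereby the claim $n_{\alpha,k} \geq C \fn$.
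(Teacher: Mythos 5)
Your reduction of $n_{\alpha,k}^2$ to a lattice-point count in a radial slab of thickness $\approx k\cdot\hat\omega_\alpha \ge N^{-\delta}$ over the patch, and your identification of the regime $N^{-\delta}\le k_\parallel<1$ as the crux, are both correct. However, the resolution you propose for that crux is not a proof. There is no ``generic irrationality'' available: $\hat\omega_\alpha$ is a fixed, explicitly constructed direction, and equidistribution of the residues $\{k_{\F}-|h|\}$ over lattice points $h$ in a spherical cap is a delicate quantitative statement (of lattice-points-in-thin-shells type) that you would have to prove at scale $k_\parallel\ge N^{-\delta}$, not merely assert. Your curvature back-up also fails in part of the allowed parameter range: the sagitta over a patch of tangential extent $N^{1/3}M^{-1/2}$ is of order $N^{1/3}M^{-1}$, which for admissible $M\gg N^{1/3}$ (e.g.\ $M=N^{1/2}$, permitted by \cref{eq:Mdelta} for small $\delta$) is $\ll 1$, so the radial positions of the columns do not spread over even one lattice spacing and the mechanism you invoke gives nothing.

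The gap is avoidable by counting along a different foliation: decompose $\ZZZ^3$ into lattice lines in the direction of $k$ itself (note $k\in\ZZZ^3$). By convexity of $B_{\F}$, the intersection of each such line with $B_{\F}$ is an integer interval $J$, and the number of its points $h$ with $h+k\notin B_{\F}$ is \emph{exactly} $\min(g,|J|)$ with $g=\gcd(k_1,k_2,k_3)$ --- no equidistribution input is needed, and the thinness of the radial shell is irrelevant. One is then left with a two-dimensional count of lattice lines meeting the projection of $B_\alpha$ onto the plane orthogonal to $k$, a region of area $\approx\sigma(P_\alpha)\,|\hat k\cdot\hat\omega_\alpha|$ whose diameters are $\gg1$ by \cref{eq:patchreg} and \cref{eq:Mdelta}, so area does approximate count there. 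This yields $n_{\alpha,k}^2\gtrsim k_{\F}^2M^{-1}|k\cdot\hat\omega_\alpha|\ge C\fn^2$. This is in fact the argument behind \cite[Eq.~(3.18)]{BNPSS20}, which is all the paper itself invokes for this lemma.
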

\begin{proof}
This is just \cite[Eq.~(3.18)]{BNPSS20}. (In \cite{BNPSS20} it is assumed that $ M = N^{\frac{1}{3} + \varepsilon} $ with $\varepsilon > 0$. We only assume $ M \gg N^{2 \delta} $ but the proof remains true.)
\end{proof}
By \cref{eq:Mdelta} we conclude that $\fn \to \infty$ as $N\to \infty$ at least as fast as $ \fn \ge C N^{\frac{\delta}{2}} $.

\smallskip

Next, we compile estimates on the deviation operators $ \cE_\alpha(k, \ell) $ defined in \cref{eq:cEalphaql} and $ \cE^{(g)}_q(k, \ell) $ defined in \cref{eq:cEg}, which are partly based on \cite[Lemma~4.1]{BNPSS20}.

\begin{lemma}[Bounds on $ \cE_\alpha $ and $ \cE^{(g)} $]
\label{lem:cEbound}
Let $ k, \ell \in \Gamma^{\nor} $, $ \alpha, \alpha_q \in \cI_{k} \cap \cI_{\ell} $, and $ q \in B_{\alpha_q} $. Then for all $\psi \in \fock$ we have, for all choices\footnote{The meaning of $\sharp$ as ``adjoint'' or ``non adjoint'' may vary between every appearance of the symbol, even within the same formula; we mean that the statement holds for all possible combinations.} of $ \sharp \in \{ \cdot, * \} $,
\begin{equation}
\label{eq:cEbound}
	\Vert \cE_\alpha(k, \ell)^\sharp \psi \Vert
	\le \frac{2}{n_{\alpha, k} n_{\alpha, \ell}} \Vert \cN \psi \Vert\;, \qquad
	\Vert \cE^{(g)}_q(k, \ell)^\sharp \psi \Vert
	\le \frac{2}{n_{\alpha_q, k} n_{\alpha_q, \ell}} \Vert \psi \Vert\;,
\end{equation}
and
\begin{equation}
\label{eq:cEsumbound}
	\sum_{\beta \in \cI_{k} \cap \cI_{\ell}} \Vert \cE_\beta(k, \ell)^\sharp \psi \Vert^2
	\le \frac{C}{\fn^2} \Vert \cN^{\frac{1}{2}} \psi \Vert^2 \;.
\end{equation}
\end{lemma}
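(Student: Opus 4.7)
The plan is to reduce each of the three estimates to standard bounds on partial-isometry-style one-body operators, combined with a Cauchy--Schwarz argument for the sum bound.

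For the first inequality of \cref{eq:cEbound}, I would split $\cE_\alpha(k,\ell) = -(A_\alpha + B_\alpha)/(n_{\alpha,k} n_{\alpha,\ell})$ into the two operator sums in \cref{eq:cEalphaql}. Resolving the Kronecker deltas, each takes the form $A_\alpha = \sum_p a_{p\mp \ell}^* a_{p\mp k}$, summed over $p \in \BFc \cap B_\alpha$ with $p\mp k,\, p\mp\ell \in \BF\cap B_\alpha$, and analogously for $B_\alpha$ summed over the hole index. Each is of the form $d\Gamma(\hat{T})$ for a $0/1$-matrix $\hat T$ on $\ell^2(\ZZZ^3)$ having at most one $1$ per row and per column (fixing either index determines $p$ or $h$), so $\|\hat T\|_\op \le 1$. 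The standard bound $\|d\Gamma(\hat T)\psi\| \le \|\hat T\|_\op \|\cN\psi\|$ --- which follows from $d\Gamma(\hat T)$ having operator norm at most $n\|\hat T\|_\op$ on the $n$-particle sector --- together with the triangle inequality yields the first line of \cref{eq:cEbound}.

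For the second inequality of \cref{eq:cEbound}, the explicit formula \cref{eq:cEg} exhibits $\cE_q^{(g)}(k,\ell)$ as a difference of only two single terms of the form $\frac{1}{n_{\alpha_q,k} n_{\alpha_q,\ell}}\, a_x^* a_y$ with characteristic functions bounded by $1$. Since the CAR yield $\|a_x^* a_y \psi\|^2 \le \|a_y\psi\|^2 \le \|\psi\|^2$, the triangle inequality delivers the stated estimate.

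The sum bound \cref{eq:cEsumbound} is the delicate point. I would Cauchy--Schwarz the defining sum of $A_\beta$ to obtain
\[
 \|A_\beta \psi\|^2 \le m_\beta \sum_{p} \|a_{p \mp k}\psi\|^2 \le m_\beta \langle \psi, \cN_\beta^F \psi\rangle,
\]
where $m_\beta$ denotes the number of $p$ contributing to $A_\beta$ and $\cN_\beta^F := \sum_{h \in \BF \cap B_\beta} a_h^* a_h$. Since each such $p$ contributes both to the count $n_{\beta,k}^2$ and to $n_{\beta,\ell}^2$, one has $m_\beta \le \min(n_{\beta,k}^2, n_{\beta,\ell}^2) \le n_{\beta,k}\, n_{\beta,\ell}$; the analogous bound holds for $B_\beta$ with $\cN_\beta^P := \sum_{p \in \BFc \cap B_\beta} a_p^* a_p$ in place of $\cN_\beta^F$. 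Combining and using $n_{\beta,k}, n_{\beta,\ell} \ge C\fn$ from \cref{lem:fn} gives
\[
 \|\cE_\beta(k,\ell)\psi\|^2 \le \frac{C}{\fn^2}\, \big( \langle \psi, \cN_\beta^F \psi\rangle + \langle \psi, \cN_\beta^P \psi\rangle \big).
\]
Since the patches $B_\beta$ are pairwise disjoint by construction, $\sum_\beta (\cN_\beta^F + \cN_\beta^P) \le \cN$, so summing over $\beta$ produces the required $\|\cN^{1/2}\psi\|^2$ on the right-hand side. The adjoint case $\sharp = *$ is identical since $\cE_\beta(k,\ell)^* = \cE_\beta(\ell,k)$ has the same structure with $k$ and $\ell$ swapped.

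The main obstacle is arranging the Cauchy--Schwarz so that the combinatorial factor $m_\beta$ is symmetrically bounded by $n_{\beta,k}\, n_{\beta,\ell}$ rather than only by $n_{\beta,k}^2$. Without this symmetrization, dividing by the full $n_{\beta,k}^2 n_{\beta,\ell}^2$ appearing in $\cE_\beta$ would leave a ratio that does not telescope into a single factor of $\fn^{-2}$. Once this point is identified, patch-disjointness handles the $\beta$-sum immediately.
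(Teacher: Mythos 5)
Your proposal is correct and follows essentially the same route as the paper: the second bound via $\Vert a_x^\sharp\Vert_\op\le 1$ on the two explicit terms, and the sum bound via Cauchy--Schwarz with the same key symmetrization $m_\beta\le\min\{n_{\beta,k}^2,n_{\beta,\ell}^2\}\le n_{\beta,k}n_{\beta,\ell}$ followed by patch-disjointness to assemble $\Vert\cN^{1/2}\psi\Vert^2$. The only difference is cosmetic: for the first bound the paper simply cites \cite[Lemma~4.1]{BNPSS20}, whereas you spell out the underlying $d\Gamma(\hat T)$-argument with $\Vert\hat T\Vert_\op\le 1$, which is the standard proof of that cited estimate.
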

Note that $\cE^{(g)}_q(k, \ell)^\sharp$ satisfies a sharper bound than $\cE_\alpha(k, \ell)^\sharp$, not requiring the number operator $\Ncal$ on the r.\,h.\,s.\ because \cref{eq:cEg} does not contain any sum.
\begin{proof}
The first bound in \cref{eq:cEbound} was already given in \cite[Lemma~4.1]{BNPSS20} for $ \cE_\alpha(k, \ell) $. The statement then follows for $ \cE_\alpha(k, \ell)^* = \cE_\alpha(\ell, k) $.
For the bound on $\cE^{(g)}_q(k, \ell)^\sharp$ in \cref{eq:cEbound}, recall the definition  \cref{eq:cEg} of $ \cE^{(g)}_q(k, \ell) $
 and use the operator norm bound
 $ \Vert a_{q'}^\sharp \Vert_{\op} \le 1 $.
It remains to establish \cref{eq:cEsumbound}. Eq.~\cref{eq:cEalphaql} renders
\begin{align*}
	\sum_{\beta \in \Ik \cap \cI_{\ell}} \Vert \cE_\beta(k, \ell) \psi \Vert^2 \tagg{eq:cEbetaexpansion}
	\leq \sum_{\beta \in \Ik \cap \cI_{\ell}}  \frac{2}{n_{\beta, k}^2 n_{\beta, \ell}^2} \Bigg( & \Big\Vert
		\sum_{p : \beta}
		\chi \Big( \substack{p \mp k \in B_{\F} \cap B_{\beta} \\ p \mp \ell \in B_{\F} \cap B_{\beta} } \Big)
		a_{p \mp \ell}^* a_{p \mp k} \psi \Big\Vert^2 \\
		& +
		\Big\Vert \sum_{h : \beta}
		\chi \Big( \substack{h \pm k \in B_{\F}^c \cap B_{\beta} \\ h \pm \ell \in B_{\F}^c \cap B_{\beta} } \Big)
		a_{h \pm \ell}^* a_{h \pm k} \psi \Big\Vert^2 \Bigg)\;.
\end{align*}
Let us introduce the set $ \cS_\beta := \{ p : \beta \; \vert \; p \mp k \in B_{\F} \cap B_{\beta} \text{ and } p \mp \ell \in B_{\F} \cap B_{\beta} \} $. The first term in \cref{eq:cEbetaexpansion} then becomes
\[
\begin{aligned}
	&\sum_{\beta \in \Ik \cap \cI_{\ell}}  \frac{2}{n_{\beta, k}^2 n_{\beta, \ell}^2}
		\Big\Vert \sum_{p \in \cS_\beta} a_{p \mp \ell}^* a_{p \mp k} \psi \Big\Vert^2
	\le \sum_{\beta \in \Ik \cap \cI_{\ell}}  \frac{2}{n_{\beta, k}^2 n_{\beta, \ell}^2}
		\Big( \sum_{p \in \cS_\beta} \Vert a_{p \mp \ell}^* a_{p \mp k} \psi \Vert \Big)^2\\
	& \le \sum_{\beta \in \Ik \cap \cI_{\ell}}  \frac{2}{n_{\beta, k}^2 n_{\beta, \ell}^2}
		\Big( \sum_{p \in \cS_\beta} 1 \Big)
		\Big( \sum_{p \in \cS_\beta} \Vert a_{p \mp \ell}^* a_{p \mp k} \psi \Vert^2 \Big)
	\le \sum_{\beta \in \Ik \cap \cI_{\ell}}  \frac{2}{n_{\beta, k} n_{\beta, \ell}}
		\sum_{p \in \cS_\beta} \Vert a_{p \mp \ell}^* a_{p \mp k} \psi \Vert^2\\
	& \le \frac{C}{\fn^2} \sum_{\beta \in \Ik \cap \cI_{\ell}}
		\sum_{p \in \cS_\beta} \Vert a_{p \mp k} \psi \Vert^2
	\le \frac{C}{\fn^2} \Vert \cN^{\frac{1}{2}} \psi \Vert^2\;,
\end{aligned}
\]
where we used the Cauchy--Schwartz inequality, $ \sum_{p \in \cS_\beta} 1 \le \min \{ n_{\beta, k}^2, n_{\beta, \ell}^2 \} \le n_{\beta, k} n_{\beta, \ell} $ and $ \Vert a_h^\sharp \Vert_{\op} \le 1 $. The second term on the r.\,h.\,s.\ of \cref{eq:cEbetaexpansion} is bounded analogously and the proof for $ \cE_\beta(k, \ell)^* = \cE_\beta(\ell, k) $ works the same way.
\end{proof}

In the trial state $ \xi_t \in \cF $ from \cref{eq:xit}, the estimate \cref{eq:cEbound} for $ \cE^{(g)}_q(k, \ell) $ is far from optimal: $ \Vert a_{q'}^\sharp \Vert_{\op} \le 1 $ means that we bound the $q$--mode as if it was fully occupied. This yields the initial bootstrap bound $ \langle \xi_t, a_q^* a_q \xi_t \rangle = \cO(N^{-r}) $ with $r = 0$. It turns out below that we can improve the exponent up to $r = \frac{2}{3} -\delta$.

\begin{lemma}[Bootstrap Bounds on $ \cE^{(g)} $]
\label{lem:cEgbound}
Let $ k, \ell \in \Gamma^{\nor} $, $ \alpha_q \in \cI_{k} \cap \cI_{\ell} $, and $ q \in B_{\alpha_q} $. Assume that there is $ r \ge 0 $ such that for all $ t \in [-1, 1] $ and all $ q' \in \ZZZ^3 $ it is known that (with $ C $ independent of $ t, q' $)
\begin{equation}
\label{eq:nqcondition}
	\langle \xi_t, a_{q'}^* a_{q'} \xi_t \rangle
	\le C N^{-r}\;.
\end{equation}
Then (with $ \sharp \in \{ \cdot, * \} $) for all $ t \in [-1, 1] $ we have
\begin{equation}
\label{eq:cEgxitbound}
	\lVert \cE^{(g)}_q(k, \ell)^\sharp \xi_t \rVert
	\leq \frac{C}{n_{\alpha_q, k} n_{\alpha_q, \ell}} N^{-\frac{r}{2}} \;.
\end{equation}
\end{lemma}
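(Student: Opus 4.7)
The strategy is to improve upon the bound \cref{eq:cEbound} for $\cE^{(g)}_q(k,\ell)^\sharp$, which applied $\lVert a_{q'}^\sharp\rVert_\op \le 1$ to \emph{both} fermionic operators in each summand and therefore treated $\xi_t$ as an arbitrary unit vector. The bootstrap refinement consists in retaining one annihilation operator and extracting the small factor $\sqrt{\langle \xi_t, a_{q'}^* a_{q'} \xi_t\rangle}\le \sqrt{C}\,N^{-r/2}$ from the hypothesis \cref{eq:nqcondition}.

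Concretely, I would return to the explicit formula \cref{eq:cEg}. In both the case $q\in\BFc$ and the case $q\in\BF$, the operator $\cE^{(g)}_q(k,\ell)$ is a sum of exactly two terms of the shape
\[
	\pm \frac{1}{n_{\alpha_q,k}\,n_{\alpha_q,\ell}}\, \chi(\ldots)\, a_{q_1'}^* a_{q_2'}\;,
\]
where $q_1'$ and $q_2'$ are specific shifts of $q$ by $\pm k$ and/or $\pm \ell$, determined by the sign convention \cref{eq:pmk}. Since the characteristic functions take values in $\{0,1\}$, the triangle inequality reduces the estimate to bounding a single such term. I would apply $\lVert a_{q_1'}^*\rVert_\op \le 1$ only to the outer creation operator, which yields
\[
	\lVert a_{q_1'}^* a_{q_2'} \xi_t\rVert \le \lVert a_{q_2'} \xi_t\rVert = \sqrt{\langle \xi_t, a_{q_2'}^* a_{q_2'} \xi_t\rangle} \le \sqrt{C}\,N^{-r/2},
\]
using \cref{eq:nqcondition} with $q' = q_2' \in \ZZZ^3$ (whose uniformity in $q'$ is crucial here, as $q_2'$ depends on $q,k,\ell$). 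Summing the two contributions and collecting the prefactor $(n_{\alpha_q,k}\, n_{\alpha_q,\ell})^{-1}$ gives \cref{eq:cEgxitbound}.

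The adjoint case $\cE^{(g)}_q(k,\ell)^*$ requires no new ingredient: since $(a_{q_1'}^* a_{q_2'})^* = a_{q_2'}^* a_{q_1'}$, I would instead bound $\lVert a_{q_2'}^* a_{q_1'} \xi_t\rVert \le \lVert a_{q_1'} \xi_t\rVert$ and invoke \cref{eq:nqcondition} with $q' = q_1'$. I do not expect any serious obstacle: the argument is a direct refinement of \cref{lem:cEbound}, and the essential structural input is that \cref{eq:cEg} — unlike the sum-over-$\beta$ estimate \cref{eq:cEsumbound} — already contains no momentum summation, so the bootstrap small factor attaches cleanly to each of the two summands without any further Cauchy--Schwarz or averaging step.
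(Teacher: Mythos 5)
Your proposal is correct and follows essentially the same route as the paper's proof: bound the creation operator in each of the two summands of \cref{eq:cEg} by $\lVert a_{q_1'}^*\rVert_\op\le 1$, keep the annihilation operator, and extract $\lVert a_{q_2'}\xi_t\rVert=\langle\xi_t,a_{q_2'}^*a_{q_2'}\xi_t\rangle^{1/2}\le C N^{-r/2}$ from the bootstrap hypothesis, with the adjoint case handled symmetrically. No gaps.
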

\begin{proof}
From \cref{eq:cEg} and using $ \Vert a_{q'}^* \Vert_{\op} = 1 $ we have
\begin{equation}
\begin{aligned}
\label{eq:cEgdecomposition2}
	\Vert \cE^{(g)}_q(k, \ell) \xi_t \Vert
	& \le \frac{\Vert a_{q \mp \ell}^* a_{q \mp k} \xi_t \Vert + \Vert a_q^* a_{q \mp \ell \pm k} \xi_t \Vert}{n_{\alpha_q, k} n_{\alpha_q, \ell}} \le \frac{\Vert a_{q \mp k} \xi_t \Vert + \Vert a_{q \mp \ell \pm k} \xi_t \Vert}{n_{\alpha_q, k} n_{\alpha_q, \ell}} \;.
\end{aligned}
\end{equation}
By the bootstrap assumption, $ \Vert a_{q'} \xi_t \Vert = \langle \xi_t, a_{q'}^* a_{q'} \xi_t \rangle^{\frac 12} \le C N^{-\frac r2} $, which renders the desired bound for $ \Vert \cE^{(g)}_q(k, \ell) \xi_t \Vert $. Analogously one estimates $ \Vert \cE^{(g)}_q(k, \ell)^* \xi_t \Vert $.
\end{proof}

We also need to bound the $ c^*$- and $c$-operators.

\begin{lemma}[Bounds for $ c^*, c $]
\label{lem:ccbound}
Let $ k \in \Gamma^{\nor} $ and consider a family of bounded functions $ (g^{(\alpha)})_{\alpha \in \cI_{k}}$ with $g^{(\alpha)}: \BFc \times \BF \to \RRR $ such that
\[\supp(g^{(\alpha)}) \subseteq \{ (p, h : \alpha): p = h \pm k \}\;.\]
Then for all $ f \in \ell^2(\cI_{k}) $ we have
\begin{equation}
\label{eq:ccboundgeneral}
\begin{aligned}
	\Big\Vert \sum_{\alpha \in \cI_{k}} f_\alpha c^*(g^{(\alpha)}) \psi \Big\Vert
	& \le \Vert f \Vert_2  \max_{\alpha \in \cI_{k}} \left(n_{\alpha, k} \lVert  g^{(\alpha)} \rVert_\infty \right) \left\lVert (\cN + 1)^{\frac{1}{2}} \psi \right\rVert\;,\\
	\Big\Vert \sum_{\alpha \in \cI_{k}} f_\alpha c(g^{(\alpha)}) \psi \Big\Vert
	& \le \Vert f \Vert_2  \max_{\alpha \in \cI_{k}} \left( n_{\alpha, k} \lVert  g^{(\alpha)} \rVert_\infty \right) \lVert \cN^{\frac{1}{2}} \psi \rVert\;.
\end{aligned}
\end{equation}
In particular
\begin{equation}
\label{eq:ccbound}
	\Big\Vert \sum_{\alpha \in \cI_{k}} f_\alpha c^*_\alpha(k) \psi \Big\Vert
	\le \Vert f \Vert_2 \Vert (\cN + 1)^{\frac{1}{2}} \psi \Vert\;, \qquad
	\Big\Vert \sum_{\alpha \in \cI_{k}} f_\alpha c_\alpha(k) \psi \Big\Vert
	\le \Vert f \Vert_2 \Vert \cN^{\frac{1}{2}} \psi \Vert\;.
\end{equation}
\end{lemma}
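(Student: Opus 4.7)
The plan is to reduce the sum over patches to a single sum over hole indices $h \in \BF$, then apply a standard CAR-based bound. The key structural observation is that the patches $B_\alpha$ are pairwise disjoint, so for each $h$ there is at most one $\alpha(h) \in \cI_k$ with $h \in B_\alpha$, and the support condition $\supp(g^{(\alpha)}) \subseteq \{(p,h:\alpha) : p = h\pm k\}$ then determines the particle $p = h \pm k$ uniquely (with the sign $\epsilon_{\alpha(h)} \in \{+,-\}$ fixed by $\alpha(h) \in \cI_k^{\pm}$). Setting
\[
	W(h) := f_{\alpha(h)}\, g^{(\alpha(h))}(h \pm k, h)\, \chi\big(h \in \BF \cap B_{\alpha(h)},\ h \pm k \in \BFc \cap B_{\alpha(h)}\big),
\]
I can rewrite $\sum_{\alpha} f_\alpha c^*(g^{(\alpha)}) = \sum_h W(h)\, a_{h\pm k}^* a_h^*$ and likewise for $c(g^{(\alpha)})$. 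Since the supports in $h$ are disjoint across patches,
\[
	\lVert W \rVert_{\ell^2}^2 = \sum_\alpha |f_\alpha|^2 \!\!\sum_{h:\alpha,\, h\pm k:\alpha} \!\!|g^{(\alpha)}(h\pm k, h)|^2 \le \lVert f \rVert_2^2\, \max_{\alpha \in \cI_{k}} \big(n_{\alpha, k}\, \lVert g^{(\alpha)}\rVert_\infty\big)^2,
\]
using the definition \cref{eq:n} of $n_{\alpha,k}$ as the cardinality of the support.

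For the annihilation bound I would compute directly by triangle and Cauchy--Schwarz:
\[
	\Big\lVert \sum_h \overline{W(h)}\, a_h a_{h\pm k} \psi \Big\rVert \le \lVert W \rVert_2 \Big( \sum_h \lVert a_h a_{h\pm k} \psi \rVert^2 \Big)^{1/2},
\]
and bound the remaining sum by $\lVert \cN_\delta^{1/2}\psi\rVert^2$ up to a harmless constant. This is the step requiring the patch structure: for each admissible $(h, h\pm k)$ the two momenta straddle $\partial \BF$ (since $h\in \BF$, $h\pm k \in \BFc$), so the kinetic-energy difference $\hbar^2(|h\pm k|^2 - |h|^2) = 2\hbar^2 (h \cdot(\pm k) + |k|^2/2) $ is, by the $\cI_k$-condition $|k \cdot \hat\omega_\alpha| \ge N^{-\delta}$ and $h \approx \kF \hat\omega_\alpha$, at least of order $\kappa N^{-1/3-\delta}$. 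Consequently at most one of $h$, $h\pm k$ can lie in the gap $\cG$ of \cref{eq:cG}, so $\lVert a_h a_{h\pm k} \psi \rVert \le \lVert a_{q(h)} \psi\rVert$ where $q(h) \in \{h, h\pm k\} \setminus \cG$. Because the maps $h \mapsto h$ and $h \mapsto h\pm k$ are both injective (and the joint map has multiplicity at most $2$ when some $h$'s sit in $\cI_k^+$ and others in $\cI_k^-$), summing yields $\sum_h \lVert a_{q(h)}\psi\rVert^2 \le C \lVert \cN_\delta^{1/2}\psi\rVert^2$. This proves the second inequality of \cref{eq:ccboundgeneral}.

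For the creation bound I would expand
\[
	\Big\lVert \sum_h W(h)\, a_{h\pm k}^* a_h^* \psi \Big\rVert^2 = \sum_{h, h'} \overline{W(h')} W(h)\, \langle \psi,\, a_{h'} a_{h'\pm k} a_{h\pm k}^* a_h^* \psi \rangle,
\]
and normal-order using the CAR. Since $h,h' \in \BF$ and $h\pm k, h'\pm k \in \BFc$, the cross anticommutators $\{a_{h'}, a_{h\pm k}^*\}$ and $\{a_{h'\pm k}, a_h^*\}$ vanish identically, while $\{a_{h'}, a_h^*\} = \delta_{h,h'}$ and $\{a_{h'\pm k}, a_{h\pm k}^*\} = \delta_{h,h'}$ (the latter using that if the signs in $h\pm k$ and $h'\pm k$ coincide the equality forces $h=h'$, and if they differ the patches of $h$ and $h'$ are separated by more than $2R \ge 2|k|$ so the identity cannot hold). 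This yields the identity
\[
	\Big\lVert \sum_h W(h) a_{h\pm k}^* a_h^* \psi \Big\rVert^2 = \sum_h |W(h)|^2 \langle \psi,\, (1 - a_h^* a_h - a_{h\pm k}^* a_{h\pm k})\psi\rangle + \Big\lVert \sum_h \overline{W(h)}\, a_h a_{h\pm k} \psi \Big\rVert^2,
\]
and bounding the first term by $\lVert W \rVert_2^2 \lVert \psi\rVert^2$ (the subtracted terms are non-negative) and invoking the annihilation bound on the second yields the first inequality of \cref{eq:ccboundgeneral}. The specialization \cref{eq:ccbound} follows upon noting $c^*_\alpha(k) = c^*(d_{\alpha,k})$ with $n_{\alpha,k} \lVert d_{\alpha,k}\rVert_\infty = 1$.

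The main obstacle is the $\cN_\delta$--versus--$\cN$ distinction: to avoid the cruder number operator one must invoke the kinetic-energy gap argument above. Tracking the constants sharply enough to obtain the bound without an explicit constant $C$ on the right-hand side requires a careful treatment of the multiplicity in the map $h \mapsto q(h)$, and for the creation bound the explicit CAR identity for the ``diagonal'' contribution is what gives the clean $\lVert(\cN_\delta+1)^{1/2}\psi\rVert$ with constant $1$.
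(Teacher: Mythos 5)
Your proof is correct, and it is self-contained where the paper is not: the paper's proof of \cref{lem:ccbound} consists of a citation to \cite{BNPSS21} (Lemma~5.4 there), whose argument runs along essentially the same lines you use — reduction of the patch sum to a single sum over hole momenta, the CAR, and the kinetic-energy gap consideration that the paper itself quotes again in the proof of \cref{lem:supperbound}. So you have effectively reconstructed the cited proof rather than found a new route. The one point you flag as delicate, the multiplicity of $h \mapsto q(h)$, in fact resolves cleanly in your favour: $q(h)$ is either the hole $h \in \BF \cap B_{\alpha(h)}$ or the particle $h \pm k \in \BFc \cap B_{\alpha(h)}$, and since the patches are pairwise disjoint and $\BF \cap \BFc = \emptyset$, a collision $q(h_1) = q(h_2)$ with $h_1 \neq h_2$ would require either two distinct holes to coincide, two distinct particles to coincide (in the mixed-sign case $h_1 + k = h_2 - k$ the common point would have to lie in two different patches), or a hole to equal a particle — all impossible. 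Hence $h \mapsto q(h)$ is injective, $\sum_h \lVert a_{q(h)}\psi\rVert^2 \le \lVert \cN_\delta^{1/2}\psi\rVert^2$ with constant exactly $1$, and combined with your normal-ordering identity (which produces precisely the $+1$ in $(\cN_\delta+1)^{1/2}$ from the $\delta_{h,h'}$ contraction) this yields the stated bounds without any extraneous constant.
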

\begin{proof}
For \cref{eq:ccboundgeneral} see \cite[Lemma~5.4]{BNPSS21} (note that in \cite{BNPSS21}, a factor of $ n_{\alpha, k}^{-1} $ is included in the definition of $ c^\sharp(g) $ instead of the weight function $ g $). The bounds \cref{eq:ccbound} follow setting $ g^{(\alpha)} = d_{\alpha, k} $, defined in \cref{eq:d}.
\end{proof}

Analogous bounds could be derived for the operators $ c^\sharp(g_{q, k}) $, which just differ from $ c^\sharp_{\alpha_q}(k) $ by an additional factor of $ (\delta_{q, p} + \delta_{q, h}) $, see \cref{eq:g}. However, these Kronecker deltas significantly reduce the number of summands, from $ \sim n_{\alpha_q, k} $ to $ \sim 1 $. Accordingly the following lemma provides sharper bounds. In particular, applied to $ \xi_t $ one achieves an even better bound depending on $ \langle \xi_t, a_{q'}^* a_{q'} \xi_t \rangle $, which is also used in the bootstrap.

\begin{lemma}[Bootstrap Bounds on $ c^*, c $]
\label{lem:ccgbound}
Let $ k \in \Gamma^{\nor} $, $ \alpha_q \in \cI_{k} $, and $ q \in B_{\alpha_q} $. Let $g_{q,k}$ be defined as in \cref{eq:g}. Then for all $ \psi \in \cF $ and any choice $ \sharp \in \{ \cdot, * \} $ we have
\begin{equation}
\label{eq:ccgbound1}
	\Vert c^\sharp(g_{q, k}) \psi \Vert
	\le \frac{1}{n_{\alpha_q, k}}\;.
\end{equation}

Further, suppose there is $ r \ge 0 $ and $C >0$ such that for all $ t \in [-1, 1] $ and all $ {q'} \in \ZZZ^3 $ we have
\begin{equation}
\label{eq:nqcondition2}
	\langle \xi_t, a_{q'}^* a_{q'} \xi_t \rangle
	\le C N^{-r}\;.
\end{equation}
Then there exists $C > 0$ such that for all $ t \in [-1, 1] $ we have
\begin{equation}
\label{eq:ccgbound2}
	\Vert c(g_{q, k}) \xi_t \Vert
	\le \frac{C}{n_{\alpha_q, k}} N^{-\frac{r}{2}}\;.
\end{equation}
\end{lemma}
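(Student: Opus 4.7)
The plan is to exploit the delta-function structure of $g_{q,k}$ to collapse the double sum defining $c^\sharp(g_{q,k})$ to a single term, and then apply either the operator norm bounds $\|a^\sharp_r\|_{\op} \le 1$ for \cref{eq:ccgbound1}, or the bootstrap hypothesis \cref{eq:nqcondition2} after reordering via the CAR for \cref{eq:ccgbound2}.

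Looking at \cref{eq:g}, the factor $(\delta_{h,q}+\delta_{p,q})$ combined with $\delta_{p,h\pm k}$ and the cut-off $\chi(p,h:\alpha_q) = \chi_{\BF^c\cap B_{\alpha_q}}(p)\chi_{\BF\cap B_{\alpha_q}}(h)$ selects at most one pair $(p,h)$ in the sum defining $c^\sharp(g_{q,k})$. Indeed, if $q\in\BF^c$ the term $\delta_{h,q}$ is killed by $\chi_{\BF\cap B_{\alpha_q}}(h)$, so only $p=q$, $h=q\mp k$ can contribute, provided $q\mp k\in \BF\cap B_{\alpha_q}$; symmetrically, if $q\in\BF$, only $h=q$, $p=q\pm k$ can contribute. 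Consequently one may write
\[
c^*(g_{q,k}) \;=\; \frac{\chi_*}{n_{\alpha_q,k}}\, a_{p_*}^* a_{h_*}^*,
\]
with $\chi_*\in\{0,1\}$ an indicator and with specific momenta $p_*,h_*\in\Zbb^3$ depending on $q,k$ (and the case split), and $c(g_{q,k})$ is its adjoint.

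For \cref{eq:ccgbound1}, two applications of $\|a^\sharp\|_{\op}\le 1$ immediately give $\|c^\sharp(g_{q,k})\psi\|\le n_{\alpha_q,k}^{-1}\|\psi\|$. For \cref{eq:ccgbound2}, we distinguish the two cases above. If $q\in\BF^c$, then $c(g_{q,k}) = n_{\alpha_q,k}^{-1}\chi_*\, a_{h_*} a_q$, and $\|a_{h_*}\|_{\op}\le 1$ yields $\|c(g_{q,k})\xi_t\|\le n_{\alpha_q,k}^{-1}\|a_q\xi_t\|$. If $q\in\BF$, then $c(g_{q,k}) = n_{\alpha_q,k}^{-1}\chi_*\, a_q a_{p_*}$; since $k\in\Gamma^{\nor}$ enforces $k\neq 0$, we have $p_* = q\pm k \neq q$, so by CAR we rewrite $a_q a_{p_*} = -a_{p_*} a_q$ and again bound by $n_{\alpha_q,k}^{-1}\|a_q\xi_t\|$. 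Finally, the hypothesis \cref{eq:nqcondition2} gives $\|a_q\xi_t\|^2 = \langle\xi_t,a_q^*a_q\xi_t\rangle \le C N^{-r}$, yielding \cref{eq:ccgbound2}.

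No serious obstacle arises; the only point requiring care is the case distinction $q\in\BF^c$ versus $q\in\BF$, together with the single use of the CAR in the latter case to ensure that the annihilation operator $a_q$ acts directly on $\xi_t$, so that the bootstrap assumption can be invoked for the momentum $q$ (rather than some shifted momentum) exactly as in the sharper bound claimed.
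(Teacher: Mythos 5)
Your proposal is correct and follows essentially the same route as the paper: collapse the sum via the Kronecker deltas to a single pair term, use $\Vert a^\sharp\Vert_{\op}\le 1$ for \cref{eq:ccgbound1}, and for \cref{eq:ccgbound2} strip off the spectator operator so that $\Vert a_q\xi_t\Vert^2=\langle\xi_t,a_q^*a_q\xi_t\rangle\le CN^{-r}$ can be invoked (the paper phrases the stripping as $a^*_{q\mp k}a_{q\mp k}\le 1$, you as an operator-norm bound plus one anticommutation — the same thing). Your explicit treatment of the case $q\in\BF$ fills in what the paper dismisses as "analogous."
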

We caution the reader that a bound like \cref{eq:ccgbound2} does not hold for $ c^*(g_{q, k}) $.
\begin{proof}
Statement \cref{eq:ccgbound1} follows from definition \cref{eq:g} and $ \Vert a_q^\sharp \Vert_{\op} \le 1 $: For $ q \in B_{\F}^c $, e.\,g.,
\begin{equation}
	\Vert c^*(g_{q, k}) \xi_t \Vert
	\le \frac{1}{n_{\alpha_q, k}} \Vert a_q^* a_{q \mp k}^* \xi_t \Vert
	\le \frac{1}{n_{\alpha_q, k}} \Vert \xi_t \Vert
	= \frac{1}{n_{\alpha_q, k}}\;.
\end{equation}
Concerning the stronger bound \cref{eq:ccgbound2}, in case $ q \in B_{\F}^c $, we use $a^*_{q\mp k} a_{q\mp k} \leq 1$ to get
\begin{equation}
	\Vert c(g_{q, k}) \xi_t \Vert
	 \le \frac{1}{n_{\alpha_q, k}} \Vert a_{q \mp k} a_q \xi_t \Vert \le \frac{1}{n_{\alpha,k}}\sqrt{\langle \xi_t, a^*_q a_q \xi_t\rangle} \leq \frac{1}{n_{\alpha,k}}\sqrt{C N^{-r}}\;.
	\end{equation}
The same arguments apply to $ q \in B_{\F} $.
\end{proof}

Finally, we also need to bound combinations of $ c^*$--, $c $--, and $ \cE_\alpha $--operators.
\begin{lemma}[Bounds on Combinations of $ c^*$, $c $, and $ \cE_\alpha $]
\label{lem:cccEbound}
Let $ k, \ell \in \Gamma^{\nor} $, $ \alpha \in \cI_{\ell}, \beta \in \cI_{k} \cap \cI_{\ell} $ and suppose that $\hat{V}$ is non-negative. Then there is $ C > 0 $ such that for all $ n, m \in \NNN $ and all choices of $ \sharp \in \{ \cdot, * \} $ we have
\begin{equation}
\label{eq:cccEbound}
\begin{aligned}
	\Big\Vert \sum_{\substack{\alpha \in \cI_{\ell} \\ \beta \in \cI_{k} \cap \cI_{\ell}}} (K(k)^n)_{\alpha_q, \beta} (K(\ell)^m)_{\alpha, \beta} c^\sharp_\alpha(\ell) \cE_\beta(k, \ell)^\sharp \psi \Big\Vert
	& \le \frac{(C \hat{V}_k)^n (C \hat{V}_\ell)^m}{\fn M} \Vert (\cN + 1) \psi \Vert \;,\\
	\Big\Vert \sum_{\substack{\alpha \in \cI_{\ell} \\ \beta \in \cI_{k} \cap \cI_{\ell}}} (K(k)^n)_{\alpha_q, \beta} (K(\ell)^m)_{\alpha, \beta} \cE_\beta(k, \ell)^\sharp c^\sharp_\alpha(\ell) \psi \Big\Vert
	& \le \frac{(C \hat{V}_k)^n (C \hat{V}_\ell)^m}{\fn M} \Vert (\cN + 1) \psi \Vert \;.
\end{aligned}
\end{equation}
\end{lemma}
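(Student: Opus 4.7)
The plan is to reduce both inequalities in \cref{eq:cccEbound} to the building blocks supplied by \cref{lem:Knbound}, \cref{lem:cEbound}, and \cref{lem:ccbound}. To streamline notation I would factorize the coefficient as $X_\beta := (K(k)^n)_{\alpha_q, \beta}$, $Y_{\alpha, \beta} := (K(\ell)^m)_{\alpha, \beta}$ and group the $c^\sharp$-operators into $C_\beta := \sum_{\alpha \in \cI_\ell} Y_{\alpha, \beta}\, c^\sharp_\alpha(\ell)$. From \cref{lem:Knbound} one immediately gets $\sum_\beta |X_\beta|^2 \le (C \hat V_k)^{2n} M^{-1}$ and $\sum_\beta \lVert Y_{\cdot, \beta}\rVert_2^2 \le (C \hat V_\ell)^{2m}$.

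For the first bound, the operator in question reads $T := \sum_\beta X_\beta\, C_\beta\, \cE_\beta(k,\ell)^\sharp$. I would apply \cref{lem:ccbound} with $f_\alpha = Y_{\alpha,\beta}$ to estimate $C_\beta$ acting on the vector $\cE_\beta(k,\ell)^\sharp \psi$, and then use Cauchy--Schwarz in $\beta$. The resulting scalar factor is bounded by $(C \hat V_k)^n (C \hat V_\ell)^m M^{-1}$. For the vector factor the crucial observation is that $\cN$ commutes with $\cE_\beta(k,\ell)^\sharp$, since by \cref{eq:cEalphaql} every term is of the form $a^*_{q_2} a_{q_1}$ and is thus number-preserving. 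This lets me move $(\cN_\delta + 1)^{1/2} \le (\cN+1)^{1/2}$ through $\cE_\beta(k,\ell)^\sharp$, whereupon \cref{eq:cEsumbound} gives $\sum_\beta \lVert \cE_\beta(k,\ell)^\sharp (\cN+1)^{1/2} \psi\rVert^2 \le C \fn^{-2} \lVert (\cN+1) \psi\rVert^2$, and the first inequality follows.

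For the second bound the operator is $T' := \sum_\beta X_\beta\, \cE_\beta(k,\ell)^\sharp C_\beta$. Trying to apply \cref{lem:cEbound} pointwise to $\cE_\beta(k,\ell)^\sharp C_\beta \psi$ forces one to commute $\cN$ through $C_\beta$, which swallows an extra factor of $\cN^{1/2}$ and yields only an $\lVert (\cN+1)^{3/2}\psi \rVert$-estimate that is too weak for the claimed scaling. To recover the correct $(\cN+1)$-weight I propose a duality-plus-sector argument. The key observation is that $(T')^* = \sum_\beta X_\beta\, C_\beta^*\, \cE_\beta(k,\ell)^{\sharp *}$ has precisely the form of the operator already controlled by the first inequality (the matrix $K$ is real-symmetric, and the first inequality applies to all choices of $\sharp$, hence in particular to the adjoint $\sharp$-pattern appearing here). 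Decomposing $\psi = \sum_n \psi^{(n)}$ into $\cN$-eigensectors, and noting that $T'$ shifts particle number by a fixed amount $\pm 2$, the images $T'\psi^{(n)}$ are pairwise orthogonal and $\lVert T'\psi \rVert^2 = \sum_n \lVert T'\psi^{(n)}\rVert^2$.

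For each $n$, duality gives $\lVert T'\psi^{(n)}\rVert = \sup_{\lVert \eta\rVert = 1} |\langle (T')^*\eta, \psi^{(n)}\rangle|$ with $\eta$ restricted to the shifted sector of particle number $n \pm 2$. Applying the first inequality to $(T')^* \eta$ and then using $\lVert (\cN + 1) \eta\rVert \le C(n+1) \lVert\eta\rVert$ on that sector yields $\lVert T'\psi^{(n)}\rVert \le \tfrac{C(C \hat V_k)^n (C \hat V_\ell)^m (n+1)}{\fn M}\, \lVert \psi^{(n)}\rVert$. Summing the squares in $n$ reconstitutes $\sum_n (n+1)^2 \lVert \psi^{(n)}\rVert^2 = \lVert (\cN+1)\psi \rVert^2$ and gives the claimed bound. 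The main obstacle is precisely identifying this duality trick: a direct estimate of $T'$ would introduce a spurious $(\cN+1)^{1/2}$-factor, whereas exploiting that $T'$ and $(T')^*$ differ only by a fixed sector shift converts that extra factor into a harmless scalar multiplier bounded by $(n+1)$ on each sector.
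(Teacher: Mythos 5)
Your proof is correct, and it splits into a part that coincides with the paper and a part that is genuinely different. For the first inequality in \cref{eq:cccEbound} you do exactly what the paper does: Cauchy--Schwarz in $\beta$, \cref{lem:ccbound} for the $\alpha$--sum, \cref{lem:Knbound} for the matrix elements, the fact that $\cE_\beta(k,\ell)^\sharp$ is number-preserving so that $(\cN+1)^{1/2}$ can be moved through it, and finally \cref{eq:cEsumbound}. For the second inequality the paper instead writes $\cE_\beta(k,\ell)^\sharp c^\sharp_\alpha(\ell) = c^\sharp_\alpha(\ell)\cE_\beta(k,\ell)^\sharp + [\cE_\beta(k,\ell)^\sharp, c^\sharp_\alpha(\ell)]$, reduces the first summand to the already-proven first line, and evaluates the commutator explicitly: it is a quadratic expression in $a^\sharp$ whose coefficient function has $\ell^2$--norm of order $\fn^{-3}M^{-2}N^{\frac13}$, so that \cref{eq:astarastarbound} shows this remainder is of order $\fn^{-1}M^{-1}N^{-\frac13+\delta}\Vert(\cN+1)^{\frac12}\psi\Vert$, i.\,e.\ strictly smaller than needed. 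Your duality-plus-sector argument replaces this computation: since $K(k)$ is real symmetric and $\cE_\beta(k,\ell)^* = \cE_\beta(\ell,k)$, the adjoint of the second operator is precisely an operator of the first type with flipped $\sharp$--pattern (all patterns are covered by the first inequality), the operator shifts particle number by exactly $\pm2$ so the images of the $\cN$--eigensectors are orthogonal, and on the shifted sector the weight $\Vert(\cN+1)\eta\Vert$ degenerates to a scalar $(n+3)\Vert\eta\Vert$, which resums to $\Vert(\cN+1)\psi\Vert$ (the factor $3$ being absorbed into $C$ since $n\ge1$). This is sound and shorter; what it buys is avoiding the explicit commutator, while what the paper's route buys is the additional information that the two operator orderings differ only at subleading order. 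The only point worth making explicit in a write-up is that the duality step $\langle\eta, T'\psi^{(n)}\rangle = \langle (T')^*\eta, \psi^{(n)}\rangle$ is legitimate because these operators are finite sums of bounded operators on Fock space.
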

\begin{proof}
To establish the first bound, we use the fact that $ \cE_\beta(k, \ell)^\sharp $ commutes with $ \cN $:
\begin{align*}
	&\Big\Vert \sum_{\substack{\alpha \in \cI_{\ell} \\ \beta \in \cI_{k} \cap \cI_{\ell}}} (K(k)^n)_{\alpha_q, \beta} (K(\ell)^m)_{\alpha, \beta} c^\sharp_\alpha(\ell) \cE_\beta(k, \ell)^\sharp \psi \Big\Vert\\
	& \le \sum_{\beta \in \cI_{k} \cap \cI_{\ell}} |(K(k)^n)_{\alpha_q, \beta}| \Big\Vert \sum_{\alpha \in \cI_{\ell}} (K(\ell)^m)_{\alpha, \beta} c^\sharp_\alpha(\ell) \cE_\beta(k, \ell)^\sharp \psi \Big\Vert\\
	& \overset{\cref{eq:ccbound}}{\le} \sum_{\beta \in \cI_{k} \cap \cI_{\ell}} |(K(k)^n)_{\alpha_q, \beta}| \Big( \sum_{\alpha \in \cI_{\ell}} |(K(\ell)^m)_{\alpha, \beta}|^2 \Big)^{\frac 12} \Vert (\cN + 1)^{\frac 12} \cE_\beta(k, \ell)^\sharp \psi \Big\Vert \\
	& \overset{\cref{eq:Knbound}}{\le} (C \hat{V}_k)^n (C \hat{V}_\ell)^m M^{-\frac{2}{3}} \sum_{\beta \in \cI_{k} \cap \cI_{\ell}} \Vert \cE_\beta(k, \ell)^\sharp (\cN + 1)^{\frac 12} \psi \Vert\\
	& \le (C \hat{V}_k)^n (C \hat{V}_\ell)^m M^{-\frac{2}{3}} \Big( \sum_{\beta \in \cI_{k} \cap \cI_{\ell}} 1 \Big)^{\frac 12} \Big( \sum_{\beta \in \cI_{k} \cap \cI_{\ell}} \Vert \cE_\beta(k, \ell)^\sharp (\cN + 1)^{\frac 12} \psi \Vert^2 \Big)^{\frac{1}{2}}\\
	& \overset{\cref{eq:cEsumbound}}{\le} (C \hat{V}_k)^n (C \hat{V}_\ell)^m \fn^{-1} M^{-1} \Vert (\cN + 1) \psi \Vert \;. \tagg{eq:cccEbound1stline}
\end{align*}

For the second line of \cref{eq:cccEbound}, we start with
\begin{equation}
\begin{aligned}
\label{eq:triangleestimateKKcEc}
	&\Big\Vert \sum_{\substack{\alpha \in \cI_{\ell} \\ \beta \in \cI_{k} \cap \cI_{\ell}}} (K(k)^n)_{\alpha_q, \beta} (K(\ell)^m)_{\alpha, \beta} \cE_\beta(k, \ell)^\sharp c^\sharp_\alpha(\ell) \psi \Big\Vert\\
	& \le \Big\Vert \sum_{\substack{\alpha \in \cI_{\ell} \\ \beta \in \cI_{k} \cap \cI_{\ell}}} (K(k)^n)_{\alpha_q, \beta} (K(\ell)^m)_{\alpha, \beta} c^\sharp_\alpha(\ell) \cE_\beta(k, \ell)^\sharp \psi \Big\Vert\\
	& \quad + \Big\Vert \sum_{\substack{\alpha \in \cI_{\ell} \\ \beta \in \cI_{k} \cap \cI_{\ell}}} (K(k)^n)_{\alpha_q, \beta} (K(\ell)^m)_{\alpha, \beta} [\cE_\beta(k, \ell)^\sharp, c^\sharp_\alpha(\ell)] \psi \Big\Vert \;.
\end{aligned}
\end{equation}
The first term is bounded by \cref{eq:cccEbound1stline}. The commutator can be explicitly evaluated:
\[
\begin{aligned}\relax
	&[\cE_\beta(k, k'), c_\alpha(k'')]\\
	& = - \frac{\delta_{\alpha, \beta}}{n_{\alpha, k} n_{\alpha, k'} n_{\alpha, k''}} \left(
		\sum_{h : \alpha} f^{(\alpha)}_{k, k', k''}(h) a_{h \pm k''} a_{h \pm k' \mp k}
		- \sum_{p : \alpha} g^{(\alpha)}_{k, k', k''}(p) a_{p \mp k''} a_{p \mp k' \pm k}
	\right)  \;,
\end{aligned}
\]
with 
\begin{equation}
	f^{(\alpha)}_{k, k', k''}(h) := \chi \left( \substack{
		h \pm k'' \in B_\alpha \cap B_{\F}^c\\
		h \pm k' \in B_\alpha \cap B_{\F}^c\\
		h \pm k' \mp k \in B_\alpha \cap B_{\F}
		} \right)\;, \qquad
	g^{(\alpha)}_{k, k', k''}(p) := \chi \left( \substack{
		p \mp k'' \in B_\alpha \cap B_{\F}\\
		p \mp k' \in B_\alpha \cap B_{\F}\\
		p \mp k' \pm k \in B_\alpha \cap B_{\F}^c
		} \right) \;.
\end{equation}
The three commutators for different choices of $ \sharp \in \{ \cdot, * \} $ can easily be deduced via $ \cE_\beta(k, \ell)^* = \cE_\beta(\ell, k) $ and $ [\cE_\beta(k, \ell)^*, c_\alpha(\ell)^*] = -([\cE_\beta(k, \ell), c_\alpha(\ell)])^* $. If in both instances $ \sharp = * $ is chosen, we can bound
\[
\begin{aligned}
	&\Big\Vert \sum_{\substack{\alpha \in \cI_{\ell} \\ \beta \in \cI_{k} \cap \cI_{\ell}}} (K(k)^n)_{\alpha_q, \beta} (K(\ell)^m)_{\alpha, \beta} [\cE_\beta(k, \ell)^*, c^*_\alpha(\ell)] \psi \Big\Vert\\
	& \le \Big\Vert \!\sum_{\alpha \in \cI_{k} \cap \cI_{\ell}} \!\!  \frac{(K(k)^n)_{\alpha_q, \alpha} (K(\ell)^m)_{\alpha, \alpha}}{n_{\alpha, k} n_{\alpha, \ell}^2}
		\Big( \sum_{h : \alpha} f^{(\alpha)}_{k, \ell, \ell}(h) a^*_{h \pm \ell \mp k} a^*_{h \pm \ell} 
		- \sum_{p : \alpha} g^{(\alpha)}_{k, \ell, \ell}(p) a^*_{p \mp \ell \pm k} a^*_{p \mp \ell} \Big) \psi
	\Big\Vert \,.
\end{aligned}
\]
Recall the following elementary bounds for $ f \in \ell^2(\ZZZ^3 \times \ZZZ^3) $ from \cite[Lemma~3.1]{BPS14}:
\begin{equation}
\label{eq:astarastarbound}
\begin{aligned}
	\Big\Vert \sum_{k_1, k_2 \in \ZZZ^3} f(k_1, k_2) a_{k_1}^* a_{k_2}^* \psi \Big\Vert
	& \le 2 \Vert f \Vert_2 \Vert (\cN + 1)^{\frac 12} \psi \Vert \;,\\
	\Big\Vert \sum_{k_1, k_2 \in \ZZZ^3} f(k_1, k_2) a_{k_1} a_{k_2} \psi \Big\Vert
	& \le \Vert f \Vert_2 \Vert \cN^{\frac{1}{2}} \psi \Vert \;.
\end{aligned}
\end{equation}
Introducing the functions
\begin{equation}
\begin{aligned}
	f_{k, k', k''}(h) := &\sum_{\alpha \in \cI_{k} \cap \cI_{\ell}} \frac{\chi(h : \alpha)}{n_{\alpha, k} n_{\alpha, k'} n_{\alpha, k''}} (K(k)^n)_{\alpha_q, \alpha} (K(k'')^m)_{\alpha, \alpha} f^{(\alpha)}_{k, k', k''}(h) \;,\\
	g_{k, k', k''}(p) := &\sum_{\alpha \in \cI_{k} \cap \cI_{\ell}} \frac{\chi(p : \alpha)}{n_{\alpha, k} n_{\alpha, k'} n_{\alpha, k''}} (K(k)^n)_{\alpha_q, \alpha} (K(k'')^m)_{\alpha, \alpha} g^{(\alpha)}_{k, k', k''}(p)\\
\end{aligned}
\end{equation}
we estimate
\begin{equation}
\begin{aligned}
	&\Big\Vert \sum_{\substack{\alpha \in \cI_{\ell} \\ \beta \in \cI_{k} \cap \cI_{\ell}}} (K(k)^n)_{\alpha_q, \beta} (K(\ell)^m)_{\alpha, \beta} [\cE_\beta(k, \ell)^*, c^*_\alpha(\ell)] \psi \Big\Vert\\
	& \le \Big\Vert \sum_{h \in \BF} f_{k, \ell, \ell}(h) a^*_{h \pm \ell \mp k} a^*_{h \pm \ell} \psi \Big\Vert
	+ \Big\Vert \sum_{p \in \BFc} g_{k, \ell, \ell}(p) a^*_{p \mp \ell \pm k} a^*_{p \mp \ell} \psi \Big\Vert \\
	& \le 2 \big( \Vert f_{k, \ell, \ell} \Vert_2 + \Vert g_{k, \ell, \ell} \Vert_2 \big)
		\Vert (\cN + 1)^{\frac{1}{2}} \psi \Vert \;,
\end{aligned}
\end{equation}
where in the last line we used \cref{eq:astarastarbound} with $ f(h \pm \ell \mp k, h \pm \ell) = f_{k, \ell, \ell}(h) $, so $ \Vert f \Vert_2 =  \Vert f_{k, \ell, \ell} \Vert_2 $, and $ f(p \mp \ell \pm k, p \mp \ell) = g_{k, \ell, \ell}(h) $, so $ \Vert f \Vert_2 =  \Vert g_{k, \ell, \ell} \Vert_2 $. To estimate $ \Vert f_{k, \ell, \ell} \Vert_2 $, note that
\begin{equation}
	|f^{(\alpha)}_{k, \ell, \ell}(h)| \le 1 \quad \Rightarrow \quad
	|f_{k, \ell, \ell}(h)| \overset{\cref{eq:Knbound}}{\le} (C \hat{V}_k)^n (C \hat{V}_\ell)^m \fn^{-3} M^{-2} \;.
\end{equation}
Further, the support of $ f_{k, \ell, \ell} $ only contains holes with a distance $ \le R $ to the Fermi surface $ \partial B_{\F} $, whose surface area is of order $ N^{\frac{2}{3}} $. Thus
\begin{equation}
\begin{aligned}
	\Vert f_{k, \ell, \ell} \Vert_2
	& \le (C \hat{V}_k)^n (C \hat{V}_\ell)^m \fn^{-3} M^{-2} \lvert \supp(f_{k, \ell, \ell}) \rvert^{\frac{1}{2}}\\
	& \le (C \hat{V}_k)^n (C \hat{V}_\ell)^m \fn^{-3} M^{-2} N^{\frac{1}{3}}
	= (C \hat{V}_k)^n (C \hat{V}_\ell)^m \fn^{-1} M^{-1} N^{- \frac{1}{3} + \delta} \;.
\end{aligned}
\end{equation}
The same bound applies to $ \Vert g_{k, \ell, \ell} \Vert_2 $. Therefore, the second term on the r.\,h.\,s.\ of \cref{eq:triangleestimateKKcEc} for $ \sharp = * $ is bounded by
\begin{equation}
\begin{aligned}
	&\Big\Vert \sum_{\substack{\alpha \in \cI_{\ell} \\ \beta \in \cI_{k} \cap \cI_{\ell}}} (K(k)^n)_{\alpha_q, \beta} (K(\ell)^m)_{\alpha, \beta} [\cE_\beta(k, \ell)^*, c^*_\alpha(\ell)] \psi \Big\Vert\\
	& \le  (C \hat{V}_k)^n (C \hat{V}_\ell)^m \fn^{-1} M^{-1} N^{- \frac{1}{3} + \delta} \Vert (\cN + 1)^{\frac 12} \psi \Vert \;.
\end{aligned}
\end{equation}
Since $ \delta < \frac{1}{6}< \frac{1}{3} $ and $ (\cN + 1)^{\frac{1}{2}} \le (\cN + 1) $, this is smaller than the required bound in the second line of \cref{eq:cccEbound}. So we established the second line of \cref{eq:cccEbound} for $ \sharp = * $ in both places. The bound for the three other choices of $ \sharp \in \{ \cdot, * \} $ is obtained analogously.
\end{proof}

The following lemma was already proved in \cite[Lemma~7.2]{BNPSS21}, using Gr\"onwall's lemma.
\begin{lemma}[Stability of Number Operators]
\label{lem:gronwall}
Assume that $\hat{V}$ is non-negative and compactly supported. Then for all $m \in \Nbb_0$ there exists $ C_m > 0 $ such that for all $ t \in [-1, 1] $
\begin{equation}
\label{eq:gronwall}
	e^{tS} (\cN + 1)^m e^{-tS} \le e^{C_m |t|} (\cN + 1)^m\;.
\end{equation}
\end{lemma}

\section{Controlling the Bosonized Terms}
\label{sec:bosonizedterms}
We now use the bounds from the previous section to derive estimates on the expectation values of $ \ad^n_{q, (\b)} $, defined in \cref{eq:adnb}, and on the commutation error
\begin{equation}
\label{eq:cE2n}
	\cE_{n, q} := [S, \ad^{n-1}_{q, (\b)}] - \ad^{n}_{q, (\b)}\;, \qquad n \in \NNN\;.
\end{equation}
\begin{lemma}[Bound on the Commutation Error]
\label{lem:cE2estimate}
Suppose there is $ r \in [0, \frac{2}{3}] $ such that for all $ t \in [-1, 1] $ and $ q' \in \ZZZ^3 $ we have (with $ C $ independent of $ t, q' $)
\begin{equation}
\label{eq:nqcondition3}
	\langle \xi_t, a_{q'}^* a_{q'} \xi_t \rangle
	\le C N^{-r}\;.
\end{equation}
Further, suppose that $ \hat{V}$ is non-negative and compactly supported. Then there exists a constant $ \fC_1 > 0 $ such that\footnote{We use $ \fC_1 $ instead of $ C $ to explicitly track the $ n $-dependence of the constants. This will be important for ensuring that sums over $ n $ converge.} for all $ n, N \in \NNN $ and $ t \in [-1, 1] $ we have
\begin{equation}
\label{eq:cE2estimate}
	|\langle \xi_t, \cE_{n, q} \xi_t \rangle|
	\le \fC_1^n e^{C |t|} \; N^{- \frac{2}{3}+ \frac{3}{2} \delta - \frac r2} \;.
\end{equation}
\end{lemma}

\begin{proof}
If $ q $ is not inside any patch, then the statement is trivial since $ \ad^{n-1}_{q, (\b)} = 0  = \ad^n_{q, (\b)} $, thus also $\cE_{n, q} = 0 $. So we may assume that $ q \in B_{\alpha_q} $ for some $ 1 \le \alpha_q \le M $. Recall that $ \ad^n_{q, (\b)} $ is given by $ \bA$, $\bB $, and $ \bC $ (for even $ n $) or $ \bD$, $\bE $, and $ \bF $ (for odd $ n $) as defined in \cref{eq:ABC,eq:DEF}. We define
\begin{equation}
\label{eq:cEABCDEF}
\begin{aligned}
	\cE^{(A)}_n & := [S, \bA_n] = 0\\
	\cE^{(B)}_n & := [S, \bB_n] - \bD_{n, 1} - \bE_{n+1}\\
	\cE^{(C)}_{n-m, m} & := [S, \bC_{n-m, m}] - \bD_{n-m+1, m} - \bF_{n-m, m+1}\\
	\cE^{(D)}_{n-m, m} & := [S, \bD_{n-m, m}] - \bA_{n+1} - \bC_{n-m+1, m} - \bC_{n-m, m+1}\\
	\cE^{(E)}_n & := [S, \bE_n] - \bA_{n+1} - \bC_{n, 1} - \bB_{n+1}\\
	\cE^{(F)}_{n-m, m} & := [S, \bF_{n-m, m}] - \bA_{n+1} - \bC_{n-m+1, m} - \bC_{n-m, m+1}\;.
\end{aligned}
\end{equation}
(In the proof of Lemma \ref{lem:shoelace}, we will see that the subtracted terms, such as $ \bD_{n, 1} + \bE_{n+1} $ for $ \cE^{(B)} $, are the bosonization approximation of the commutators. So $ \cE^{(A)}_n$, \ldots,  $\cE^{(F)}_{n-m, m} $ are the deviations from the bosonization approximation.)
Then we may express the commutation error for $ n \ge 1 $ as
\begin{equation}
\label{eq:cE2nplus1}
	\cE_{n+1, q} = \begin{cases}
		\displaystyle \cE^{(B)}_n + (\cE^{(B)}_n)^* + \sum_{m=1}^{n-1} \binom{n}{m} \cE^{(C)}_{n-m, m} \quad &\text{for } n: \text{even}\\
		\displaystyle \cE^{(E)}_n + (\cE^{(E)}_n)^* + \sum_{m=1}^{s} \binom{n}{m} \cE^{(D)}_{n-m, m} + \sum_{m=1}^{s} \binom{n}{m} \cE^{(F)}_{m, n-m} \quad &\text{for } n: \text{odd}\;,\\
	\end{cases}
\end{equation}
where $ n = 2s+1 $ for odd $ n $. For $ n = 0 $, we have $ \cE_1 = 0 $, since $ \ad^1_{q, (\b)} = [S, a_q^* a_q] = [S, \ad^0_{q, (\b)}] $ (see also Appendix \ref{app:motivationbosocc}).

\smallskip

\noindent \underline{Bounding $ \cE^{(B)} $:} Recall that $ K(k) $ is symmetric and that $ k \in \tilde{\cC}^q $ implies $ \alpha_q \in \cI_{k} $. For this proof, let us adopt the convention that $ \sum_k $ denotes a sum over $ k \in \tilde{\cC}^q \cap \ZZZ^3$ and $ \sum_{k'} $ a sum over all $ k' \in \north $ such that $ \alpha_q \in \cI_{k'} $. A straightforward computation starting from the definitions \cref{eq:ABC,eq:DEF} and using \cref{lem:approxCCR,lem:dg} renders
\begin{align*}
	|\langle \xi_t, \cE^{(B)}_n \xi_t \rangle|
	& = \Bigg\vert \frac{1}{2} \sum_{k, k'} \sum_{\substack{\alpha \in \cI_{k'} \\ \alpha_1 \in \cI_{k}}} K(k')_{\alpha, \alpha_q} (K(k)^n)_{\alpha_q, \alpha_1}
		\\ & \hspace{6em} \times \langle \xi_t, \big( c_{\alpha}(k') \cE^{(g)}_q(k', k) c_{\alpha_1}(k) + \cE^{(g)}_q(k', k) c_{\alpha}(k') c_{\alpha_1}(k) \big) \xi_t \rangle\\
	& \quad + \frac{1}{2} \sum_{k, k'} \sum_{\substack{\alpha \in \cI_{k'} \\ \beta \in \cI_{k} \cap \cI_{k'}}} K(k')_{\alpha, \beta} (K(k)^n)_{\alpha_q, \beta}
		\\ & \hspace{6em} \times \langle \xi_t, \big( c^*(g_{q, k}) c^*_{\alpha}(k') \cE_\beta(k, k') + c^*(g_{q, k}) \cE_\beta(k, k') c^*_{\alpha}(k') \big) \xi_t \rangle \Bigg\vert\;.
\end{align*}
By the Cauchy--Schwartz inequality
\begin{equation}
	|\langle \xi_t, \cE^{(B)}_n \xi_t \rangle| \leq (\I^{(B)}) + (\II^{(B)}) + (\III^{(B)}) + (\IV^{(B)})\;,
\end{equation}
where
{\allowdisplaybreaks
\begin{align*}
	(\I^{(B)}) & := \frac{1}{2} \sum_{k, k'}
		\Big\Vert \sum_{\alpha \in \cI_{k'}} K(k')_{\alpha_q, \alpha} c^*_{\alpha}(k') \xi_t \Big\Vert
		\Big\Vert \sum_{\alpha_1 \in \cI_{k}} (K(k)^n)_{\alpha_q, \alpha_1} \cE^{(g)}_q(k', k) c_{\alpha_1}(k) \xi_t \Big\Vert \;,\\
	(\II^{(B)})& := \frac{1}{2} \sum_{k, k'}
		\Big\Vert \sum_{\alpha \in \cI_{k'}} K(k')_{\alpha_q, \alpha} c^*_{\alpha}(k') \cE^{(g)}_q(k', k)^* \xi_t \Big\Vert
		\Big\Vert \sum_{\alpha_1 \in \cI_{k}} (K(k)^n)_{\alpha_q, \alpha_1} c_{\alpha_1}(k) \xi_t \Big\Vert \;,\\
	(\III^{(B)})& := \frac{1}{2} \sum_{k, k'}
		\Vert c(g_{q, k}) \xi_t \Vert
		\Big\Vert \sum_{\substack{\alpha \in \cI_{k'} \\ \beta \in \cI_{k} \cap \cI_{k'}}} K(k')_{\alpha, \beta} (K(k)^n)_{\alpha_q, \beta} c^*_{\alpha}(k') \cE_\beta(k, k') \xi_t \Big\Vert \;,\\
	(\IV^{(B)}) & := \frac{1}{2} \sum_{k, k'}
		\Vert c(g_{q, k}) \xi_t \Vert
		\Big\Vert \sum_{\substack{\alpha \in \cI_{k'} \\ \beta \in \cI_{k} \cap \cI_{k'}}} K(k')_{\alpha, \beta} (K(k)^n)_{\alpha_q, \beta} \cE_\beta(k, k') c^*_{\alpha}(k') \xi_t \Big\Vert \;.
\end{align*}
}
We start with bounding $ (\II^{(B)}) $, which is slightly easier than $ (\I^{(B)}) $. By \cref{lem:ccbound}
\begin{equation}
\label{eq:IIBestimate1}
	(\II^{(B)})
	\le \frac{1}{2} \sum_{k, k'} \Vert K(k')_{\alpha_q, \cdot} \Vert_2 \;
	\big\Vert (\cN + 1)^{\frac{1}{2}} \cE^{(g)}_q(k', k)^* \xi_t \big\Vert \;
	\Vert (K(k)^n)_{\alpha_q, \cdot} \Vert_2 \;
	\big\Vert \cN^{\frac{1}{2}} \xi_t \big\Vert\;.
\end{equation}
\Cref{lem:Knbound} yields
\begin{equation}
\label{eq:Kncontrol}
	\Vert (K(k)^n)_{\alpha_q, \cdot} \Vert_2
	= \bigg( \sum_\alpha |(K(k)^n)_{\alpha_q, \alpha}|^2 \bigg)^{\frac{1}{2}}
	\le \left( M (C \hat{V}_k)^{2n} M^{-2} \right)^{\frac{1}{2}}
	\le (C \hat{V}_k)^n M^{-\frac{1}{2}}\;.
\end{equation}
The second factor containing $ \cN $ is controlled using \cref{lem:gronwall}
\begin{equation}
\label{eq:C01estimate}
\begin{aligned}
	\big\Vert \cN^{\frac{1}{2}} \xi_t \big\Vert^2
	\le &\langle \Omega, e^{tS} (\cN + 1) e^{-tS} \Omega \rangle 
	\le \langle \Omega, e^{C |t|} (\cN + 1) \Omega \rangle
	= e^{C |t|}\;.
\end{aligned}
\end{equation}
In the first factor containing $ \cN $ in \cref{eq:IIBestimate1}, we apply \cref{lem:cEgbound} together with $ n_{\alpha_q, k} \ge C \fn $ to get, for an arbitrarily small $ \varepsilon > 0 $ and some $ c_\varepsilon > 0 $ depending on $ \varepsilon $,
\begin{equation}
\label{eq:cNdelta112bound}
	\big\Vert (\cN + 1)^{\frac{1}{2}} \cE^{(g)}_q(k', k)^* \xi_t \big\Vert
	= \big\Vert \cN^{\frac{1}{2}} \cE^{(g)}_q(k', k)^* \xi_t \big\Vert
	+ \big\Vert \cE^{(g)}_q(k', k)^* \xi_t \big\Vert
	\le C \frac{e^{c_\varepsilon |t|}}{\fn^2} N^{-\frac r2 + \varepsilon}\;,
\end{equation}
where we used that $ \cE^{(g)}_q(k', k) $ preserves the particle number. So finally, choosing a fixed $\varepsilon > 0$ small enough, there exists $ c_\varepsilon $ (independent of $ r, t, q' $ and $ n $) such that
\begin{equation}
\label{eq:IIBestimate}
	(\II^{(B)}) \le e^{c_\varepsilon |t|} \sum_{k, k'} (C \hat{V}_k)^n (C \hat{V}_{k'}) \fn^{-2} M^{-1} N^{-\frac r2 + \varepsilon} \;.
\end{equation}
(The $\varepsilon$-dependent estimates turn out to be subleading to the other error terms, so that the dependence on $\varepsilon$ does not show up in the statement of the lemma.)

In $ (\I^{(B)}) $, using \cref{lem:ccbound,lem:Knbound}, the first norm can be bounded by
\begin{equation}
 \label{eq:Ib_first}
 \Big\Vert \sum_{\alpha \in \cI_{k'}} K(k')_{\alpha_q, \alpha} c^*_{\alpha}(k') \xi_t \Big\Vert \leq \norm{K(k')_{\alpha_q,\cdot}}_2 \norm{(\Ncal + 1)^{\frac{1}{2}} \xi_t} \leq C\hat{V}_{k'} M^{-\frac{1}{2}} e^{C\lvert t\rvert \norm{\hat{V}}_1}\;.
\end{equation}
For the second norm in $ (\I^{(B)}) $, we have
\begin{align*}
	&\Big\Vert \sum_{\alpha_1 \in \cI_{k}} (K(k)^n)_{\alpha_q, \alpha_1} \cE^{(g)}_q(k', k) c_{\alpha_1}(k) \xi_t \Big\Vert \tagg{eq:cEgcommutatorsum}\\
	& \le \Big\Vert \sum_{\alpha_1 \in \cI_{k}} (K(k)^n)_{\alpha_q, \alpha_1} c_{\alpha_1}(k) \cE^{(g)}_q(k', k) \xi_t \Big\Vert
	+ \Big\Vert \sum_{\alpha_1 \in \cI_{k}} (K(k)^n)_{\alpha_q, \alpha_1} [ \cE^{(g)}_q(k', k), c_{\alpha_1}(k) ] \xi_t \Big\Vert\;.
\end{align*}
The first norm on the r.\,h.\,s.\ is bounded just as the second norm in $ (\II^{(B)}) $ by
\begin{equation}
\label{eq:Ib_rsec}
\Big\Vert \sum_{\alpha_1 \in \cI_{k}} (K(k)^n)_{\alpha_q, \alpha_1} c_{\alpha_1}(k) \cE^{(g)}_q(k', k) \xi_t \Big\Vert \leq e^{c_\varepsilon |t|} (C \hat{V}_k)^n \fn^{-2} M^{-\frac{1}{2}} N^{-\frac{r}{2} + \varepsilon}\;.
\end{equation}
For the second norm on the r.\,h.\,s., in order to control the commutator term, we use the explicit form of $ \cE^{(g)}_q(k', k) $ as in \cref{eq:cEg}. We restrict to the case $ q \in B_{\F}^c $ ($ q \in B_{\F} $ can be treated analogously) and use that $ [\cE^{(g)}_q(k', k), c_\alpha^\sharp(k)] = 0 $ whenever $ \alpha \neq \alpha_q $:
\begin{align*}
	&\Big\Vert \sum_{\alpha_1 \in \cI_{k}} (K(k)^n)_{\alpha_q, \alpha_1} [ \cE^{(g)}_q(k', k), c_{\alpha_1}(k) ] \xi_t \Big\Vert \quad \le (C \hat{V}_k)^n M^{-1} \Vert [ \cE^{(g)}_q(k', k), c_{\alpha_q}(k) ] \xi_t \Vert\\
	& \le \frac{(C \hat{V}_k)^n M^{-1}}{n_{\alpha_q, k}^2 n_{\alpha_q, k'}} \sum_{p : \alpha_q}
		\left( \Vert [a_{q \mp k}^* a_{q \mp k'}, a_{p \mp k} a_p] \xi_t \Vert
		+ \Vert [a_q^* a_{q \mp k \pm k'}, a_{p \mp k} a_p] \xi_t \Vert \right)\\
	& = \frac{(C \hat{V}_k)^n M^{-1}}{n_{\alpha_q, k}^2 n_{\alpha_q, k'}}
	\left( \Vert a_{q \mp k'} a_q \xi_t \Vert
		+ \Vert a_{q \mp k \pm k'} a_{q \pm k} \xi_t \Vert \right) \quad \le (C \hat{V}_k)^n \fn^{-3} M^{-1}\;.	\tagg{eq:n716}
\end{align*}
By definition \cref{eq:fn} of $ \fn $, \cref{eq:n716} scales like $ \fn^{-3} M^{-1} = \fn^{-2} M^{-\frac{1}{2}} N^{- \frac 13 + \frac{\delta}{2}} $, whereas \cref{eq:Ib_rsec} scales like $ \fn^{-2} M^{-\frac{1}{2}} N^{-\frac r2 + \varepsilon} $. Since $ r \le \frac{2}{3}$, remembering the factor \cref{eq:Ib_first} and fixing some $ \varepsilon < \frac{\delta}{2} $, we have the common bound
\begin{equation}
\label{eq:IBestimate}
	(\I^{(B)}) \le e^{C |t|} \sum_{k, k'} (C \hat{V}_k)^n (C \hat{V}_{k'}) \fn^{-2} M^{-1} N^{-\frac r2 + \frac{\delta}{2}}\;.
\end{equation}

Concerning $ (\III^{(B)}) $, the second norm is bounded by Lemma \ref{lem:cccEbound} as
\[
	\Big\Vert \sum_{\substack{\alpha \in \cI_{k'} \\ \beta \in \cI_{k} \cap \cI_{k'}}} K(k')_{\alpha, \beta} (K(k)^n)_{\alpha_q, \beta} c^*_{\alpha}(k') \cE_\beta(k, k') \xi_t \Big\Vert
	\le (C \hat{V}_k)^n (C \hat{V}_{k'}) \fn^{-1} M^{-1}
	\Vert (\cN + 1) \xi_t \Vert \;.
\]
Estimating the first norm in $ (\III^{(B)}) $ with \cref{eq:ccgbound2}, we obtain
\begin{equation}
\begin{aligned}
	(\III^{(B)})
	\le & \sum_{k, k'} (C \hat{V}_k)^n (C \hat{V}_{k'}) \fn^{-2} M^{-1} N^{- \frac r2}
	\Vert (\cN + 1) \xi_t \Vert\;.
\end{aligned}
\end{equation}
By \cref{lem:gronwall} we have
	$\big\Vert (\cN + 1) \xi_t \big\Vert^2
	\le e^{C |t|} \langle \Omega, (\cN + 1)^2 \Omega \rangle
	\le e^{C |t|}$.
Consequently
\begin{equation}
\label{eq:IIIBestimate}
	(\III^{(B)}) \le e^{C |t|} \sum_{k, k'} (C \hat{V}_k)^n (C \hat{V}_{k'}) \fn^{-2} M^{-1} N^{- \frac{r}{2}}\;.
\end{equation}
The term $ (\IV^{(B)}) $ is bounded analogously by
\begin{equation}
\label{eq:IVBestimate}
	(\IV^{(B)}) \le e^{C |t|} \sum_{k, k'} (C \hat{V}_k)^n (C \hat{V}_{k'}) \fn^{-2} M^{-1} N^{- \frac{r}{2}}\;.
\end{equation}
We add up all four contributions to $ \cE^{(B)}_n $ and get
\begin{equation}
\label{eq:cEBestimate}
	| \langle \xi_t, \cE^{(B)}_n \xi_t \rangle |
	\le e^{C |t|} \sum_{k, k'} (C \hat{V}_k)^n (C \hat{V}_{k'}) \fn^{-2} M^{-1} N^{- \frac{r}{2} + \frac{\delta}{2}} \;.
\end{equation}
 The same bound applies to $ | \langle \xi_t, (\cE^{(B)}_n)^* \xi_t \rangle | =  | \langle \xi_t, \cE^{(B)}_n \xi_t \rangle | $.

\smallskip

\noindent \underline{Bounding $ \cE^{(C)} $:}
As before, from definitions \cref{eq:ABC,eq:DEF}, using \cref{lem:approxCCR,lem:dg}, we obtain
\begin{align*}
	| \langle \xi_t, \cE^{(C)}_{n-m, m} \xi_t \rangle |
	& = \Bigg\vert \frac{1}{2} \sum_{k, k'} \sum_{\substack{\alpha \in \cI_{k'} \\ \alpha_2 \in \cI_{k} \\ \beta \in \cI_{k} \cap \cI_{k'}}} K(k')_{\alpha, \beta} (K(k)^{n-m})_{\alpha_q, \beta} (K(k)^m)_{\alpha_q, \alpha_2} \rho_{q, k} \\
		&\qquad\qquad \times \langle \xi_t, \big( c_{\alpha}(k') \cE_\beta(k', k) c_{\alpha_2}(k) + \cE_\beta(k', k) c_{\alpha}(k') c_{\alpha_2}(k) \big) \xi_t \rangle \\
	& \quad + \frac{1}{2} \sum_{k, k'} \sum_{\substack{\alpha \in \cI_{k'} \\ \alpha_1 \in \cI_{k} \\ \beta \in \cI_{k} \cap \cI_{k'}}} K(k')_{\alpha, \beta} (K(k)^{n-m})_{\alpha_q, \alpha_1} (K(k)^m)_{\alpha_q, \beta} \rho_{q, k} \\
		&\qquad\qquad \times \langle \xi_t, \big( c^*_{\alpha_1}(k) c^*_{\alpha}(k') \cE_\beta(k, k') + c^*_{\alpha_1}(k) \cE_\beta(k, k') c^*_{\alpha}(k') \big) \xi_t \rangle \Bigg\vert
	\;.
\end{align*}
By the Cauchy--Schwarz inequality
\begin{align*}
	&| \langle \xi_t, \cE^{(C)}_{n-m, m} \xi_t \rangle |  \leq (\I^{(C)}) + (\II^{(C)}) + (\III^{(C)}) + (\IV^{(C)})
\end{align*}
where
{\allowdisplaybreaks
\begin{align*}
	(\I^{(C)}) & := \sum_{k, k'} \frac{\rho_{q, k}}{2}
		\Big\lVert \!\!\!\sum_{\substack{\alpha \in \cI_{k'} \\ \beta \in \cI_{k} \cap \cI_{k'}}} \!\!\! K(k')_{\alpha, \beta} (K(k)^{n-m})_{\alpha_q, \beta} \cE_\beta(k', k)^* c^*_{\alpha}(k') \xi_t \Big\rVert
		\Big\lVert \!\!\sum_{\alpha_2 \in \cI_{k}} (K(k)^m)_{\alpha_q, \alpha_2} c_{\alpha_2}(k) \xi_t \Big\rVert\\
	(\II^{(C)}) & := \sum_{k, k'} \frac{\rho_{q, k}}{2}
		\Big\lVert \!\!\!\! \sum_{\substack{\alpha \in \cI_{k'} \\ \beta \in \cI_{k} \cap \cI_{k'}}} \!\!\! K(k')_{\alpha, \beta} (K(k)^{n-m})_{\alpha_q, \beta} c^*_{\alpha}(k') \cE_\beta(k', k)^* \xi_t \Big\rVert
		\Big\lVert \!\!\sum_{\alpha_2 \in \cI_{k}} \! (K(k)^m)_{\alpha_q, \alpha_2} c_{\alpha_2}(k) \xi_t \Big\rVert\\
	(\III^{(C)}) & := \sum_{k, k'} \frac{\rho_{q, k}}{2}
		\Big\lVert \! \sum_{\alpha_1 \in \cI_{k}} \! (K(k)^{n-m})_{\alpha_q, \alpha_1} c_{\alpha_1}(k) \xi_t \Big\rVert
		\Big\lVert \!\!\!\sum_{\substack{\alpha \in \cI_{k'} \\ \beta \in \cI_{k} \cap \cI_{k'}}} \!\!\! K(k')_{\alpha, \beta} (K(k)^m)_{\alpha_q, \beta} c^*_\alpha(k') \cE_\beta(k, k') \xi_t \Big\rVert\\
	(\IV^{(C)}) & := \sum_{k, k'} \frac{\rho_{q, k}}{2}
		\Big\lVert \!\sum_{\alpha_1 \in \cI_{k}} (K(k)^{n-m})_{\alpha_q, \alpha_1}\! c_{\alpha_1}(k) \xi_t \Big\rVert
		\Big\lVert \!\!\!\!\sum_{\substack{\alpha \in \cI_{k'} \\ \beta \in \cI_{k} \cap \cI_{k'}}} \!\!\! K(k')_{\alpha, \beta} (K(k)^m)_{\alpha_q, \beta} \cE_\beta(k, k') c^*_\alpha(k') \xi_t \Big\rVert \,.
\end{align*}
}
In contribution $ (\I^{(C)}) $, the first norm is bounded by \cref{lem:cccEbound} and the second norm  as the second norm of $ (\II^{(B)}) $. We end up with
\begin{equation}
\label{eq:ICestimate}
	(\I^{(C)})
	\le e^{C |t|} \sum_{k, k'} \rho_{q, k} (C \hat{V}_k)^n (C \hat{V}_{k'}) \fn^{-1} M^{-\frac{3}{2}}\;.
\end{equation}
The term $ (\II^{(C)}) $ is bounded analogously and the last two terms are identical to the first two, under the replacement of $ m $ by $n-m $. Thus we have the three bounds
\begin{equation}
\label{eq:IICestimate}
	(\II^{(C)}),\ (\III^{(C)}),\ (\IV^{(C)})
	\le e^{C |t|} \sum_{k, k'} \rho_{q, k} (C \hat{V}_k)^n (C \hat{V}_{k'}) \fn^{-1} M^{-\frac{3}{2}}\;.
\end{equation}
From the definition \cref{eq:dg} of $ \rho_{q, k} $ and from \cref{eq:fn}, and recalling that $ k \in \tilde{\cC}^q $ implies $ \alpha_q \in \cI_{k} $, it becomes clear that
\begin{equation}
\label{eq:dgestimate}
	\rho_{q, k}
	\le {n_{\alpha_q, k}^{-2}}
	\le C \fn^{-2}\;.
\end{equation}
Adding up all contributions renders (independent of the bootstrap assumption)
\begin{equation}
\label{eq:cECestimate}
	| \langle \xi_t, \cE^{(C)}_{n-m, m} \xi_t \rangle |
	\le e^{C |t|} \sum_{k, k'} (C \hat{V}_k)^n (C \hat{V}_{k'}) \fn^{-3} M^{-\frac{3}{2}}\;.
\end{equation}

\noindent \underline{Bounding $ \cE^{(D)} $:} Again, from \cref{eq:ABC,eq:DEF}, using \cref{lem:approxCCR,lem:dg}, we obtain
\begin{align*}
	|\langle \xi_t, \cE^{(D)}_{n-m, m} \xi_t \rangle| & = \Bigg\vert  \sum_{k, k'} \frac{\rho_{q, k}}{2} \sum_{\substack{\alpha \in \cI_{k'} \\ \alpha_1 \in \cI_{k} \\ \beta \in \cI_{k} \cap \cI_{k'}}} K(k')_{\alpha, \beta} (K(k)^{n-m})_{\alpha_q, \alpha_1} (K(k)^m)_{\alpha_q, \beta} \\
		& \qquad\quad \times \langle \xi_t, \big( c_{\alpha_1}(k) \cE_\beta(k, k') c^*_{\alpha}(k') + c_{\alpha_1}(k) c^*_{\alpha}(k') \cE_\beta(k, k') \big) \xi_t \rangle\\
	& \quad +  \sum_{k, k'} \frac{\rho_{q, k}}{2} \sum_{\substack{\alpha \in \cI_{k'} \\ \alpha_2 \in \cI_{k} \\ \beta \in \cI_{k} \cap \cI_{k'}}} K(k')_{\alpha, \beta} (K(k)^{n-m})_{\alpha_q, \beta} (K(k)^m)_{\alpha_q, \alpha_2}\\
		& \qquad\quad \times \langle \xi_t, \big( c^*_\alpha(k') \cE_\beta(k, k') c_{\alpha_2}(k) + \cE_\beta(k, k') c^*_\alpha(k') c_{\alpha_2}(k) \big) \xi_t \rangle\\
	&\quad + \sum_k \rho_{q, k} \sum_{\alpha, \beta \in \cI_{k}} K(k)_{\alpha, \beta} (K(k)^{n-m})_{\alpha_q, \beta} (K(k)^m)_{\alpha_q, \alpha} \langle \xi_t, \cE_\beta(k, k) \xi_t \rangle \Bigg\vert \,.
\end{align*}
By the Cauchy--Schwarz inequality
\begin{align*}
&|\langle \xi_t, \cE^{(D)}_{n-m, m} \xi_t \rangle| \leq (\I^{(D)}) + (\II^{(D)}) + (\III^{(D)}) + (\IV^{(D)}) + (\V^{(D)})
\end{align*}
where
{\allowdisplaybreaks
\begin{align*}
	(\I^{(D)}) & := \sum_{k, k'} \frac{\rho_{q, k}}{2}
		\Big\Vert \sum_{\alpha_1 \in \cI_{k}} (K(k)^{n-m})_{\alpha_q, \alpha_1} c^*_{\alpha_1}(k) \xi_t \Big\Vert
		\Big\Vert \!\!\sum_{\substack{\alpha \in \cI_{k'} \\ \beta \in \cI_{k} \cap \cI_{k'}}} \!\!K(k')_{\alpha, \beta} (K(k)^m)_{\alpha_q, \beta} \cE_\beta(k, k') c^*_\alpha(k') \xi_t \Big\Vert\\
	(\II^{(D)}) & := \sum_{k, k'} \frac{\rho_{q, k}}{2}
		\Big\Vert \!\!\sum_{\alpha_1 \in \cI_{k}} (K(k)^{n-m})_{\alpha_q, \alpha_1} c^*_{\alpha_1}(k) \xi_t \Big\Vert
		\Big\Vert \!\!\!\sum_{\substack{\alpha \in \cI_{k'} \\ \beta \in \cI_{k} \cap \cI_{k'}}} \!\!\!K(k')_{\alpha, \beta} (K(k)^m)_{\alpha_q, \beta} c^*_\alpha(k') \cE_\beta(k, k') \xi_t \Big\Vert\\
	(\III^{(D)}) & := \sum_{k, k'} \frac{\rho_{q, k}}{2}
		\Big\Vert \!\!\sum_{\substack{\alpha \in \cI_{k'} \\ \beta \in \cI_{k} \cap \cI_{k'}}} \!\!\!\! K(k')_{\alpha, \beta} (K(k)^{n-m})_{\alpha_q, \beta} \cE_\beta(k, k')^* c_\alpha(k') \xi_t \Big\Vert
		\Big\Vert \!\!\sum_{\alpha_2 \in \cI_{k}} \!\!(K(k)^m)_{\alpha_q, \alpha_2} c_{\alpha_2}(k) \xi_t \Big\Vert\\
	(\IV^{(D)}) & := \sum_{k, k'} \frac{\rho_{q, k}}{2}
		\Big\Vert \!\!\sum_{\substack{\alpha \in \cI_{k'} \\ \beta \in \cI_{k} \cap \cI_{k'}}} \!\!\!\! K(k')_{\alpha, \beta} (K(k)^{n-m})_{\alpha_q, \beta} c_\alpha(k') \cE_\beta(k, k')^* \xi_t \Big\Vert
		\Big\Vert \!\!\sum_{\alpha_2 \in \cI_{k}}\!\! (K(k)^m)_{\alpha_q, \alpha_2} c_{\alpha_2}(k) \xi_t \Big\Vert \\
	(\V^{(D)}) & := \sum_k \rho_{q, k}
		\Big\Vert \sum_{\beta \in \cI_{k}} (K(k)^{n-m})_{\alpha_q, \beta} (K(k)^{m+1})_{\alpha_q, \beta} \cE_\beta(k, k) \xi_t \Big\Vert \Vert \xi_t \Vert \;.
\end{align*}
}
The first four contributions are exactly bounded as the four contributions of $ \cE^{(C)}_{n-m, m} $. For $ (\V^{(D)}) $, we use $ \rho_{q, k} \le C \fn^{-2} $, $ \Vert \xi_t \Vert = 1 $ and apply the Cauchy--Schwartz inequality:
\begin{align*}
	(\V^{(D)})
	& \le C \sum_k \fn^{-2} \Big\Vert \sum_{\beta \in \cI_{k}} (K(k)^{n-m})_{\alpha_q, \beta} (K(k)^{m+1})_{\alpha_q, \beta} \cE_\beta(k, k) \xi_t \Big\Vert\\
	&\le C \sum_k \fn^{-2} \left( \sum_{\beta \in \cI_{k}} \left\vert (K(k)^{n-m})_{\alpha_q, \beta} (K(k)^{m+1})_{\alpha_q, \beta} \right\vert^2 \right)^{\frac{1}{2}}
		\left( \sum_{\beta \in \cI_{k}} \Vert \cE_\beta(k, k) \xi_t \Vert^2 \right)^{\frac{1}{2}}\\
	&\le \sum_k \fn^{-2} \left( \sum_{\beta \in \cI_{k}} (C \hat{V}_k)^{2n+2} M^{-4} \right)^{\frac{1}{2}}
		\left( \fn^{-2} \Vert \cN^{\frac{1}{2}} \xi_t \Vert^2 \right)^{\frac{1}{2}}\\
	&\le e^{C |t|} \sum_k (C \hat{V}_k)^{n+1} \fn^{-3} M^{- \frac 32}
	\hspace{4em} \le e^{C |t|} \sum_{k, k'} (C \hat{V}_k)^n (C \hat{V}_{k'}) \fn^{-3} M^{- \frac{3}{2}}
\end{align*}
where we used \cref{lem:Knbound,lem:cEbound} in the third line. Thus
\begin{equation}
\label{eq:cEDestimate}
	| \langle \xi_t, \cE^{(D)}_{n-m, m} \xi_t \rangle |
	\le e^{C |t|} \sum_{k, k'} (C \hat{V}_k)^n (C \hat{V}_{k'}) \fn^{-3} M^{- \frac{3}{2}}\;.
\end{equation}

\noindent \underline{Bounding $ \cE^{(E)} $:} Finally, again from \cref{eq:ABC,eq:DEF}, using \cref{lem:approxCCR,lem:dg}, we obtain
\begin{align*}
	& |\langle \xi_t, \cE^{(E)}_n \xi_t \rangle| = \\
	& \Bigg\vert \frac{1}{2} \sum_{k, k'} \sum_{\substack{\alpha \in \cI_{k'} \\ \alpha_1 \in \cI_{k}}} \! K(k')_{\alpha, \alpha_q} (K(k)^n)_{\alpha_q, \alpha_1}  \langle \xi_t, \big( c_\alpha(k') \cE^{(g)}_q(k', k) c^*_{\alpha_1}(k) + \cE^{(g)}_q(k', k) c_\alpha(k') c^*_{\alpha_1}(k) \big) \xi_t \rangle\\
	& + \frac{1}{2} \sum_{k, k'} \! \!\!\sum_{\substack{\alpha \in \cI_{k'} \\ \beta \in \cI_{k} \cap \cI_{k'}}} \!\!\! K(k')_{\alpha, \beta} (K(k)^n)_{\alpha_q, \beta}  \langle \xi_t, \big( c^*(g_{q, k}) c_\alpha(k') \cE_\beta(k', k) + c^*(g_{q, k}) \cE_\beta(k', k) c_\alpha(k') \big) \xi_t \rangle\\
	& + \sum_k \rho_{q, k} \sum_{\beta \in \cI_{k}}  K(k)_{\alpha_q, \beta} (K(k)^n)_{\alpha_q, \beta} \langle \xi_t, \cE_\beta(k, k) \xi_t \rangle \Bigg\vert \;.
\end{align*}
By the Cauchy--Schwarz inequality
\begin{align*}
&|\langle \xi_t, \cE^{(E)}_n \xi_t \rangle| \leq (\I^{(E)}) + (\II^{(E)}) + (\III^{(E)}) + (\IV^{(E)}) + (\V^{(E)})
\end{align*}
where
{\allowdisplaybreaks
\begin{align*}
	(\I^{(E)}) & := \frac{1}{2} \sum_{k, k'}
		\Big\Vert \sum_{\alpha \in \cI_{k'}} K(k')_{\alpha, \alpha_q} c^*_\alpha(k') \xi_t \Big\Vert
		\Big\Vert \sum_{\alpha_1 \in \cI_{k}} (K(k)^n)_{\alpha_q, \alpha_1} \cE^{(g)}_q(k', k) c^*_{\alpha_1}(k) \xi_t \Big\Vert \\
	(\II^{(E)}) & := \frac{1}{2} \sum_{k, k'}
		\Big\Vert \sum_{\alpha \in \cI_{k'}} K(k')_{\alpha, \alpha_q} c^*_\alpha(k') \cE^{(g)}_q(k', k)^* \xi_t \Big\Vert
		\Big\Vert \sum_{\alpha_1 \in \cI_{k}} (K(k)^n)_{\alpha_q, \alpha_1} c^*_{\alpha_1}(k) \xi_t \Big\Vert \\
	(\III^{(E)}) & := \frac{1}{2} \sum_{k, k'}
		\Vert c(g_{q, k}) \xi_t \Vert
		\Big\Vert \sum_{\substack{\alpha \in \cI_{k'} \\ \beta \in \cI_{k} \cap \cI_{k'}}} K(k')_{\alpha, \beta} (K(k)^n)_{\alpha_q, \beta} c_{\alpha}(k') \cE_\beta(k', k) \xi_t \Big\Vert \\
	(\IV^{(E)}) & := \frac{1}{2} \sum_{k, k'}
		\Vert c(g_{q, k}) \xi_t \Vert
		\Big\Vert \sum_{\substack{\alpha \in \cI_{k'} \\ \beta \in \cI_{k} \cap \cI_{k'}}} K(k')_{\alpha, \beta} (K(k)^n)_{\alpha_q, \beta} \cE_\beta(k', k) c_{\alpha}(k') \xi_t \Big\Vert \\
	(\V^{(E)}) & := \sum_k \rho_{q, k}
		\Big\Vert \sum_{\beta \in \cI_{k}} K(k)_{\alpha_q, \beta} (K(k)^n)_{\alpha_q, \beta} \cE_\beta(k, k) \xi_t \Big\Vert \Vert \xi_t \Vert \;.
\end{align*}
}
The first four contributions are bounded like the four contributions of $ \cE^{(B)}_n $. The fifth contribution is bounded by the same steps as $ (\V^{(D)}) $:
\[
\begin{aligned}
	(\V^{(E)})
	& \le C \sum_k \fn^{-2} \bigg( \sum_{\beta \in \cI_{k}} \left\vert K(k)_{\alpha_q, \beta} (K(k)^n)_{\alpha_q, \beta} \right\vert^2 \bigg)^{\frac{1}{2}}
		\bigg( \sum_{\beta \in \cI_{k}} \Vert \cE_\beta(k, k) \xi_t \Vert^2 \bigg)^{\frac{1}{2}}\\
	&\le \sum_k (C \hat{V}_k)^{n+1} \fn^{-2} M^{- \frac 32}
		\left( \fn^{-2} \Vert \cN^{\frac{1}{2}} \xi_t \Vert^2 \right)^{\frac{1}{2}}
	\le e^{C |t|} \sum_{k, k'} (C \hat{V}_k)^n (C \hat{V}_{k'}) \fn^{-3} M^{- \frac{3}{2}}\;.
\end{aligned}
\]
Recalling $ \fn = N^{\frac 13 - \frac{\delta}{2}} M^{- \frac 12} $, the final bound for $ | \langle \xi_t, \cE^{(E)}_n \xi_t \rangle | $ (and its adjoint) is thus
\begin{equation}
\label{eq:cEEestimate}
	| \langle \xi_t, \cE^{(E)}_n \xi_t \rangle |
	\le e^{C |t|} \sum_{k, k'} (C \hat{V}_k)^n (C \hat{V}_{k'}) \fn^{-2} M^{-1} N^{- \frac r2 + \frac{\delta}{2}} \;.
\end{equation}

\noindent \underline{Bounding $ \cE^{(F)} $:}
Since $ S^* = -S $, $ \bF_{m, m'}^* = \bD_{m', m} $, $ \bA_n^* = \bA_n$, and $\bC_{m, m'}^* = \bC_{m', m} $, we obtain from \eqref{eq:cEABCDEF} that  $ (\cE^{(F)}_{n-m, m})^* = \cE^{(D)}_{m, n-m} $, so
\begin{equation}
\label{eq:cEFestimate}
	| \langle \xi_t, \cE^{(F)}_{n-m, m} \xi_t \rangle |
	= | \langle \xi_t, \cE^{(D)}_{m, n-m} \xi_t \rangle |
	\le e^{C |t|} \sum_{k, k'} (C \hat{V}_k)^n (C \hat{V}_{k'}) \fn^{-3} M^{- \frac{3}{2}}\;.
\end{equation}

\noindent \underline{Summing up the bounds:}
Consider again \cref{eq:cE2nplus1}. If $ n $ is even, then there are $ 1 + 1 + \sum_{m = 1}^{n-1} \binom{n}{m} = \sum_{m = 0}^n \binom{n}{m} = 2^n $ terms to bound. Since $ r \le \frac{2}{3}$, and $ \sum_k $ runs over finitely many elements, we have, for some fixed $\fC_1 > 0$, the bound
\begin{equation}
\label{eq:cEn2boundfinal}
	| \langle \xi_t, \cE_{n+1, q} \xi_t \rangle |
	 \le 2^n e^{C |t|} \sum_{k, k'} (C \hat{V}_k)^n \hat{V}_{k'} \fn^{-2} M^{-1} N^{- \frac r2 + \frac{\delta}{2}}
	 \le \fC_1^{n+1} e^{C |t|}  N^{- \frac{2}{3} + \frac{3}{2} \delta - \frac{r}{2}}\;.
\end{equation}
This is \cref{eq:cE2estimate}, which we wanted to prove.
If $ n = 2s + 1 $ is odd, then the number of occurring terms is $ 1 + 1 + 2 \sum_{m = 1}^s \binom{n}{m} = \sum_{m = 0}^n \binom{n}{m} = 2^n $, and \cref{eq:cEn2boundfinal} still holds.
\end{proof}

The second bound required to prove Lemma \ref{lem:bootstrap} is the following.

\begin{lemma}[Bound on the Bosonized Commutator]
\label{lem:adnestimate}
Suppose that $ \hat{V}$ is non-negative and compactly supported.
Suppose there is $ r \ge 0 $ and $C > 0$ such that for all $ t \in [-1, 1] $ and all $ q' \in \ZZZ^3 $ we have
\begin{equation}
	\langle \xi_t, a_{q'}^* a_{q'} \xi_t \rangle
	\le C N^{-r}\;.
\end{equation}
Then there exists $ \fC_2 > 0 $ such that\footnote{We use $ \fC_2 $ instead of $ C $ to explicitly track the $ n $-dependence of the constants. This will be important for ensuring that sums over $ n $ converge.} for all $ n, N \in \NNN \setminus \{0\}$, $q \in \ZZZ^3 $ and $ t \in [-1, 1] $
\begin{equation}
\label{eq:adnestimate}
	| \langle \xi_t, \ad^n_{q, (\b)} \xi_t \rangle |
	\le \fC_2^n e^{C |t|} \;.
\end{equation}
\end{lemma}
	It becomes clear from the proof that we could also obtain the stronger bound
	\begin{equation}
	| \langle \xi_t, \ad^n_{q, (\b)} \xi_t \rangle |
	\le \fC_2^n e^{C |t|} \max\{ N^{-\frac{1}{3} - \frac{r}{2} + \frac{\delta}{2}}, N^{-\frac{2}{3}+ \delta} \}\;.
	\end{equation}
	However, for applying \cref{lem:adnestimate} in \cref{eq:errorboundbootstrap}, any bound not growing with $N$ is sufficient.

\begin{proof}
Recall \cref{eq:adnb}. If $ q $ is not inside any patch, then $ \ad^n_{q, (\b)} = 0 $, so the statement is trivially satisfied. We may therefore assume that $ q \in B_{\alpha_q} $ for some $ 1 \le \alpha_q \le M $. We use the convention $ \sum_k = \sum_{k \in \tilde{\cC}^q \cap \ZZZ^3} $. By \cref{lem:Knbound} and by \cref{eq:dgestimate} we obtain
\begin{equation}
\label{eq:Abound}
	| \langle \xi_t, \bA_n \xi_t \rangle |
	= \sum_k (K(k)^n)_{\alpha_q, \alpha_q} \rho_{q, k}
	\le \sum_k (C \hat{V}_k)^n M^{-1} \fn^{-2}
	= \sum_k (C \hat{V}_k)^n N^{-\frac{2}{3}+ \delta} \;.
\end{equation}

For $ \bB_n $ we use \cref{lem:ccbound,lem:ccgbound} to get
\begin{equation}
\begin{aligned}
	| \langle \xi_t, \bB_n \xi_t \rangle |
	& \le \sum_k \Vert c(g_{q, k}) \xi_t \Vert
		\Big\Vert \sum_{\alpha_1 \in \cI_{k}} (K(k)^n)_{\alpha_q, \alpha_1} c_{\alpha_1}(k) \xi_t \Big\Vert \\
	& \le \sum_k \frac{C}{n_{\alpha_q, k}} N^{-\frac r2}
		\Vert (K(k)^n)_{\alpha_q, \cdot} \Vert_2
		\Vert \cN^{\frac{1}{2}} \xi_t \Vert\;.
\end{aligned}
\end{equation}
Now, by Lemma \ref{lem:fn}, we have $ n_{\alpha_q, k} \ge C \fn $ (recall that $k \in \tilde{\Ccal}^q$ implies $\alpha_k \in \Ik$), and \cref{eq:C01estimate} allows us to bound $ \Vert \cN^{\frac{1}{2}} \xi_t \Vert \le e^{C |t|} $, so together with \cref{eq:Kncontrol} we obtain
\begin{equation}
\label{eq:Bbound}
	| \langle \xi_t, \bB_n \xi_t \rangle |
	\le e^{C |t|} \sum_k (C \hat{V}_k)^n \fn^{-1} N^{-\frac r2} M^{-\frac{1}{2}}
	= e^{C |t|} \sum_k (C \hat{V}_k)^n N^{-\frac{1}{3} - \frac{r}{2} +\frac{\delta}{2}}\;.
\end{equation}
The same holds for the adjoint.
%
The bound on $ \bC_{n-m, m} $ is obtained by the same steps, together with \cref{eq:dgestimate}:
\begin{align*}
	|\langle \xi_t, \bC_{n-m, m} \xi_t \rangle|
	& \le \sum_k \rho_{q, k} \Big\Vert \sum_{\alpha_1 \in \cI_{k}} (K(k)^{n-m})_{\alpha_q, \alpha_1} c_{\alpha_1}(k) \xi_t \Big\Vert
		\Big\Vert \sum_{\alpha_2 \in \cI_{k}} (K(k)^m)_{\alpha_q, \alpha_2} c_{\alpha_2}(k) \xi_t \Big\Vert \\
	& \le \sum_k \rho_{q, k}
		\Vert (K(k)^{n-m})_{\alpha_q, \cdot} \Vert_2
		\Vert \cN^{\frac{1}{2}} \xi_t \Vert
		\Vert (K(k)^m)_{\alpha_q, \cdot} \Vert_2
		\Vert \cN^{\frac{1}{2}} \xi_t \Vert\\
	& \le e^{C |t|} \sum_k (C \hat{V}_k)^n M^{-1} \fn^{-2}
	\qquad = e^{C |t|} \sum_k (C \hat{V}_k)^n N^{-\frac{2}{3} + \delta}\;. \tagg{eq:Cbound}
\end{align*}
For $ \bD_{n-m, m} $, since $ \bD_{n-m, m} $ is obtained from $ \bC_{n-m, m} $ by replacing $ c_{\alpha_1}(k)$ by $ c^*_{\alpha_1}(k) $, so we only need to replace $ \Vert \cN^{\frac{1}{2}} \xi_t \Vert^2 \le e^{C |t|} $ by $ \Vert (\cN + 1)^{\frac{1}{2}} \xi_t \Vert^2 \le e^{C |t|} $, yielding
\begin{equation}
\label{eq:Dbound}
	| \langle \xi_t, \bD_{n-m, m} \xi_t \rangle |
	\le e^{C |t|} \sum_k (C \hat{V}_k)^n N^{-\frac{2}{3}+ \delta} \;.
\end{equation}
Since $ \bF_{n-m, m}^* = \bD_{m, n-m} $ the same bound also applies to $ | \langle \xi_t, \bF_{m, n-m} \xi_t \rangle | $.

Finally, $ \bE_n $ is obtained from $ \bB_n $ by replacing $ c_{\alpha_1}(k)$ by $c^*_{\alpha_1}(k) $. So again
\begin{equation}
\label{eq:Ebound}
	| \langle \xi_t, \bE_n \xi_t \rangle |
	\le e^{C |t|} \sum_k (C \hat{V}_k)^n N^{-\frac{1}{3} - \frac{r}{2} + \frac{\delta}{2}}\;.
\end{equation}
The total number of terms involved for even $ n $ is now $ 2^{n-1} + 1 + 1 + \sum_{m = 1}^{n-1} \binom{n}{m} = 2^{n-1} + 2^n < 2^{n+1} $, while for odd $ n = 2s+1 $, it is $ 1 + 1 + \sum_{m = 1}^s \binom{n}{m} + \sum_{m = 1}^s \binom{n}{m} = 2^n $. So in any case, with $ \sum_k $ running over finitely many elements, we have
\[
	| \langle \xi_t, \ad^n_{q, (\b)} \xi_t \rangle |
	 \le e^{C |t|} 2 \sum_k (C \hat{V}_k)^n \max\{ N^{-\frac 13 - \frac r2 + \frac{\delta}{2}}, N^{-\frac{2}{3}+ \delta} \} \;.
\]
As the exponents on $ N $ are negative, this implies \eqref{eq:adnestimate}.
\end{proof}

\section{Proof of \cref{thm:main2}}
\label{sec:proofmain2}

In this section we prove that the bosonized excitation density $ n_q^{(\b)} $ is a good approximation for the true excitation density $ n_q $. The proof employs the bootstrap.

\begin{lemma}[Bootstrap Step]
\label{lem:bootstrap}
Suppose that $ \hat{V}$ is non-negative and compactly supported.
Suppose there is $ r \in [0, \frac{2}{3}) $ and $C > 0$ such that for all $ t \in [-1, 1] $ and $ q' \in \ZZZ^3 $ we have
\begin{equation}
\label{eq:nqcondition4}
	\langle \xi_t, a_{q'}^* a_{q'} \xi_t \rangle
	\le C N^{-r}\;.
\end{equation}
Then, for all $ t \in [-1, 1] $ and all $q \in \Zbb^3$ we have
\begin{equation}
\label{eq:bootstrap}
	\bigg\lvert \langle \xi_t, a_q^* a_q \xi_t \rangle -
	\sum_{n = 0}^\infty \frac{t^n}{n!} \langle \Omega, \ad^n_{q, (\b)} \Omega \rangle \bigg\rvert
	\le C e^{C |t|} N^{- \frac{2}{3}+ \frac{3}{2} \delta - \frac r2}\;.
\end{equation}
\end{lemma}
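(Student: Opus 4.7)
The plan is to use an iterated Duhamel expansion of $F(t) := \langle \xi_t, a_q^* a_q \xi_t\rangle$ around $t = 0$. Writing $\Phi_n := \ad^n_{q, (\b)}$, the definition \cref{eq:cE2n} of the commutation error rewrites as $[S, \Phi_{n-1}] = \Phi_n + \cE_{n, q}$ for every $n \ge 1$, with $\cE_{1, q} = 0$ since by construction $\Phi_1 = [S, \Phi_0]$. Since $S$ is anti-selfadjoint, for any operator $X$ with $\xi_t$ in its domain one has
\begin{equation*}
	\tfrac{d}{dt} \langle \xi_t, X \xi_t \rangle = \langle \xi_t, [S, X] \xi_t \rangle\,.
\end{equation*}
Applied with $X = \Phi_{n-1}$ and integrated from $0$ to $t$ this yields
\begin{equation*}
	\langle \xi_t, \Phi_{n-1} \xi_t \rangle = \langle \Omega, \Phi_{n-1} \Omega \rangle + \int_0^t \langle \xi_s, \Phi_n \xi_s \rangle\, ds + \int_0^t \langle \xi_s, \cE_{n, q} \xi_s \rangle\, ds\,.
\end{equation*}

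Iterating $N$ times starting from $F(t) = \langle \xi_t, \Phi_0 \xi_t \rangle$ produces the representation
\begin{equation*}
	F(t) = \sum_{n=0}^{N-1} \frac{t^n}{n!} \langle \Omega, \Phi_n \Omega \rangle + R_N(t) + \sum_{m=2}^{N} E_m(t)\,,
\end{equation*}
where $R_N(t) := \int_{\Delta_N(t)} \langle \xi_{s_N}, \Phi_N \xi_{s_N} \rangle\, ds_N \cdots ds_1$ and $E_m(t) := \int_{\Delta_m(t)} \langle \xi_{s_m}, \cE_{m, q} \xi_{s_m} \rangle\, ds_m \cdots ds_1$, with $\Delta_m(t)$ denoting the $m$-simplex $\{0 \le s_m \le \cdots \le s_1 \le t\}$ of volume $|t|^m/m!$ (orientation reversed for $t < 0$). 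Crucially, the sum over error terms starts at $m = 2$ because $\cE_{1, q} = 0$.

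To finish I would let $N \to \infty$. By \cref{lem:adnestimate} the remainder satisfies $|R_N(t)| \le \fc_3 \fC_3^N \frac{|t|^N}{N!} e^{C|t|}$, which vanishes as $N \to \infty$; the partial sums therefore converge to $G(t) := \sum_{n=0}^\infty \frac{t^n}{n!} \langle \Omega, \Phi_n \Omega \rangle$. Applying \cref{lem:cE2estimate} (whose hypothesis is precisely the bootstrap assumption \cref{eq:nqcondition4}) to each summand bounds
\begin{equation*}
	|F(t) - G(t)| \le \sum_{m=2}^\infty \frac{|t|^m}{m!} \fc_2 \fC_2^m e^{C|t|} N^{-\frac{2}{3} + \frac{3}{2}\delta - \frac{r}{2}} \le \fc_2\, e^{(\fC_2 + C)|t|}\, N^{-\frac{2}{3} + \frac{3}{2}\delta - \frac{r}{2}}\,,
\end{equation*}
which is the claim after absorbing constants. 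The only technical point needing care is the justification of the Duhamel identity on $\fock$; this is routine because both $S$ and each $\Phi_n$ are finite polynomials in pair operators (for compactly supported $\hat{V}$) preserving the domain of any polynomial in the number operator, on which $e^{\pm tS}$ is well-defined (cf.\ \cref{lem:gronwall}). The heavy lifting has already been carried out in \cref{lem:cE2estimate,lem:adnestimate}; once those are granted the statement reduces to a geometric series estimate.
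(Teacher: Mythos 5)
Your proposal is correct and follows essentially the same route as the paper: an iterated Duhamel expansion organized around the decomposition $[S,\ad^{n-1}_{q,(\b)}]=\ad^{n}_{q,(\b)}+\cE_{n,q}$, with the error terms controlled by \cref{lem:cE2estimate}, the remainder and the tail of the series controlled by \cref{lem:adnestimate}, and the limit $N\to\infty$ taken at the end. The only (immaterial) difference is that you run the expansion on the scalar function $t\mapsto\langle\xi_t,a_q^*a_q\xi_t\rangle$ while the paper writes it at the operator level $e^{tS}a_q^*a_qe^{-tS}$ and verifies it by induction.
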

\begin{proof}
If $ q $ is not inside any patch the bound is trivial since then $ \ad^n_{q, (\b)} = 0 $ and $  \langle \xi_t, a_q^* a_q \xi_t \rangle = 0 $. So we may assume that $ q \in B_{\alpha_q} $ for some $ 1 \le \alpha_q \le M $. Without loss of generality $ t \ge 0 $. We use the abbreviations $ \langle A \rangle_\psi := \langle \psi, A \psi \rangle $, $ \ad^n_q := \ad^n_S(a_q^* a_q) $, and denote the scaled $ n $--dimensional simplex by $ t \Delta^{(n)} := \{ (t_1, \ldots, t_n) : 0 \le t_n \le t_{n-1} \le \ldots \le t_2 \le t_1 \le t \} $.
Recall that $ \cE_{n, q} = [S, \ad^{n-1}_{q, (\b)}] - \ad^n_{q, (\b)} $. We expand
\begin{equation}
\label{eq:Duhamelexpansion}
\begin{aligned}
	e^{tS} a_q^* a_q e^{-tS}
	= &\sum_{n = 0}^{n_*} \frac{t^n}{n!} \ad^n_{q, (\b)} + \sum_{n = 1}^{n_*} \int_{t \Delta^{(n)}} \d t_1 \ldots \d t_n \; e^{t_n S} \cE_{n, q} e^{-t_n S}\\
	&+ \int_{t \Delta^{(n_*+1)}} \d t_1 \ldots \d t_{n_*+1} \; e^{t_{n_*+1} S} [S, \ad^{n_*}_{q, (\b)}] e^{-t_{n_*+1} S}\;.
\end{aligned}
\end{equation} 
This expansion can be checked inductively: The case $ n_* = 0 $ is just the Duhamel formula $ e^{tS} B e^{-tS} = B + \int_0^t \d t_1 \; e^{t_1 S} [S, B] e^{-t_1 S} $ with $ B = \ad^0_{q, (\b)} = a_q^* a_q $. For the induction step from $ n_*$ to $n_*+1 $ we write $ [S, \ad^{n_*}_{q, (\b)}] = \ad^{n_*+1}_{q, (\b)} + \cE_{n_*+1, q} $. Then, we apply the Duhamel formula to $ \ad^{n_*+1}_{q, (\b)} $ and use $ \int_{t \Delta^{(n_*+1)}} \d t_1 \ldots \d t_n \; \ad^{n_*+1}_{q, (\b)} = \frac{ t^{n_*+1} }{(n_*+1)!} \ad^{n_*+1}_{q, (\b)} $. Now, using $ \xi_t = e^{-t S} \Omega $, this expansion renders
\begin{equation}
\begin{aligned}
	&\langle \xi_t, a_q^* a_q \xi_t \rangle
	- \sum_{n = 0}^\infty \frac{t^n}{n!} \langle \Omega, \ad^n_{q, (\b)} \Omega \rangle \\
	& = -\sum_{n = n_*+1}^\infty \frac{t^n}{n!} \langle \ad^n_{q, (\b)} \rangle_\Omega
	+ \sum_{n = 1}^{n_*} \int_{t \Delta^{(n)}} \d t_1 \ldots \d t_n \; \langle \cE_{n, q} \rangle_{\xi_{t_n}}\\
	& \quad + \int_{t \Delta^{(n_*+1)}} \d t_1 \ldots \d t_{n_*+1} \; \left( \langle \ad^{n_*+1}_{q, (\b)} \rangle_{\xi_{t_{n_*+1}}}
	+ \langle \cE_{n_*+1, q} \rangle_{\xi_{t_{n_*+1}}} \right)\;.
\end{aligned}
\end{equation}
Applying Lemmas \ref{lem:cE2estimate} and \ref{lem:adnestimate}, as well as $ \int_{t \Delta^{(n)}} \d t_1 \ldots \d t_n \; 1 = \frac{t^n}{n!} $, we  obtain
\begin{align*}
	&\Big\vert \langle \xi_t, a_q^* a_q \xi_t \rangle
	- \sum_{n = 0}^\infty \frac{t^n}{n!} \langle \Omega, \ad^n_{q, (\b)} \Omega \rangle \Big\vert \\
	& \le \sum_{n = n_*+1}^\infty \frac{(t \fC_2)^n}{n!}
	+ e^{C t} \sum_{n = 0}^{n_*} \frac{(t \fC_1)^n}{n!} N^{- \frac{2}{3}+ \frac{3}{2} \delta - \frac r2}  + e^{C t}\frac{  (t \fC_2)^{n_*+1} + (t \fC_1)^{n_*+1} N^{- \frac{2}{3}+ \frac{3}{2} \delta - \frac r2} }{(n_*+1)!}\\
	& \le e^{C t} N^{- \frac{2}{3}+ \frac{3}{2} \delta - \frac r2} + \left( \sum_{n = n_*+1}^\infty \frac{(t \fC_2)^n}{n!}
	+ e^{C t} \frac{ (t \fC_2)^{n_*+1}}{(n_*+1)!} \right)\;. \tagg{eq:errorboundbootstrap}
\end{align*}
The last line vanishes as $ n_* \to \infty $, completing the proof.
\end{proof}

\begin{proof}[Proof of Theorem \ref{thm:main2}]
For $ t \in [-1,1] $, we introduce
\begin{equation}
	n_{q;t} := \langle \Omega, e^{t S} a_q^* a_q e^{-t S} \Omega \rangle
	= \langle \xi_t, a_q^* a_q \xi_t \rangle \;, \qquad
	n_{q;t}^{(\b)} := \sum_{n = 0}^\infty \frac{t^n}{n!} \langle \Omega, \ad^n_{q, (\b)} \Omega \rangle \;,
\end{equation}
so $ n_q = n_{q;1} $. First, note that the statement in Theorem \ref{thm:main2} is trivial if $ q $ is not inside any patch, since then $ n_q = n_q^{(\b)} = 0 $. So we can assume that $ q \in B_{\alpha_q} $ for some $ 1 \le \alpha_q \le M $. \cref{lem:bootstrap} (the bootstrap step) provides us with
\begin{equation}
\label{eq:diff}
	| n_{q;t} - n_{q;t}^{(\b)} |
	\le C N^{- \frac{2}{3}+ \frac{3}{2} \delta - \frac{r}{2}} \;,
\end{equation}
whenever the bootstrap assumption $ \langle \xi_t, a_{q'}^* a_{q'} \xi_t \rangle \leq C N^{-r} $ holds for all $ q' \in \ZZZ^3 $.

\smallskip

By $ \Vert a_{q'}^\sharp \Vert_{\op} = 1 $, it is obvious that $ \langle \xi_t, a_{q'}^* a_{q'} \xi_t \rangle \le 1 $, so the bootstrap assumption is initially fulfilled for $ r = 0 $.

\smallskip

Using \cref{eq:nqbformula} and applying the estimates \cref{eq:Knbound} and \cref{eq:dgestimate} we obtain
\begin{equation}
\label{eq:nqbscaling}
\begin{aligned}
	 n_{q;t}^{(\b)}
	& = \sum_{m = 1}^\infty \frac{2^{2m-1}}{(2m)!} \sum_{k \in \tilde{\cC}^q \cap \ZZZ^3} \big( t^{2m} K(k)^{2m} \big)_{\alpha_q, \alpha_q} \rho_{q, k}\\
	& \le \sum_{m = 1}^\infty \sum_{k \in \tilde{\cC}^q \cap \ZZZ^3} \frac{2^{2m-1}}{(2m)!} (C \hat{V}_k)^{2m} M^{-1} \fn^{-2}
	= \cO(N^{-\frac{2}{3} + \delta})\;.
\end{aligned}
\end{equation}
Thus
\begin{equation}
n_{q;t} = \lvert n_{q;t} - n_{q;t}^\textnormal{(b)} \rvert + n_{q;t}^\textnormal{(b)} \leq C N^{- \frac{2}{3}+ \frac{3}{2} \delta - \frac r2} + C N^{-\frac{2}{3} + \delta} \;.
 \label{eq:it}
\end{equation}
We now do two bootstrap steps: First, plugging $ r=0 $ into \eqref{eq:it} for any $ q' = q $, we see that \eqref{eq:nqcondition4} holds with $ r = \frac 23 - \frac 32 \delta $. Second, using this improved $ r $ in \eqref{eq:it} yields $n_{q;t} \leq C N^{-\frac{2}{3}+\delta}$, with optimal coefficient $ r = \frac{2}{3} - \delta $. Using this again in \cref{eq:diff} yields
\[
	\left\vert n_q - n_q^{(\b)} \right\vert 
	\leq C N^{-1 + 2 \delta} \;.	\qedhere
\]
\end{proof}

\section{Proof of \cref{prop:main3}}
\label{sec:proofmain3}

\begin{proof}[Proof of \cref{prop:main3}]
Starting from \cref{eq:nqb}, we have to compute the diagonal matrix elements of $\cosh(2 K(k)) - 1$. Recall the definition of $S_1(k)$ from \cref{eq:S1}. We use the identities \cite[(7.4)]{BPSS22}, where in all the proof we suppress the $k$--dependence:
\begin{equation}
	\cosh(K) = \frac{|S_1^T| + |S_1^T|^{-1}}{2} \;, \qquad
	\sinh(K) = \frac{|S_1^T| - |S_1^T|^{-1}}{2} \;,
\end{equation}
This yields
\begin{equation}
\label{eq:cosh2K-1}
	\cosh(2 K) - 1 = 2 \sinh(K)^2 = \frac{1}{2}(|S_1^T|^2 - 2 + |S_1^T|^{-2}) \;.
\end{equation}
The $ |\cI| \times |\cI| $--matrices $ |S_1^T|^2 $ and $ |S_1^T|^{-2} $ can be diagonalized following the steps in \cite[(7.7)]{BPSS22} and thereafter: We introduce
\begin{equation}
\label{eq:U}
	U := \frac{1}{\sqrt{2}} \begin{pmatrix}
		\1 & \1\\
		\1 & -\1\\
	\end{pmatrix} \;,
\end{equation}
with $ \1 $ being the $ |\cI^+| \times |\cI^+| $ identity matrix, and  obtain
\begin{align*}
	U^T |S_1^T|^2 U
	&= \begin{pmatrix}
		d^{\frac{1}{2}} (d^{\frac{1}{2}} (d+2b) d^{\frac{1}{2}})^{-\frac{1}{2}} d^{\frac{1}{2}} & 0\\
		0 & (d+2b)^{\frac{1}{2}} ((d+2b)^{\frac{1}{2}} d (d+2b)^{\frac{1}{2}})^{-\frac{1}{2}} (d+2b)^{\frac{1}{2}}\\
	\end{pmatrix}\\
	U^T |S_1^T|^{-2} U
	&= \begin{pmatrix}
		d^{-\frac{1}{2}} (d^{\frac{1}{2}} (d+2b) d^{\frac{1}{2}})^{\frac{1}{2}} d^{-\frac{1}{2}} & 0\\
		0 & (d+2b)^{-\frac{1}{2}} ((d+2b)^{\frac{1}{2}} d (d+2b)^{\frac{1}{2}})^{\frac{1}{2}} (d+2b)^{-\frac{1}{2}}\\
	\end{pmatrix}
\end{align*}
with the matrices $d$ and $b$ defined in \cref{eq:db}.
The diagonal matrix element of \cref{eq:cosh2K-1} is
\begin{equation}
\label{eq:ABCD}
\begin{aligned}
	(\cosh(2 K) - 1)_{\alpha_q, \alpha_q}
	& = \frac{1}{4} \Big( -4
	+ \langle \alpha_q| d^{\frac{1}{2}} (d^{\frac{1}{2}} (d+2b) d^{\frac{1}{2}})^{-\frac{1}{2}} d^{\frac{1}{2}} | \alpha_q \rangle\\
	&\qquad + \langle \alpha_q| (d+2b)^{\frac{1}{2}} ((d+2b)^{\frac{1}{2}} d (d+2b)^{\frac{1}{2}})^{-\frac{1}{2}} (d+2b)^{\frac{1}{2}} | \alpha_q \rangle\\
	&\qquad + \langle \alpha_q| d^{-\frac{1}{2}} (d^{\frac{1}{2}} (d+2b) d^{\frac{1}{2}})^{\frac{1}{2}} d^{-\frac{1}{2}} | \alpha_q \rangle\\
	&\qquad + \langle \alpha_q| (d+2b)^{-\frac{1}{2}} ((d+2b)^{\frac{1}{2}} d (d+2b)^{\frac{1}{2}})^{\frac{1}{2}} (d+2b)^{-\frac{1}{2}} | \alpha_q \rangle \Big)\\
	& =: \frac{1}{4} \big( -4 + \textnormal{(A)} + \textnormal{(B)} + \textnormal{(C)} + \textnormal{(D)} \big) \;,
\end{aligned}
\end{equation}
with $ |\alpha_q \rangle \in \CCC^{|\cI^+|} $ denoting the canonical basis vector corresponding to the patch $ B_{\alpha_q} $ if $ \alpha_q \in \cI^+ $ or to the patch opposite to $ B_{\alpha_q} $ if $ \alpha_q \in \cI^- $. Using \cref{eq:abbreviations2} we can write
\begin{equation}
\label{eq:dbabbreviation}
\begin{aligned}
	d &= \sum_{\alpha \in \cI^+} \lambda_\alpha |\alpha \rangle \langle \alpha |\;,
	\qquad \textnormal{therefore} \quad  d^s |\alpha \rangle = \lambda_\alpha^s |\alpha \rangle \quad \forall s \in \RRR \;;\\
	b &= g |n \rangle \langle n | \qquad \text{with} \qquad
	|n \rangle = \sum_{\alpha \in \cI^+} n_{\alpha} |\alpha \rangle \quad \textnormal{and} \quad g := \frac{\hat{V}_k}{2 \hbar \kappa N |k|} \;.\\
\end{aligned}
\end{equation}
To evaluate $ \textnormal{(A)}$--$ \textnormal{(D)}$, we use the identities (valid for any symmetric matrix $ A $)
\begin{equation}
\label{eq:integralidentities}
	A^{\frac{1}{2}} = \frac{2}{\pi} \int_0^\infty \left( 1 - \frac{\mu^2}{A + \mu^2} \right) \; \d \mu \;, \qquad
	A^{-\frac{1}{2}} = \frac{2}{\pi} \int_0^\infty \frac{1}{A + \mu^2} \; \d \mu \;,
\end{equation}
and the Sherman--Morrison formula, for all vectors $v,w \in \Cbb^{\lvert \Ical^+\rvert}$,
\begin{equation}
	(A + |v \rangle \langle w|)^{-1}
	= A^{-1} - \frac{A^{-1} |v \rangle \langle w| A^{-1}}{1 + \langle w | A^{-1} | v \rangle} \;.
\end{equation}

\smallskip

\noindent \underline{Evaluation of $\textnormal{(A)}$:} Applying \cref{eq:dbabbreviation} we get
\begin{equation}
	\textnormal{(A)}
	= \lambda_{\alpha_q} \langle \alpha_q| (d^{\frac{1}{2}} (d+2b) d^{\frac{1}{2}})^{-\frac{1}{2}} | \alpha_q \rangle
	= \lambda_{\alpha_q} \langle \alpha_q| (d^2 + 2 \tilde{b})^{-\frac{1}{2}} | \alpha_q \rangle
\end{equation}
with the rank-one operator
\begin{equation}
	\tilde{b}
	:= d^{\frac{1}{2}} b d^{\frac{1}{2}}
	= g d^{\frac{1}{2}} | n \rangle \langle n | d^{\frac{1}{2}}
	= g |\tilde{n} \rangle \langle \tilde{n}|
	\quad \text{where} \quad
	| \tilde{n} \rangle
	:= d^{\frac{1}{2}} | n \rangle
	= \sum_{\alpha \in \cI^+} \lambda_\alpha^{\frac{1}{2}} n_{\alpha} | \alpha \rangle\,.
\end{equation}
Applying the integral identities \cref{eq:integralidentities} and the Sherman--Morrison formula, we obtain
\begin{equation}
\begin{aligned}
	(d^2 + 2 \tilde{b})^{-\frac{1}{2}}
	& = \frac{2}{\pi} \int_0^\infty \; \frac{\d \mu}{\mu^2 + d^2 + 2 \tilde{b}}\\
	& = d^{-1} - \frac{2}{\pi} \int_0^\infty \frac{2 g (\mu^2 + d^2)^{-1} |\tilde{n} \rangle \langle \tilde{n}| (\mu^2 + d^2)^{-1}}{1 + 2 g \langle \tilde{n}| (\mu^2 + d^2)^{-1} | \tilde{n} \rangle} \; \d \mu \;.
\end{aligned}
\end{equation}
Taking the expectation value with $ | \alpha_q \rangle $ and multiplying by $ \lambda_{\alpha_q} $  renders
\begin{equation}
\label{eq:Afinalresult}
\begin{aligned}
	\textnormal{(A)}
	& = \lambda_{\alpha_q} \langle \alpha_q| d^{-1} |\alpha_q \rangle
	- \frac{2}{\pi} \int_0^\infty \frac{2 g \lambda_{\alpha_q} \big| \langle \alpha_q | (\mu^2 + d^2)^{-1} | \tilde{n} \rangle \big|^2}{1 + 2 g \langle \tilde{n} | (\mu^2 + d^2)^{-1} | \tilde{n} \rangle} \; \d \mu\\
	& = 1
	- \frac{2}{\pi} \int_0^\infty \frac{2 g n_{\alpha_q}^2 \lambda_{\alpha_q}^2 (\mu^2 + \lambda_{\alpha_q}^2)^{-2}}{1 + 2 g \sum_{\alpha \in \cI^+} n_{\alpha}^2 (\mu^2 + \lambda_\alpha^2)^{-1} \lambda_\alpha} \; \d \mu\;.
\end{aligned}
\end{equation}

\noindent \underline{Evaluation of $ \textnormal{(B)} $:} By means of the integral identities \cref{eq:integralidentities} we get
\begin{equation}
\begin{aligned}
	&\left((d+2b)^{\frac{1}{2}} d (d+2b)^{\frac{1}{2}}\right)^{-\frac{1}{2}}
	= \frac{2}{\pi} \int_0^\infty \frac{\d \mu}{(d+2b)^{\frac{1}{2}} d (d+2b)^{\frac{1}{2}} + \mu^2}\\
	& = \frac{2}{\pi} \int_0^\infty \left( (d+2b)^{\frac{1}{2}} \left( d + \mu^2 (d+2b)^{-1} \right) (d+2b)^{\frac{1}{2}} \right)^{-1} \d \mu \;.
\end{aligned}
\end{equation}
Thus
\begin{equation}
\label{eq:Bintegral}
\begin{aligned}
	\textnormal{(B)}
	& = \langle \alpha_q | (d+2b)^{\frac{1}{2}} \left((d+2b)^{\frac{1}{2}} d (d+2b)^{\frac{1}{2}}\right)^{-\frac{1}{2}} (d+2b)^{\frac{1}{2}} | \alpha_q \rangle\\
	& = \frac{2}{\pi} \int_0^\infty \langle \alpha_q | \left(d + \mu^2 (d+2b)^{-1}\right)^{-1} | \alpha_q \rangle \; \d \mu \;.
\end{aligned}
\end{equation}
To compute the integrand, we use the Sherman--Morrison formula twice. First
\[
\begin{aligned}
	(d+2b)^{-1}
	& = \left(d + 2 g |n \rangle \langle n |\right)^{-1}
	= d^{-1} - \frac{2 g d^{-1} |n \rangle \langle n | d^{-1}}{1 + 2 g \langle n | d^{-1} | n \rangle}
\end{aligned}
\]
and in the second step
\[
\begin{aligned}
	\left( d + \mu^2 (d+2b)^{-1} \right)^{-1}
	& = \left( d + \mu^2 d^{-1} - \frac{2 g \mu^2}{1 + 2 g \langle n | d^{-1} | n \rangle} d^{-1} | n \rangle \langle n | d^{-1} \right)^{-1}\\
	& = (d + \mu^2 d^{-1})^{-1} + \frac{
		2 g \mu^2 (d^2 + \mu^2)^{-1} | n \rangle \langle n | (d^2 + \mu^2)^{-1}
	}{
		1 + 2 g \langle n | d^{-1} | n \rangle - 2 g \mu^2 \langle n | d^{-1} (d^2 + \mu^2)^{-1} | n \rangle
	}.
\end{aligned}
\]
Using this in \cref{eq:Bintegral} and using the integral identities \cref{eq:integralidentities} for the first summand, we get
\begin{align*}
	\textnormal{(B)}
	& = \frac{2}{\pi} \int_0^\infty \frac{\d \mu}{\lambda_{\alpha_q}^2 + \mu^2} \lambda_{\alpha_q} + \frac{2}{\pi} \int_0^\infty \frac{
		2 g \mu^2 \big| \langle \alpha_q | (d^2 + \mu^2)^{-1} | n \rangle \big|^2
	}{
		1 + 2 g \langle n | d^{-1} | n \rangle - 2 g \mu^2 \langle n | d^{-1} (d^2 + \mu^2)^{-1} | n \rangle
	} \; \d \mu\\
	& = 1 + \frac{2}{\pi} \int_0^\infty \frac{2 g n_{\alpha_q}^2 \mu^2 (\mu^2 + \lambda_{\alpha_q}^2)^{-2}}{1 + 2 g \sum_{\alpha \in \cI^+} n_{\alpha}^2 (\mu^2 + \lambda_\alpha^2)^{-1} \lambda_\alpha} \; \d \mu\;. \tagg{eq:Bfinalresult}
\end{align*}
This agrees with $ \textnormal{(A)} $ up to a replacement of $ -2 g n_{\alpha_q}^2 \lambda_{\alpha_q}^2 $ by $ 2 g n_{\alpha_q}^2 \mu^2 $ in the numerator.

\smallskip

\noindent \underline{Evaluation of $ \textnormal{(C)} $:} As in $ \textnormal{(A)} $ we use \cref{eq:dbabbreviation} to write
\begin{equation}
\label{eq:Cformula}
	\textnormal{(C)}
	= \lambda_{\alpha_q}^{-1} \langle \alpha_q| (d^2 + 2 \tilde{b})^{\frac{1}{2}} | \alpha_q \rangle \;.
\end{equation}
Applying the integral identities \cref{eq:integralidentities} and the Sherman--Morrison formula, we obtain
\begin{equation}
\begin{aligned}
	(d^2 + 2 \tilde{b})^{\frac{1}{2}}
	& = \frac{2}{\pi} \int_0^\infty \left( 1 - \frac{\mu^2}{\mu^2 + d^2 + 2 g |\tilde{n} \rangle \langle \tilde{n} |} \right) \; \d \mu\\
	& = d + \frac{2}{\pi} \int_0^\infty \mu^2 \frac{2 g (\mu^2 + d^2)^{-1} |\tilde{n} \rangle \langle \tilde{n} | (\mu^2 + d^2)^{-1} }{1 + 2 g \langle \tilde{n} | (\mu^2 + d^2)^{-1} | \tilde{n} \rangle } \; \d \mu\;.
\end{aligned}
\end{equation}
Plugging this into \cref{eq:Cformula} yields
\begin{equation}
\begin{aligned}
\label{eq:Cfinalresult}
	\textnormal{(C)}
	& = \lambda_{\alpha_q}^{-1} \langle \alpha_q| d | \alpha_q \rangle
	+ \frac{2}{\pi} \int_0^\infty \mu^2 \lambda_{\alpha_q}^{-1} \frac{2 g \big| \langle \alpha_q |(\mu^2 + d^2)^{-1} |\tilde{n} \rangle \big|^2 }{1 + 2 g \langle \tilde{n} | (\mu^2 + d^2)^{-1} | \tilde{n} \rangle } \; \d \mu\\
	& = 1 + \frac{2}{\pi} \int_0^\infty \frac{2 g n_{\alpha_q}^2 \mu^2 (\mu^2 + \lambda_{\alpha_q}^2)^{-2} }{1 + 2 g \sum_{\alpha \in \cI^+} n_{\alpha}^2 (\mu^2 + \lambda_\alpha^2)^{-1} \lambda_\alpha} \; \d \mu
	= \textnormal{(B)}\;.
\end{aligned}
\end{equation}

\noindent \underline{Evaluation of $\textnormal{(D)}$:} As for $\textnormal{(B)}$, we will make the factors of $ (d+2b)^{\frac{1}{2}} $ cancel. Let this time $ A := (d+2b)^{\frac{1}{2}} $ and $ B := d $. Then
\begin{equation}
\begin{aligned}
	((d+2b)^{\frac{1}{2}} d (d+2b)^{\frac{1}{2}})^{\frac{1}{2}}
	= (A B A)^{\frac{1}{2}}
	& = \frac{2}{\pi} \int_0^\infty \left( 1 - \frac{\mu^2}{A B A + \mu^2} \right) \; \d \mu\\
	& = \frac{2}{\pi} \int_0^\infty A B A (A B A + \mu^2)^{-1} \; \d \mu \;.
\end{aligned}
\end{equation}
Now, using $ Y X^{-1} Y = (Y^{-1} X Y^{-1})^{-1} $ for any matrices $ X $ and $ Y $, we get
\begin{equation}
\begin{aligned}
	& A B A (A B A + \mu^2)^{-1}
	= A B A \left( (A B A)^2 + \mu^2(A B A) \right)^{-1} A B A\\
	& = AB \left( B A^2 B + \mu^2 B \right)^{-1} B A
	= A \left(A^2 + \mu^2 B^{-1} \right)^{-1} A\;.
\end{aligned}
\end{equation}
Thus
\begin{align*}
	&(d+2b)^{-\frac{1}{2}} ((d+2b)^{\frac{1}{2}} d (d+2b)^{\frac{1}{2}})^{\frac{1}{2}} (d+2b)^{-\frac{1}{2}}
	= \frac{2}{\pi} \int_0^\infty (A^2 + \mu^2 B^{-1})^{-1} \; \d \mu\\
	& = d \frac{2}{\pi} \int_0^\infty \frac{\d \mu}{\mu^2 + d^2 + 2db}
	= d \frac{2}{\pi} \int_0^\infty \left( \frac{1}{\mu^2 + d^2} - \frac{2 g (\mu^2 + d^2)^{-1} d |n \rangle \langle n | \left(\mu^2 + d^2\right)^{-1}}{1 + 2 g \langle n | (\mu^2 + d^2)^{-1} d | n \rangle} \right) \d \mu\\
	& = 1 - \frac{2}{\pi} \int_0^\infty \frac{2 g (\mu^2 + d^2)^{-1} d^2 |n \rangle \langle n | \left(\mu^2 + d^2\right)^{-1}}{1 + 2 g \langle n | (\mu^2 + d^2)^{-1} d | n \rangle} \; \d \mu\;.
\end{align*}
In the expectation value of $ | \alpha_q \rangle $ we obtain
\begin{equation}
\label{eq:Dfinalresult}
	\textnormal{(D)}
	= 1
	- \frac{2}{\pi} \int_0^\infty \frac{2 g n_{\alpha_q}^2 \lambda_{\alpha_q}^2 (\mu^2 + \lambda_{\alpha_q}^2)^{-2}}{1 + 2 g \sum_{\alpha \in \cI^+} n_{\alpha}^2 (\mu^2 + \lambda_\alpha^2)^{-1} \lambda_\alpha} \; \d \mu
	= \textnormal{(A)}\;.
\end{equation}

Plugging \cref{eq:Afinalresult}, \cref{eq:Bfinalresult}, \cref{eq:Cfinalresult} and \cref{eq:Dfinalresult} into \cref{eq:ABCD} results in
\begin{equation}
(\cosh(2 K) - 1)_{\alpha_q, \alpha_q}
	= \frac{1}{\pi} \int_0^\infty \frac{2 g n_{\alpha_q}^2 (\mu^2 - \lambda_{\alpha_q}^2) (\mu^2 + \lambda_{\alpha_q}^2)^{-2}}{1 + 2 g \sum_{\alpha \in \cI^+} n_{\alpha}^2 (\mu^2 + \lambda_\alpha^2)^{-1} \lambda_\alpha} \; \d \mu\;.
\end{equation}
Proposition \ref{prop:main3} now follows by inserting this expression into \cref{eq:nqb}.
\end{proof}

\section{Conclusion of the Proof of \cref{thm:main,thm:jump}}
\label{sec:proofmain}
With \cref{thm:main2} and \cref{prop:main3} at hand, it remains to show that the replacements of $ Q_k(\mu)$ by $Q_k^{(0)}(\mu) $, of $ \lambda_{\alpha_q,k}$ by $\lambda_{q,k} $, and of $ \tilde{\cC}^q$ by $\cC^q $ only produce small errors. We start with the following estimate.

\begin{lemma}
\label{lem:QminusQnot}
Let $Q_k(\mu)$ as defined in \cref{eq:abbreviations2} and $Q_k^{(0)}(\mu)$ as defined in \cref{eq:abbreviations1}. Then
\[
Q_k(\mu) = Q_k^{(0)}(\mu) + \mathcal{O}(M^{\frac{1}{2}}N^{-\frac{1}{3}+\delta} + M^{-\frac{1}{2}} N^{ \delta} + N^{-\delta})\;,
\]
with the error bound being uniform in $\mu \in [0,\infty)$.
\end{lemma}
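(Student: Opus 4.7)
The strategy is to recognize $Q_k(\mu)$ as a Riemann-type sum approximating a surface integral on the Fermi sphere which evaluates explicitly to $Q_k^{(0)}(\mu)$. I would rewrite the sum over patches as a sum over lattice points,
\[
\sum_{\alpha \in \cI_k^+} n_{\alpha,k}^2 \frac{\lambda_{\alpha,k}}{\mu^2+\lambda_{\alpha,k}^2} = \sum_{p \in \ZZZ^3} \chi\bigl[p\in\BFc,\, p - k \in \BF,\, \alpha_p \in \cI_k^+\bigr] \frac{\lambda_{\alpha_p,k}}{\mu^2+\lambda_{\alpha_p,k}^2},
\]
where $\alpha_p$ is the index of the patch containing $p$ (and the indicator vanishes if no such patch exists). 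The continuum counterpart, after integrating the radial coordinate across the thin Fermi shell $\{p \in \RRR^3 : p \in B_{\F}^c,\, p-k \in B_{\F}\}$, reduces in polar coordinates with $\hat k$ as pole to
\[
\int_{\hat k \cdot \hat\omega > 0} k_F^2 |k| (\hat k \cdot \hat\omega) \frac{\hat k \cdot \hat\omega}{\mu^2 + (\hat k \cdot \hat\omega)^2} \,\d\hat\omega = 2\pi k_F^2 |k|\bigl(1 - \mu \arctan(1/\mu)\bigr).
\]
Multiplying by $\hat V_k/(\hbar \kappa N |k|)$ and using $k_F = \kappa N^{1/3}(1+\cO(N^{-1/3}))$, $\hbar = N^{-1/3}$ gives the leading term $Q_k^{(0)}(\mu) = 2\pi \kappa \hat V_k(1 - \mu\arctan(1/\mu))$.

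I would then split the error into three pieces matching the three terms in the bound.
\emph{(i) Corridors.} Lattice points in the shell that lie between patches, rather than inside any $B_\alpha$, occupy ``corridors'' of width $\cO(1)$ around patch boundaries. By \cref{eq:patchreg} each patch has perimeter $\cO(N^{1/3}M^{-1/2})$ on the Fermi sphere, so the total corridor area is $\cO(N^{1/3}M^{1/2})$; since the shell has radial thickness $\cO(1)$, the missing lattice count is $\cO(N^{1/3}M^{1/2})$. Weighted by $\lambda/(\mu^2+\lambda^2) \le 1/\lambda \le N^\delta$, uniformly in $\mu$ (using $\alpha \in \cI_k^+$), and multiplied by the prefactor of order $N^{-2/3}$, this produces $M^{1/2}N^{-1/3+\delta}$.
\emph{(ii) Riemann smoothing.} For $p$ in a patch, replacing $\lambda_{\alpha_p,k}$ by $|\hat k \cdot \hat p|$ changes the argument of $g(\lambda) := \lambda/(\mu^2+\lambda^2)$ by at most $\cO(M^{-1/2})$, the angular patch diameter. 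The pointwise derivative satisfies $|g'(t)| = |\mu^2 - t^2|/(\mu^2+t^2)^2 \le t^{-2}$, so the $\mu$-uniform estimate $\int_{N^{-\delta}}^1 t\,|g'(t)|\,\d t \le \log(N^\delta) \le C N^\delta$ combined with the Fermi-shell volume $\cO(N^{2/3})$ and the $M^{-1/2}$ pointwise variation yields an aggregate error of order $N^{2/3+\delta}M^{-1/2}$, which becomes $M^{-1/2}N^\delta$ after the prefactor.
\emph{(iii) Belt.} Extending the angular domain from $\{\hat k \cdot \hat\omega > N^{-\delta}\}$ to the full hemisphere $\{\hat k \cdot \hat\omega > 0\}$ adds a belt of angular area $\cO(N^{-\delta})$ on which the \emph{effective} integrand $\lambda \cdot g(\lambda) = \lambda^2/(\mu^2+\lambda^2) \le 1$ stays uniformly bounded in $\mu$, yielding $N^{-\delta}$.

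\emph{Main obstacle.} The principal difficulty is uniformity in $\mu \in [0,\infty)$. The function $g(\lambda)$ is singular as $\lambda \to 0$ when $\mu = 0$, forcing the cutoff $\lambda_{\alpha,k} \ge N^{-\delta}$ from the $\cI_k^+$ restriction and producing the $N^\delta$ factors in (i) and (ii). The belt step (iii) escapes a $\mu$-dependent penalty only because the combined integrand $\lambda^2/(\mu^2+\lambda^2)$, which arises after integrating out the shell's radial depth of size $|k| \lambda$, remains bounded by $1$ for every $\mu \ge 0$. A secondary technical point is that the lattice-vs-volume error within each individual slab should be controlled by the slab's radial side surface rather than its top and bottom faces (which coincide with pieces of the Fermi sphere and are shared between adjacent configurations), so that the corridor count in (i) is the dominant contribution.
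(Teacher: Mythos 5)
Your overall route is the same as the paper's: interpret the sum in $Q_k(\mu)$ as a Riemann sum for the hemisphere integral $2\pi\kappa\hat V_k\int_{\SSS^2_{+}}\frac{\cos^2\theta}{\mu^2+\cos^2\theta}\,\d\omega$, and attribute the three error terms to, respectively, the per-patch lattice-count error plus corridors ($M^{1/2}N^{-1/3+\delta}$), the variation of the integrand across a patch ($M^{-1/2}N^{\delta}$), and the equatorial belt excluded by $\cI_k^+$ ($N^{-\delta}$). Steps (ii) and (iii) are correct and uniform in $\mu$ for the reasons you give. The paper itself obtains the first error term by citing \cite[Proposition~3.1]{BNPSS20}, which states exactly $n_{\alpha,k}^2 = k_\F^2|k|\,\lambda_{\alpha,k}\,\sigma(p_\alpha)\bigl(1+\cO(M^{1/2}N^{-1/3+\delta})\bigr)$, and then follows \cite[(5.15)--(5.17)]{BNPSS20} for the Riemann sum.

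The one genuine gap is precisely the point you flag as a ``secondary technical point'' in step (i): why the lattice-count-versus-volume error of each patch--lens piece is governed by its lateral surface $\cO(N^{1/3}M^{-1/2})$ rather than by its spherical top and bottom faces of area $\cO(N^{2/3}M^{-1})$. If the spherical faces did contribute at the level of their area, the total would be $\cO(N^{2/3})$ missed/overcounted points, and after the weight $N^{\delta}$ and the prefactor $N^{-2/3}$ you would get $\cO(N^{\delta})$ — not small. Your stated reason (the faces ``are shared between adjacent configurations'') does not resolve this. The actual mechanism, which is the content of \cite[Proposition~3.1]{BNPSS20}, is an exact counting argument along lattice lines parallel to $k$: since $k\in\ZZZ^3$, each such line intersects $B_\F$ in an interval of lattice points, and the conditions $p\in B_\F^c$, $p-k\in B_\F$ select \emph{exactly one} point per line that crosses $\partial B_\F$ transversally (guaranteed on $\cI_k^+$ by $|k\cdot\hat\omega_\alpha|\ge N^{-\delta}$). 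Hence the spherical faces contribute no counting error at all; the only errors come from lines near the lateral boundary of $B_\alpha$ (and from whether both endpoints of the pair lie in the same patch, which the $2R$-corridors and the radial thickness $2R$ of $B_\alpha$ control). With that argument supplied — or with the citation, as in the paper — your decomposition closes and yields exactly the claimed bound.
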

\begin{proof}
Recall that, with a different convention than \cite{BNPSS20}, in the present paper we chose $g_k = \frac{\hat{V}_k}{2 \hbar \kappa N |k|} \sim N^{-\frac{2}{3}}$.
According to \cite[Proposition~3.1 and (3.17)]{BNPSS20} we have
\[
Q_k(\mu) = 2 g_k \lvert k\rvert \kF^2 \sum_{\alpha \in \Ical_k^+} \frac{\lambda_{\alpha,k}^2}{\lambda_{\alpha,k}^2 + \mu^2} \sigma(p_\alpha) \left( 1 + \mathcal{O}\left(M^{\frac{1}{2}} N^{-\frac{1}{3}+\delta}\right)\right).
\]
Here $\sigma(p_\alpha) = 4\pi M^{-1} + \mathcal{O}\big( N^{-\frac{1}{3}} M^{-\frac{1}{2}} \big)$ is the surface measure of the patch $p_\alpha := \kF^{-1} P_\alpha$ on the unit sphere.
(In \cite{BNPSS20}, $ \lambda_{\alpha,k} $ is called $ u_\alpha(k)^2 $. Moreover it was assumed that $ \delta \le \frac{1}{6}- \frac{\varepsilon}{2} $ with $ \varepsilon \in \RRR $ being the parameter in the choice $ M = N^{\frac{1}{3} + \varepsilon} $.
As pointed out already in \cite[Lemma 5.1]{BNPSS21}, this constraint is superfluous.)
As in \cite[(5.15)--(5.17)]{BNPSS20}, we understand $ Q_k(\mu) $ as the Riemann sum of an integral over half of the unit sphere. This integral is
\[
\int_{\mathbb{S}_{\textnormal{half}}^2} \frac{\cos^2 \theta}{\cos^2 \theta + \mu^2} \di\theta \di\varphi = 2\pi \left(1 - \mu \operatorname{arctan}\left(\frac{1}{\mu}\right)\right)\;.
\]
The claimed error bound is then the same as following \cite[(5.17)]{BNPSS20}.
\end{proof}

We can now establish the following proposition, which is almost our main result, except that we still have $ \tilde{\cC}^q $ \eqref{eq:cCqtilde} in the place of  $ \cC^q $ \eqref{eq:cCq}.
\begin{proposition}
\label{prop:maintilde}
If $\hat{V}$ is non-negative and compactly supported, and if $ q \in B_{\alpha_q} $ for some $ 1 \le \alpha_q \le M $, then
\begin{equation}
\label{eq:maintilde}
	n_q
	= \sum_{k \in \tilde{\cC}^q\cap \Zbb^3} \frac{1}{\pi} \int_0^\infty \frac{g_k (\mu^2 - \lambda_{q,k}^2)(\mu^2 + \lambda_{q,k}^2)^{-2}}{1 + Q_k^{(0)}(\mu)} \; \d \mu + \cE \;,
\end{equation}
where
\begin{equation}
\label{eq:mainerrortilde}
\lvert \cE \rvert
	\le C \epsilon^{-1} N^{-\frac{2}{3} - \frac{1}{12}}\;.
\end{equation}
\end{proposition}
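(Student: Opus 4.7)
The plan is to assemble \cref{prop:maintilde} from \cref{thm:main2} and \cref{prop:main3} via two integrand substitutions. If $q$ lies in no patch we declare $\tilde{\Ccal}^q := \emptyset$, so the sum in \cref{eq:maintilde} vanishes and the estimate reduces to $|n_q| \le O(N^{-1+2\delta})$ from \cref{thm:main2}. Otherwise $q \in B_{\alpha_q}$, and combining \cref{thm:main2} with \cref{prop:main3} gives
\[
n_q = \sum_{k \in \tilde{\Ccal}^q \cap \Zbb^3} \frac{1}{\pi} \int_0^\infty \frac{g_k\,(\mu^2 - \lambda_{\alpha_q, k}^2)}{(\mu^2+\lambda_{\alpha_q, k}^2)^2\, (1+Q_k(\mu))}\,\di\mu \;+\; O(N^{-1+2\delta})
\]
with $g_k := \hat{V}_k / (2 \hbar\kappa N |k|)$. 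It then remains to perform (i) $Q_k \mapsto Q_k^{(0)}$ and (ii) $\lambda_{\alpha_q,k} \mapsto \lambda_{q,k}$ inside the integrand.

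For (i), since $Q_k, Q_k^{(0)} \ge 0$, the elementary inequality $|(1+Q_k)^{-1} - (1+Q_k^{(0)})^{-1}| \le |Q_k - Q_k^{(0)}|$ together with \cref{lem:QminusQnot} supplies a $\mu$-uniform bound $\Delta Q \le C\,(M^{1/2} N^{-1/3+\delta} + M^{-1/2} N^\delta + N^{-\delta})$, while the explicit evaluation $\int_0^\infty |\mu^2-\lambda^2|/(\mu^2+\lambda^2)^2\,\di\mu = C/\lambda$ produces a per-summand error of order $g_k\, \Delta Q / \lambda_{\alpha_q, k}$. For (ii), the patch geometry $|\hat{q} - \hat{\omega}_{\alpha_q}| \le C\,\diam(B_{\alpha_q}) / |q|$, combined with \cref{eq:patchreg} and $|q| \ge c N^{1/3}$, yields $|\lambda_{q,k} - \lambda_{\alpha_q,k}| \le C M^{-1/2}$; a direct computation of $\partial_\lambda\bigl[(\mu^2-\lambda^2)/(\mu^2+\lambda^2)^2\bigr]$ followed by elementary integration gives $|\partial_\lambda \mathcal{I}_k(\lambda, Q_k^{(0)})| \le C g_k / \lambda^2$, so the $\lambda$-substitution costs $O(g_k M^{-1/2}/\lambda^2)$ per summand. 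The number of lattice points in $\tilde{\Ccal}^q \subset B_R(0)$ is $O(1)$ since $R$ is fixed.

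The crucial input making both (i) and (ii) acceptable is a uniform lower bound $\lambda_{\alpha_q,k} \ge \epsilon/2$ for $k \in \tilde{\Ccal}^q$: the condition $\alpha_q \in \Ical_k$ built into $\tilde{\Ccal}^q$ forces $\lambda_{\alpha_q,k} \ge N^{-\delta}$, and since $M \gg N^{2\delta}$ by \cref{eq:Mdelta} we have $C M^{-1/2} \ll N^{-\delta}$, so $\lambda_{q,k}$ cannot vanish; the hypothesis $q \in \Qcal_\epsilon$ then gives $\lambda_{q,k} \ge \epsilon$, and finally $\lambda_{\alpha_q,k} \ge \epsilon - C M^{-1/2} \ge \epsilon/2$ for $N$ large. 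Choosing $\delta = \tfrac{1}{12}$ and $M \sim N^{1/3}$ (both compatible with \cref{eq:Mdelta}) balances the three error sources: \cref{thm:main2} contributes $O(N^{-5/6})$, substitution (i) contributes the dominant $O(\epsilon^{-1} N^{-2/3 - 1/12})$, and substitution (ii) contributes $O(\epsilon^{-2} N^{-5/6})$, which is absorbed into the substitution-(i) bound whenever $\epsilon \gtrsim N^{-1/12}$ (for smaller $\epsilon$ the target bound $\epsilon^{-1} N^{-2/3-1/12}$ already exceeds the trivial estimate $n_q \le 1$). I expect the main obstacle to be this careful tracking of the $\epsilon$-dependence, in particular controlling the $\epsilon^{-2}$ blow-up of the Lipschitz constant in (ii) and ensuring it is absorbed uniformly into the principal $\epsilon^{-1}$ error.
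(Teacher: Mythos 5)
Your route is the same as the paper's: split the error into $|n_q-n_q^{(\b)}|$ (Theorem~\ref{thm:main2}), the substitution $Q_k\mapsto Q_k^{(0)}$ (Lemma~\ref{lem:QminusQnot} plus the explicit integral $\sim\lambda^{-1}$), and the substitution $\lambda_{\alpha_q,k}\mapsto\lambda_{q,k}$, with the same parameter choice $\delta=\tfrac1{12}$, $M=N^{1/3}$. The first two pieces are handled correctly.

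The gap is in step (ii). Your Lipschitz estimate costs $O(g_k M^{-1/2}\lambda^{-2})=O(\epsilon^{-2}N^{-5/6})$ per summand, and your proposed absorption of the extra $\epsilon^{-1}$ fails: for $\epsilon\lesssim N^{-1/12}$ you claim the target bound exceeds the trivial estimate $n_q\le 1$, but $\epsilon^{-1}N^{-3/4}\le 1$ throughout the window $N^{-3/4}\lesssim\epsilon\lesssim N^{-1/12}$, and in any case $\cE$ is the difference between $n_q$ and the main term, the latter being as large as $\epsilon^{-1}N^{-2/3}\gg\epsilon^{-1}N^{-3/4}$, so no trivial bound of the required size is available there. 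The repair is cheap and uses a fact you already state for another purpose: for $k\in\tilde{\cC}^q$ the condition $\alpha_q\in\cI_k$ gives $\lambda_{\alpha_q,k}=|k\cdot\hat\omega_{\alpha_q}|/|k|\ge N^{-\delta}/R$. Writing the step-(ii) error per summand exactly as
\begin{equation*}
 C g_k\,\bigl|\lambda_{q,k}^{-1}-\lambda_{\alpha_q,k}^{-1}\bigr|
 = C g_k\,\frac{|\lambda_{q,k}-\lambda_{\alpha_q,k}|}{\lambda_{q,k}\,\lambda_{\alpha_q,k}}
 \le C g_k\,M^{-1/2}\,\epsilon^{-1}\,R\,N^{\delta}
\end{equation*}
and inserting $M=N^{1/3}$, $\delta=\tfrac1{12}$ yields $C\epsilon^{-1}N^{-2/3-1/12}$, i.e.\ only one power of $\epsilon^{-1}$; the two factors of $\lambda^{-1}$ are bounded by \emph{different} lower bounds, $\epsilon$ and $N^{-\delta}/R$. (The paper achieves the same effect via the multiplicative form $\lambda_{\alpha_q,k}=\lambda_{q,k}(1+\cO(M^{-1/2}))$ of the patch-geometry estimate \cref{eq:lambdaerror}.) With this modification your argument closes.
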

\begin{proof}
 With $n_q^{(\b)}$ as defined in \cref{eq:main3}, the error can be decomposed into three terms as
\begin{align*}
	\lvert \cE \rvert & \le | n_q - n_q^{(\b)} |
	+ \bigg\lvert n_q^{(\b)}
	- \sum_{k \in \tilde{\cC}^q\cap \ZZZ^3} \frac{1}{\pi} \int_0^\infty \frac{g_k (\mu^2 - \lambda_{\alpha_q,k}^2)(\mu^2 + \lambda_{\alpha_q,k}^2)^{-2}}{1 + Q_k^{(0)}(\mu)} \; \d \mu \bigg\rvert\\
	& \quad + \bigg\lvert \sum_{k \in \tilde{\cC}^q\cap \ZZZ^3} \frac{1}{\pi} \int_0^\infty \left( \frac{g_k (\mu^2 - \lambda_{\alpha_q,k}^2)(\mu^2 + \lambda_{\alpha_q,k}^2)^{-2}}{1 + Q_k^{(0)}(\mu)}
	- \frac{g_k (\mu^2 - \lambda_{q,k}^2)(\mu^2 + \lambda_{q,k}^2)^{-2}}{1 + Q_k^{(0)}(\mu)} \right) \d \mu \bigg\rvert\\
	& =: \cE_{\I} + \cE_{\II} + \cE_{\III}\;. \tagg{eq:threesuberrors}
\end{align*}
Theorem \ref{thm:main2} already shows that
\begin{equation}
\label{eq:cEIfinal}
	\cE_{\I} \le C N^{-1 + 2 \delta} \;.
\end{equation}
To express the other two error terms more compactly we define
\begin{align*}
	f_k(\mu) & := \frac{g_k (\mu^2 - \lambda_{\alpha_q,k}^2)(\mu^2 + \lambda_{\alpha_q,k}^2)^{-2}}{1 + Q_k(\mu)}\;, &
		f_k^{(0)}(\mu) & := \frac{g_k (\mu^2 - \lambda_{\alpha_q,k}^2)(\mu^2 + \lambda_{\alpha_q,k}^2)^{-2}}{1 + Q_k^{(0)}(\mu)}\;,\\
	\tilde{f}_k^{(0)}(\mu) & := \frac{g_k (\mu^2 - \lambda_{q,k}^2)(\mu^2 + \lambda_{q,k}^2)^{-2}}{1 + Q_k^{(0)}(\mu)}\;. \tagg{eq:fkbullet}
\end{align*}
According to \cref{prop:main3} we can write
\begin{equation}
	\cE_{\II}
	= \bigg\lvert \sum_{k \in \tilde{\cC}^q\cap \ZZZ^3} \frac{1}{\pi} \int_0^\infty (f_k(\mu) - f_k^{(0)}(\mu)) \; \d \mu \bigg\rvert
\end{equation}
while the error $ \cE_{\III} $ becomes
\begin{equation}
\label{eq:cEIIIformula1}
	\cE_{\III}
	= \bigg\lvert \sum_{k \in \tilde{\cC}^q\cap \ZZZ^3} \frac{1}{\pi} \int_0^\infty (f_k^{(0)}(\mu) - \tilde{f}_k^{(0)}(\mu)) \; \d \mu \bigg\rvert \;.
\end{equation}
To estimate $\mathcal{E}_{\II}$, we use that according to \cref{lem:QminusQnot} we have $Q_k(\mu) - Q_k^{(0)}(\mu) = \mathcal{O}(M^{\frac{1}{2}}N^{-\frac{1}{3}+\delta} + M^{-\frac{1}{2}} N^{ \delta} + N^{-\delta})$ uniformly in $\mu \in [0,\infty)$.
Therefore
\begin{align*}
	\cE_{\II}
	& = \bigg\lvert \sum_{k \in \tilde{\cC}^q\cap \ZZZ^3} \frac{1}{\pi} \int_0^\infty g_k \frac{\mu^2 - \lambda_{\alpha_q,k}^2}{(\mu^2 + \lambda_{\alpha_q,k}^2)^2} \frac{Q_k(\mu) - Q_k^{(0)}(\mu)}{(1+Q_k(\mu))(1+Q_k^{(0)}(\mu))}\, \d \mu \bigg\rvert \\
	& \leq \sum_{k \in \tilde{\cC}^q\cap \ZZZ^3} \frac{g_k}{\pi} \int_0^\infty \bigg\lvert \frac{\mu^2 - \lambda_{\alpha_q,k}^2}{(\mu^2 + \lambda_{\alpha_q,k}^2)^2} \bigg\rvert \, \d \mu \;  \mathcal{O}(M^{\frac{1}{2}}N^{-\frac{1}{3}+\delta} + M^{-\frac{1}{2}}N^{ \delta} + N^{-\delta})\;.		\tagg{eq:fint}
\end{align*}
Here we used $ Q_k(\mu) \geq 0$ and $Q_k^{(0)}(\mu) \ge 0 $. Since $(\mu^2 - \lambda_{\alpha_q,k}^2)(\mu^2 + \lambda_{\alpha_q,k}^2)^{-2}$ has the antiderivative $-\mu(\mu^2 + \lambda_{\alpha_q,k}^2)^{-1}$, the integral is found to be $\lambda_{\alpha_q,k}^{-1}$. Further, the opening angle of a patch is of order $ M^{- \frac{1}{2}} \ll \lambda_{\alpha_q,k} $, so
\begin{equation}
\label{eq:lambdaerror}
	\lambda_{\alpha_q,k}
	= |\hat{k} \cdot \hat{\omega}_{\alpha_q}|
	= |\hat{k} \cdot \hat{q}| (1 + \cO (M^{- \frac{1}{2}}))
	= \lambda_{q,k} (1 + \cO (M^{- \frac{1}{2}}))\;.
\end{equation}
The condition \eqref{eq:cQepsilon} from \cref{thm:main} implies $\lambda_{q, k}^{-1} \le \epsilon^{-1} $ and $\lambda_{\alpha_q, k}^{-1} \le \epsilon^{-1} $. We have $g_k \sim N^{-\frac{2}{3}}$ and the sum over $ k $ is finite, so we obtain
\begin{equation}
\label{eq:cEIIafinal}
	\cE_{\II}
	= \epsilon^{-1} N^{-\frac{2}{3}} \mathcal{O}(M^{\frac{1}{2}}N^{-\frac{1}{3}+\delta} + M^{-\frac{1}{2}} N^{ \delta} + N^{-\delta})\;.
\end{equation}
Now we turn to $\Ecal_\III$. We define the function $h_\mu: \RRR \to \RRR$ by
\[h_\mu(x) := \frac{\mu^2 - x^2}{(\mu^2 + x^2)^2}\;.\]
Recalling that $Q_k^{(0)}(\mu) \geq 0$ for all $\mu \in [0,\infty)$, and using the Fubini theorem, we obtain
\begin{align*}
 \cE_{\III}
	& = \bigg\lvert \sum_{k \in \tilde{\cC}^q\cap \ZZZ^3} \frac{1}{\pi} \int_0^\infty \frac{g_k}{1 + Q_k^{(0)}(\mu)} \left( \frac{\mu^2 - \lambda_{\alpha_q,k}^2}{(\mu^2 + \lambda_{\alpha_q,k}^2)^{2}} -
	 \frac{ \mu^2 - \lambda_{q,k}^2}{(\mu^2 + \lambda_{q,k}^2)^{2}}\right) \d \mu \bigg\rvert \\
	& = \bigg\lvert \sum_{k \in \tilde{\cC}^q\cap \ZZZ^3} \frac{1}{\pi} \int_0^\infty \frac{g_k}{1 + Q_k^{(0)}(\mu)} \int_{\lambda_{q,k}}^{\lambda_{\alpha_q,k}} \frac{\partial h_\mu}{\partial x}(x) \,\di x \, \d \mu \bigg\rvert \\
	& \leq  \sum_{k \in \tilde{\cC}^q\cap \ZZZ^3} \frac{g_k}{\pi} \int_0^\infty  \int_{\lambda_{q,k}}^{\lambda_{\alpha_q,k}} \left\vert\frac{\partial h_\mu}{\partial x}(x)\right\vert \,\di x \, \d \mu  \quad  = \sum_{k \in \tilde{\cC}^q\cap \ZZZ^3} \frac{g_k}{\pi}   \int_{\lambda_{q,k}}^{\lambda_{\alpha_q,k}} \int_0^\infty \left\vert\frac{\partial h_\mu}{\partial x}(x)\right\vert  \, \d \mu \,\di x \;.
\end{align*}
Since $\frac{\partial h_\mu}{\partial x}(x) = 2x\frac{x^2 - 3 \mu^2}{(\mu^2 + x^2)^3}$ we conclude that
\[
\cE_{\III} \leq \sum_{k \in \tilde{\cC}^q\cap \ZZZ^3} \frac{g_k}{\pi}   \int_{\lambda_{q,k}}^{\lambda_{\alpha_q,k}} \left\vert 2x\right\rvert  \bigg( - \int_0^{x/\sqrt{3}}   \frac{3\mu^2 - x^2}{(\mu^2 + x^2)^3}  \, \d \mu + \int_{x/\sqrt{3}}^\infty   \frac{3\mu^2 - x^2}{(\mu^2 + x^2)^3}  \, \d \mu \bigg) \,\di x \;. \]
An antiderivative of $(3\mu^2 - x^2)(\mu^2 + x^2)^{-3}$ w.\,r.\,t.~$\mu$ is given by $-\mu(\mu^2 + x^2)^{-2}$. Thus
\begin{align*}
\cE_{\III} & \leq \sum_{k \in \tilde{\cC}^q\cap \ZZZ^3} \frac{g_k}{\pi}   \int_{\lambda_{q,k}}^{\lambda_{\alpha_q,k}}   \frac{\left\vert x\right\rvert9}{4\sqrt{3}} \frac{\di x}{x^3}
= \frac{9}{4\pi \sqrt{3}} \sum_{k \in \tilde{\cC}^q\cap \ZZZ^3} g_k  \frac{\left\lvert\lambda_{q,k} - \lambda_{\alpha_q,k} \right\rvert}{\lambda_{\alpha_q,k} \lambda_{q,k}} \;.
\end{align*}
From \cref{eq:lambdaerror} we get, with a constant $C > 0$ independent of $k$, $M$, and $\epsilon$, the bound
\[
\frac{\left\lvert\lambda_{q,k} - \lambda_{\alpha_q,k} \right\rvert}{\lambda_{\alpha_q,k} \lambda_{\alpha_q,k}} \leq \lambda_{q,k}^{-1}\mathcal{O}(M^{-\frac{1}{2}}) \leq C \epsilon^{-1} M^{-\frac{1}{2}}\;.
\]
Since $g_k \sim N^{-\frac{2}{3}}$ and the sum over $k$ has finitely many summands, we conclude that
\begin{equation}
\label{eq:EIII}
 \Ecal_{\III} \leq \epsilon^{-1} \mathcal{O}\big( N^{-\frac{2}{3}} M^{-\frac{1}{2}} \big) \;.
\end{equation}
So collecting the estimates \cref{eq:cEIfinal}, \cref{eq:cEIIafinal}, and \cref{eq:EIII}, we conclude that
\[
\lvert \mathcal{E} \rvert \leq  C N^{-1 + 2 \delta} + \epsilon^{-1} N^{-\frac{2}{3}} \mathcal{O}(M^{\frac{1}{2}}N^{-\frac{1}{3}+\delta} + M^{-\frac{1}{2} }N^{ \delta} + N^{-\delta}) \;.
\]
Choosing the parameters $ \delta = \frac{1}{12} $ and $M = N^{\frac 13} $ yields the claimed bound.
\end{proof}

\begin{proof}[Proof of Theorem \ref{thm:main}]
By \cite[Thm.~1.1]{BPSS22} the ground state energy satisfies
\begin{equation}
	\inf \sigma(H_N) = E^{\HF}_N + E^{\RPA}_N + \cO(N^{-\frac 13 -\alpha})\;,
\end{equation}
where $ E^{\HF}_N $ and $ E^{\RPA}_N $ are explicit constants (the Hartree--Fock ground state energy and the Random Phase Approximation of the correlation energy) and  $ \alpha > 0 $. On the other hand, in \cite[Sect.~5.5]{BNPSS20}, the energy expectation in $ \psi_N $ was computed as
\[
\langle \psi_N, H_N \psi_N \rangle \leq E^{\HF}_N + {E}^{\RPA}_N + \mathfrak{E}\;,
\]
where the error term $\mathfrak{E}$ was bounded by
\[
\lvert \mathfrak{E}\rvert \leq C\left( N^{-1} + M^{-1} + N^{-1+\delta}M \right) + C \hbar \left( M^{\frac{1}{4}} N^{-\frac{1}{6} + \frac{\delta}{2}} + N^{-\frac{\delta}{2}} + M^{-\frac{1}{4}} N^{\frac{\delta}{2}} \right).
\]
The second term, of order $M^{-1}$, resulted from the bound on the linearization of the kinetic energy operator by $M^{-1} \Ncal$ in \cite[Sect.~5.3]{BNPSS20} and required a choice of $M \gg N^{\frac{1}{3}}$. This is not optimal; if instead of linearizing the kinetic energy operator directly we linearize only its commutator with a pair operator as in \cite[Lemma~8.2]{BNPSS21}, its contribution to $\mathfrak{E}$ can be improved from $M^{-1}$ to $\hbar (M^{-1/2} + M N^{-\frac{2}{3} + \delta}) $ as in \cite[Lemma~8.1]{BNPSS21}. With this improvement, the dominant terms in the error bound are
\[
\lvert \mathfrak{E} \rvert \leq C \hbar \left( N^{-\frac{2}{3}+\delta } M + N^{-\frac{\delta}{2}} + M^{-\frac{1}{4}} N^{\frac{\delta}{2}} \right)\;.
\]
With the choice $ \delta = \frac{1}{12}$ and $M = N^{\frac{1}{3}} $ from the proof of \cref{prop:maintilde} we obtain
\begin{equation}
	\langle \psi_N, H_N \psi_N \rangle \leq E^{\HF}_N + {E}^{\RPA}_N + \cO(N^{-\frac{1}{3} - \frac{1}{24}})\;.
\end{equation}
This establishes \cref{eq:mainenergy}.

\smallskip

It remains to establish \cref{eq:main}. Note that the sum in $ k $ is symmetric under reflection $ k \mapsto -k $, so we may replace $ \cD^q $ \eqref{eq:abbreviations1} by $ \cC^q $ \eqref{eq:cCq}. Then, \eqref{eq:main} follows from \cref{prop:maintilde} if we can show that extending the sum from $ k \in \tilde{\cC}^q $ to $ k \in \cC^q $ never decreases the result by more than $ C \epsilon^{-1} N^{-\frac{2}{3}- \frac{1}{12}} $. In fact, since \eqref{eq:cQepsilon} implies $ \lambda_{q,k} \ge \epsilon$, the same arguments as in the proof of \cref{eq:mainerrortilde} apply for any $ k \in \cC^q \setminus \tilde{\cC}^q $, so recalling \cref{eq:nqbformula} we get
\begin{equation}
\begin{split}
	& \left\vert
	\frac{1}{\pi} \int_0^\infty \frac{g_k (\mu^2 - \lambda_{q,k}^2)(\mu^2 + \lambda_{q,k}^2)^{-2}}{1 + Q_k^{(0)}(\mu)} \; \d \mu
	- n_q^{(\b)} \right\vert \\
	& =\left\vert
	\frac{1}{\pi} \int_0^\infty \frac{g_k (\mu^2 - \lambda_{q,k}^2)(\mu^2 + \lambda_{q,k}^2)^{-2}}{1 + Q_k^{(0)}(\mu)} \; \d \mu
	- \frac{1}{2 n_{\alpha_q, k}^2} \big( \cosh(2 K(k)) - 1 \big)_{\alpha_q, \alpha_q} \right\vert \\
	& \overset{\cref{eq:threesuberrors}}{\le} \cE_{\II} + \cE_{\III}
	\le C \epsilon^{-1} N^{-\frac{2}{3}- \frac{1}{12}} \;.
\end{split}
\end{equation}
Since $ \cosh(2 K) - 1 $ is a positive matrix, we have $ \big( \cosh(2 K(k)) - 1 \big)_{\alpha_q, \alpha_q} \ge 0 $. Since the number of momenta $ k \in (\cC^q \setminus \tilde{\cC}^q) \cap \ZZZ^3 $ is bounded by $ |B_R(0) \cap \ZZZ^3| \sim 1 $, we conclude
\[
	\sum_{k \in (\cC^q \setminus \tilde{\cC}^q) \cap \ZZZ^3} \frac{1}{\pi} \int_0^\infty \frac{g_k (\mu^2 - \lambda_{q,k}^2)(\mu^2 + \lambda_{q,k}^2)^{-2}}{1 + Q_k^{(0)}(\mu)} \; \d \mu 
	\ge - C \epsilon^{-1} N^{-\frac{2}{3}- \frac{1}{12}} \;.	\qedhere
\]
\end{proof}

\begin{proof}[Proof of Theorem \ref{thm:jump}]
By definition of $ \psi_N $, we have $ n_q = 0 $ whenever $ q $ is not inside some patch $ B_{\alpha_q} $. So we may focus on the case $ q \in B_{\alpha_q} $ for some $ 1 \le \alpha_q \le M $. \Cref{thm:main2} and \cref{prop:main3} combine to
\begin{equation}
\label{eq:nqbarupperbound}
	n_q
	\le \sum_{k \in \tilde{\cC}^q\cap \ZZZ^3} \frac{g_k}{\pi}  \int_0^\infty \frac{(\mu^2 - \lambda_{\alpha_q,k}^2)(\mu^2 + \lambda_{\alpha_q,k}^2)^{-2}}{1 + Q_k(\mu)} \; \d \mu
	+ C N^{- 1 + 2 \delta}\;.
\end{equation}
Since $Q_k(\mu) \geq 0$, for an upper bound the denominator of the integrand can be dropped; the integral is then the same as in \cref{eq:fint}, found there to be $\lambda_{\alpha_q,k}^{-1}$. For $ k \in \tilde{\cC}^q \cap \ZZZ^3 $, the definition \cref{eq:cCqtilde} of $ \tilde{\cC}^q $ entails
\begin{equation}
	\lambda_{\alpha_q,k}^{-1}
	= |k| |k \cdot \hat{\omega}_{\alpha_q}|^{-1}
	\le R N^\delta \;.
\end{equation}
The sum over such $ k $ is finite and $ g_k \sim N^{-\frac{2}{3}} $, so the leading order in \cref{eq:nqbarupperbound} is
\begin{equation}
	\sum_{k \in \tilde{\cC}^q\cap \ZZZ^3} \frac{g_k}{\pi}  \int_0^\infty \frac{(\mu^2 - \lambda_{\alpha_q,k}^2)(\mu^2 + \lambda_{\alpha_q,k}^2)^{-2}}{1 + Q_k(\mu)} \; \d \mu
	\le C N^{-\frac{2}{3}+ \delta}\;.
\end{equation}
Since $ \delta < \frac{1}{6}$, the error in \cref{eq:nqbarupperbound} is subleading. The optimal common upper bound is
	$n_q
	\le C N^{-\frac{2}{3}+ \delta}$,
where $ \delta = \frac{1}{12} $. This bound is uniform in $ q \in \ZZZ^3 $.
\end{proof}

\begin{proof}[Proof of Proposition \ref{prop:optimality}]
Comparing \cref{eq:maintilde}, which holds due to Proposition \ref{prop:maintilde}, and \cref{eq:optimality}, which is what we want to show, it suffices to bound the contributions from $ k \in \cC^q \setminus \tilde{\cC}^q $ by
\begin{equation}
\label{eq:cCqsetminuscCqtilde}
	\bigg\lvert \sum_{k \in (\cC^q \setminus \tilde{\cC}^q)\cap \ZZZ^3} \frac{1}{\pi} \int_0^\infty \frac{g_k (\mu^2 - \lambda_{q,k}^2)(\mu^2 + \lambda_{q,k}^2)^{-2}}{1 + Q_k^{(0)}(\mu)} \; \d \mu \bigg\rvert
	\le C \epsilon^{-1} N^{-\frac{2}{3} - \frac{1}{12}}\;.
\end{equation}
Comparing the definition \cref{eq:cCq} of $ \cC^q $ and the definition \cref{eq:cCqtilde} of $ \tilde{\cC}^q $, we observe that $ k \in \cC^q \setminus \tilde{\cC}^q $ can occur only
if the momentum $ q \pm k $ is outside the patch $ B_{\alpha_q} $,
or if $ |k \cdot \hat{\omega}_{\alpha_q}| < N^{-\delta} $.
The first case is ruled out by \cref{eq:edgeofthepatch}. The second case can be ruled out via \eqref{eq:cQepsilon} which implies $ \lambda_{q,k} \ge \epsilon $, and $ |k| \ge 1 $, as
\[
	 |k \cdot \hat{\omega}_{\alpha_q}|
	 = |k| \lambda_{\alpha_q,k}
	 \overset{\cref{eq:lambdaerror}}{=} |k| \lambda_{q,k} (1 + \cO(M^{-\frac{1}{2}}))
	 \ge \epsilon (1 + \cO(M^{-\frac 12}))\;.
\]
In that case, the sum on the l.\,h.\,s.\  of \cref{eq:cCqsetminuscCqtilde} is empty.
\end{proof}

\appendix
\section{Bosonization Approximation}
\label{app:motivationbosocc}

In this section we show how $ \ad^n_{q, (\b)} $, defined in \cref{eq:adnb}, arises from a bosonization approximation. We replace the almost-bosonic operators $ c^*(g)$, $c(g) $ defined in \cref{eq:cgstar} by exactly bosonic operators $ \tilde{c}^*(g)$, $\tilde{c}(g) $. \Cref{lem:shoelace} then shows that the multi-commutator $ \ad^n_S (a_q^* a_q) $ becomes $ \ad^n_{q, (\b)} $ with $ c^*$, $c $ replaced by $ \tilde{c}^*$, $\tilde{c} $.

\smallskip

The exact bosonic operators $ \tilde{c}^*$, $\tilde{c} $ can be defined as  elements of an abstract $ ^* $--algebra $ \cA $. More precisely, we define $ \cA $ to be the $ * $--algebra generated by
\begin{equation}
	\{ \tilde{a}^*_q, \tilde{c}_{p, h}^* \; : \; q \in \ZZZ^3, p \in \BFc, h \in \BF \}\;,
\end{equation}
where we impose the (anti-)commutator relations
\begin{equation}
\label{eq:commutationrelations}
\begin{aligned}
	&\{\tilde{a}_q, \tilde{a}^*_{q'}\} = \delta_{q, q'} \;, \quad
	[\tilde{c}_{p, h}, \tilde{c}_{p', h'}^*] = \delta_{p, p'} \delta_{h, h'} \;,\\
	&\{\tilde{a}_q, \tilde{a}_{q'}\}
	= \{\tilde{a}^*_q, \tilde{a}^*_{q'}\}
	= [\tilde{c}_{p, h}, \tilde{c}_{p', h'}]
	= [\tilde{c}^*_{p, h}, \tilde{c}^*_{p', h'}]
	= 0 \;.
\end{aligned}
\end{equation}
Furthermore, we impose that $ \tilde{c}^*_{p, h} $ behaves like a pair creation operator, that is
\begin{equation}
\label{eq:ctildeaacommutator}
	[\tilde{c}_{p, h}^*, \tilde{a}_q^* \tilde{a}_q] = -\tilde{c}^*_{p, h} (\delta_{h, q} + \delta_{p, q}) \;.
\end{equation}
In analogy to \cref{eq:cgstar}, for $ g: \BFc \times \BF \to \CCC $ we define $\tilde{c}^*(g) := \sum_{\substack{p \in \BFc \\ h \in \BF}} g(p, h) \tilde{c}^*_{p, h}$ and $\tilde{c}(g) := \sum_{\substack{p \in \BFc \\ h \in \BF}} \overline{g(p, h)} \tilde{c}_{p, h}$; moreover in analogy to \cref{eq:cstarabbreviation}, for $\alpha \in \cI_{k}$, we define
$\tilde{c}^*_\alpha(k) := \tilde{c}^*(d_{\alpha, k})$ with $d_{\alpha, k}(p, h)
	= \delta_{p, h \pm k} \frac{1}{n_{\alpha, k}} \chi(p, h : \alpha)$.
The statement of \Cref{lem:g} then holds true also for the modified operators, with $ g_{q, k} $ as defined in \cref{eq:g}:
\begin{equation}
\label{eq:exactcaarelation}
	[\tilde{c}^*_\alpha(k), \tilde{a}_q^* \tilde{a}_q] = - \delta_{\alpha, \alpha_q} \tilde{c}^*(g_{q, k})\;.
\end{equation}
The approximate CCR from \cref{lem:cgcommutator,lem:approxCCR,lem:dg} become exact, i.\,e., with $ \rho_{q, k} $ as defined in \cref{eq:dg} we have
\begin{equation}
\label{eq:exactCCR}
\begin{aligned}\relax
	[\tilde{c}(g), \tilde{c}^*(\tilde{g})]
	= \langle g, \tilde{g} \rangle\,, \quad	[\tilde{c}_\alpha(k), \tilde{c}^*_\beta(\ell)]
	= \delta_{\alpha, \beta} \delta_{k, \ell}\,, \quad	[\tilde{c}_\alpha(k), \tilde{c}^*(g_{q, \ell})]
	= \delta_{\alpha, \alpha_q} \delta_{k, \ell} \rho_{q, k}\,.
\end{aligned}
\end{equation}
Accordingly in \cref{eq:T} we replace $c$ and $c*$ by $\tilde{c}$ and $\tilde{c}^*$ to obtain $\tilde{S}$.
Thus we obtain an exactly bosonic equivalent $ \ad^n_{\tilde{S}} (\tilde{a}_q^* \tilde{a}_q) $ of $ \ad^n_S (a_q^* a_q) $. To compare it to the bosonized multi-commutator $ \ad^n_{q, (\b)} $ defined in \cref{eq:adnb}, we introduce an exact bosonic equivalent $ \widetilde{\ad}^n_{q, (\b)} $, given for $ n = 0 $ by $ \widetilde{\ad}^0_{q, (\b)} := \tilde{a}_q^* \tilde{a}_q $ and for $ n \ge 1 $ by
\begin{equation}
\label{eq:adnqbtilde}
	\widetilde{\ad}^n_{q, (\b)} := \begin{cases}\displaystyle
		2^{n-1} \tilde{\bA}_n + \tilde{\bB}_n + \tilde{\bB}_n^* + \sum_{m = 1}^{n-1} \binom{n}{m} \tilde{\bC}_{n-m, m} \quad &\text{if } n \text{ is even}\\
		\displaystyle \tilde{\bE}_n + \tilde{\bE}_n^* + \sum_{m = 1}^s \binom{n}{m} \tilde{\bD}_{n-m, m} + \sum_{m = 1}^s \binom{n}{m} \tilde{\bF}_{m, n-m} \quad &\substack{\displaystyle \text{if } n \text{ is odd},\\ \displaystyle
		n = 2s + 1,}
	\end{cases}
\end{equation}
where $ \tilde{\bA}$, $\tilde{\bB}$, $\tilde{\bC}$, $\tilde{\bD}$, $\tilde{\bE}$, and $ \tilde{\bF} $ are defined by replacing $ c^\sharp$ by $\tilde{c}^\sharp $ in \cref{eq:ABC} and \cref{eq:DEF}.
\begin{lemma}
\label{lem:shoelace}
Under the replacements of $ a^\sharp$ by $\tilde{a}^\sharp $ and $ c^\sharp $ by $\tilde{c}^\sharp $, the multi-commutator $ \ad^n_S (a_q^* a_q) $ becomes
\begin{equation}
\label{eq:shoelace}
	\ad^n_{\tilde{S}} (\tilde{a}_q^* \tilde{a}_q)
	= [\tilde{S}, \ldots, [\tilde{S}, \tilde{a}_q^* \tilde{a}_q] \ldots ]
	= \widetilde{\ad}^n_{q, (\b)}  \;.
\end{equation}
\end{lemma}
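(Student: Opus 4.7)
The plan is to prove \cref{lem:shoelace} by induction on $n$. The base case $n = 0$ is immediate from the definitions $\ad^0_{\tilde S}(\tilde a_q^* \tilde a_q) = \tilde a_q^* \tilde a_q = \widetilde{\ad}^0_{q,(\b)}$. For $n = 1$, I would compute $[\tilde S, \tilde a_q^* \tilde a_q]$ directly: expanding $\tilde S$ via \cref{eq:T} with $c^\sharp$ replaced by $\tilde c^\sharp$ and using \cref{eq:exactcaarelation} together with $[\tilde{c}^*(g_{q,k}), \tilde a_q^* \tilde a_q] = 0$ yields $\tilde{\bE}_1 + \tilde{\bE}_1^*$, which matches $\widetilde{\ad}^1_{q,(\b)}$ (since for $n = 1$, $s = 0$ and the $\tilde{\bD}$- and $\tilde{\bF}$-sums are empty). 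For the inductive step, it suffices to show $[\tilde S, \widetilde{\ad}^n_{q,(\b)}] = \widetilde{\ad}^{n+1}_{q,(\b)}$.

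The key intermediate step is deriving the six ``building-block'' commutator identities, which are precisely the bosonization-exact versions of the relations hidden in \cref{eq:cEABCDEF} (with all the error terms $\cE^{(\cdot)}$ equal to zero):
\begin{align*}
[\tilde S, \tilde{\bA}_n] &= 0, \\
[\tilde S, \tilde{\bB}_n] &= \tilde{\bD}_{n,1} + \tilde{\bE}_{n+1}, \\
[\tilde S, \tilde{\bC}_{n-m,m}] &= \tilde{\bD}_{n-m+1,m} + \tilde{\bF}_{n-m,m+1}, \\
[\tilde S, \tilde{\bD}_{n-m,m}] &= \tilde{\bA}_{n+1} + \tilde{\bC}_{n-m+1,m} + \tilde{\bC}_{n-m,m+1}, \\
[\tilde S, \tilde{\bE}_n] &= \tilde{\bA}_{n+1} + \tilde{\bC}_{n,1} + \tilde{\bB}_{n+1}, \\
[\tilde S, \tilde{\bF}_{n-m,m}] &= \tilde{\bA}_{n+1} + \tilde{\bC}_{n-m+1,m} + \tilde{\bC}_{n-m,m+1}.
\end{align*}
Each of these is a straightforward Leibniz-rule computation using only the \emph{exact} CCR of \cref{eq:exactCCR} (rather than the approximate ones from \cref{lem:cgcommutator,lem:approxCCR,lem:dg}), the symmetry $K(k)_{\alpha,\beta} = K(k)_{\beta,\alpha}$, and the matrix identity $\sum_\beta K(k)_{\alpha,\beta}(K(k)^n)_{\alpha_q,\beta} = (K(k)^{n+1})_{\alpha_q,\alpha}$. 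The corresponding relations for $\tilde\bB_n^*$ and $\tilde\bE_n^*$ are obtained by taking adjoints using $\tilde S^* = -\tilde S$, $\tilde{\bA}_n^* = \tilde{\bA}_n$, $\tilde{\bC}_{m,m'}^* = \tilde{\bC}_{m',m}$ and $\tilde{\bD}_{m,m'}^* = \tilde{\bF}_{m',m}$ (which transfer verbatim from \cref{eq:ABCidentities,eq:DEFidentities}).

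Substituting these six identities into $[\tilde S, \widetilde{\ad}^n_{q,(\b)}]$ and collecting terms by type yields a sum over $\tilde{\bA}_{n+1}, \tilde{\bB}_{n+1}, \tilde{\bB}_{n+1}^*, \tilde{\bC}, \tilde{\bD}, \tilde{\bE}_{n+1}, \tilde{\bE}_{n+1}^*, \tilde{\bF}$ with combinatorial coefficients. The remaining task is to show that these coefficients agree with those in $\widetilde{\ad}^{n+1}_{q,(\b)}$. For instance, for even $n$, the coefficient of $\tilde{\bD}_{n+1-m,m}$ (with $2 \le m \le n/2$) receives a contribution $\binom{n}{m}$ from $[\tilde S, \tilde{\bC}_{n-m,m}]$ and, by the symmetry $\tilde\bD_{a,b} = \tilde\bD_{b,a}$, another contribution $\binom{n}{m-1}$ from $[\tilde S, \tilde{\bC}_{m-1,n-m+1}]$; Pascal's identity gives the required $\binom{n+1}{m}$. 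The endpoint $m=1$ is handled by combining the single $\tilde{\bD}_{n,1}$ from $[\tilde S, \tilde\bB_n]$ with the $n \cdot \tilde{\bD}_{n,1}$ from $[\tilde S, \tilde{\bC}_{n-1,1}]$. Analogous foldings handle $\tilde{\bF}$, and direct inspection handles $\tilde{\bA}_{n+1}, \tilde{\bB}_{n+1}^{(*)}, \tilde{\bE}_{n+1}^{(*)}$.

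The main (and really only) obstacle is the combinatorial bookkeeping, which must be carried out separately for the parity of $n$ (even $\to$ odd and odd $\to$ even), since the structure of $\widetilde{\ad}^n_{q,(\b)}$ alternates between the $\{\tilde\bA, \tilde\bB, \tilde\bC\}$-type and the $\{\tilde\bD, \tilde\bE, \tilde\bF\}$-type. A point to be careful about is that the sums defining $\widetilde{\ad}^n_{q,(\b)}$ in \cref{eq:adnb} truncate at $m = s = \lfloor n/2 \rfloor$, so the $\tilde\bD$- and $\tilde\bF$-contributions produced by the commutators with indices $m > n/2$ have to be folded via the symmetries $\tilde{\bD}_{m,m'} = \tilde{\bD}_{m',m}$ and $\tilde{\bF}_{m,m'} = \tilde{\bF}_{m',m}$ before Pascal's identity can be applied. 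Once this organization is in place, the identity $[\tilde S, \widetilde{\ad}^n_{q,(\b)}] = \widetilde{\ad}^{n+1}_{q,(\b)}$ follows, completing the induction.
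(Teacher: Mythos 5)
Your strategy coincides with the paper's: induction on $n$, with the inductive step carried out by computing the commutator of $\tilde S$ with each building block via the exact CCR \cref{eq:exactCCR} and recombining the binomial coefficients through Pascal's identity (the paper phrases the step as $n-1\to n$ rather than $n\to n+1$, which is immaterial). However, two of your six building-block identities are not correct as operator identities. A direct computation with \cref{eq:exactCCR} gives
\[
[\tilde S, \tilde{\bD}_{a,b}] = \tilde{\bA}_{a+b+1} + \tilde{\bC}_{a+1,b} + \tilde{\bC}_{b+1,a}\,,\qquad
[\tilde S, \tilde{\bF}_{a,b}] = \tilde{\bA}_{a+b+1} + \tilde{\bC}_{a,b+1} + \tilde{\bC}_{b,a+1}\,,
\]
whereas (with $a=n-m$, $b=m$) you wrote $\tilde{\bC}_{a,b+1}$ in place of $\tilde{\bC}_{b+1,a}$ in the first and $\tilde{\bC}_{a+1,b}$ in place of $\tilde{\bC}_{b,a+1}$ in the second. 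Since $\tilde{\bC}_{m,m'}^*=\tilde{\bC}_{m',m}$ but $\tilde{\bC}_{m,m'}\neq\tilde{\bC}_{m',m}$ in general, these are genuinely different operators; your versions reproduce the index convention of \cref{eq:cEABCDEF}, which suffers from the same imprecision. The slip is harmless for the final result: in $\widetilde{\ad}^n_{q,(\b)}$ the terms $\tilde{\bD}_{n-m,m}$ and $\tilde{\bF}_{m,n-m}$ are mutual adjoints carrying the same coefficient $\binom{n}{m}$, so the mismatched terms from the two identities are exactly each other's missing pieces, the symmetrized sum $[\tilde S,\tilde{\bD}_{n-m,m}]+[\tilde S,\tilde{\bF}_{m,n-m}]$ agrees with your value, and the Pascal bookkeeping goes through unchanged. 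Still, you should either state the corrected identities (as the paper effectively does in its proof) or argue only at the level of the symmetrized sum. A smaller slip: in the case $n=1$ you invoke $[\tilde c^*(g_{q,k}),\tilde a_q^*\tilde a_q]=0$, which is false (by \cref{eq:ctildeaacommutator} it equals $-\tilde c^*(g_{q,k})$) and also unnecessary, since $[\tilde S,\tilde a_q^*\tilde a_q]$ only requires \cref{eq:exactcaarelation}.
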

This motivates the definition of $ \ad^n_{q, (\b)} $ we used in \cref{subsec:bosonizationapprox}.
\begin{proof}
We use induction in $ n $. The case $ n = 0 $ is trivial, since $ \ad^0_{\tilde{S}} (\tilde{a}_q^* \tilde{a}_q) = \tilde{a}_q^* \tilde{a}_q = \widetilde{\ad}^0_{q, (\b)}$.

\smallskip

\noindent \underline{Step from $ n-1$ to $ n $ for $ n = 1 $}: Here, in \cref{eq:adnqbtilde} we have $ s = 0 $, so $ \widetilde{\ad}^1_{q, (\b)} = \tilde{\bE}_1 + \tilde{\bE}_1^* $. On the other side, using \cref{eq:exactcaarelation} and $ K(k) = K(k)^T$, we get
\begin{align*}
	\ad^1_{\tilde{S}} (\tilde{a}_q^* \tilde{a}_q)
	& = [\tilde{S}, \tilde{a}_q^* \tilde{a}_q]
	= -\frac{1}{2} \sum_{k \in \Gamma^{\nor}} \sum_{\alpha, \beta \in \cI_{k}} K(k)_{\alpha, \beta} [\tilde{c}^*_\alpha(k) \tilde{c}^*_\beta(k) - \mathrm{h.c.}, \tilde{a}_q^* \tilde{a}_q]\\
	& 	= \sum_{k \in \tilde{\cC}^q \cap \ZZZ^3} \sum_{\alpha \in \cI_{k}} K(k)_{\alpha, \alpha_q} \tilde{c}^*_\alpha(k) \tilde{c}^*(g_{q, k}) + \mathrm{h.c.} \qquad = \tilde{\bE}_1 + \tilde{\bE}_1^* \;.
\end{align*}
Here, we were able to restrict to $ \tilde{\cC}^q \cap \ZZZ^3 $ since $ g_{q, k} = 0 $ otherwise.

\smallskip

For the rest of this proof, we adopt the convention that  $ \sum_k $ runs over $ k \in \tilde{\cC}^q \cap \ZZZ^3 $ while $ \sum_{k'} $ over $ k' \in \Gamma^{\nor} $.

\smallskip

\noindent \underline{Step from $ n-1$ to $ n $ for even $ n $}: On the l.\,h.\,s.\  of \cref{eq:shoelace} we have
\begin{align*}
	&\ad^n_{\tilde{S}} (\tilde{a}_q^* \tilde{a}_q)
	= [\tilde{S}, \ad^{n-1}_{\tilde{S}} (\tilde{a}_q^* \tilde{a}_q)]\\
	& = [\tilde{S}, \tilde{\bE}_{n-1}] + [\tilde{S}, \tilde{\bE}_{n-1}^*] + \sum_{m = 1}^s \binom{n-1}{m} [\tilde{S}, \tilde{\bD}_{n-m-1, m}]  + \sum_{m = 1}^s \binom{n-1}{m} [\tilde{S}, \tilde{\bF}_{m, n-m-1}] \;,
\end{align*}
where $ s = n/2 - 1 $. The CCR \cref{eq:exactCCR} render
\begin{equation}
\begin{aligned}\relax
	[\tilde{S}, \tilde{\bE}_{n-1}]
	& = \frac{1}{2} \sum_{k, k'} \sum_{\substack{\alpha, \beta \in \cI_{k'} \\ \alpha_1 \in \cI_{k}}} K(k')_{\alpha, \beta} (K(k)^{n-1})_{\alpha_q, \alpha_1} [\tilde{c}_\alpha(k') \tilde{c}_\beta(k'), \tilde{c}^*(g_{q, k}) \tilde{c}^*_{\alpha_1}(k)]\\
	& = \tilde{\bC}_{n-1, 1} + \tilde{\bB}_n + \tilde{\bA}_n \;.
\end{aligned}
\end{equation}
It is easy to see that $ \tilde{\bC}_{m, m'}^* = \tilde{\bC}_{m', m} $, $ \tilde{\bA}_n^* = \tilde{\bA}_n = \bA_n $ and $ \tilde{S}^* = -\tilde{S} $, which implies
\begin{equation}
	[\tilde{S}, \tilde{\bE}_{n-1}^*]
	= \tilde{\bC}_{1, n-1} + \tilde{\bB}_n^* + \tilde{\bA}_n \;.
\end{equation}
Likewise, and using $ \tilde{\bD}_{m, m'}^* = \tilde{\bF}_{m', m} $, we compute
\begin{align*}
	[\tilde{S}, \tilde{\bD}_{n-m-1, m}]
	= &\tilde{\bC}_{n-m, m} + \tilde{\bC}_{m+1, n-m-1} + \tilde{\bA}_n \;, \\
	[\tilde{S}, \tilde{\bF}_{m, n-m-1}]
	= &\tilde{\bC}_{m, n-m} + \tilde{\bC}_{n-m-1, m+1} + \tilde{\bA}_n \;.
\end{align*}
We sum all $ 1 + \sum_{m = 1}^s \binom{n-1}{m} + \sum_{m = 1}^s \binom{n-1}{m} + 1 = 2^{n-1} $ commutators and get
\begin{align*}
	\ad^n_{\tilde{S}} (\tilde{a}_q^* \tilde{a}_q)
	& = 2^{n-1} \tilde{\bA}_n + \tilde{\bB}_n + \tilde{\bB}_n^* + \tilde{\bC}_{1, n-1} + \tilde{\bC}_{n-1, 1}\\
	& \quad + \sum_{m = 1}^s \binom{n-1}{m}(\tilde{\bC}_{n-m, m} + \tilde{\bC}_{m+1, n-m-1} + \tilde{\bC}_{m, n-m} + \tilde{\bC}_{n-m-1, m+1})\\
	& = 2^{n-1} \tilde{\bA}_n + \tilde{\bB}_n + \tilde{\bB}_n^* + \sum_{m = 1}^{n-1} \binom{n}{m} \tilde{\bC}_{n-m, m} \qquad = \widetilde{\ad}^n_{q, (\b)}\;.	\tagg{eq:adnSfinalresulteven}
\end{align*}

\smallskip

\noindent \underline{Step from $ n-1$ to $ n $ for odd $ n \ge 3 $}: The l.\,h.\,s.\  of \cref{eq:shoelace} is
\begin{align*}
	\ad^n_{\tilde{S}} (\tilde{a}_q^* \tilde{a}_q)
	 = 2^{n-2} [\tilde{S}, \tilde{\bA}_{n-1}] + [\tilde{S}, \tilde{\bB}_{n-1}] + [\tilde{S}, \tilde{\bB}_{n-1}^*] + \sum_{m = 1}^{n-2} \binom{n-1}{m} [\tilde{S}, \tilde{\bC}_{n-m-1, m}] \;.
\end{align*}
Since $ \tilde{\bA}_{n-1} \in \Cbb$ we have $ [\tilde{S}, \tilde{\bA}_{n-1}] = 0 $. Using the CCR as above, we get
\begin{align*}
	[\tilde{S}, \tilde{\bB}_{n-1}]
	= &\tilde{\bD}_{1, n-1} + \tilde{\bE}_n \;, &
	[\tilde{S}, \tilde{\bC}_{n-m-1, m}]
	= &\tilde{\bD}_{n-m, m} + \tilde{\bF}_{m+1, n-m-1}
\end{align*}
and $ [\tilde{S}, \tilde{\bB}_{n-1}^*] = \tilde{\bF}_{n-1, 1} + \tilde{\bE}_n^* $.
Putting all terms together and using $ \tilde{\bD}_{m, m'} = \tilde{\bD}_{m', m} $ and $ \tilde{\bF}_{m, m'} = \tilde{\bF}_{m', m} $ completes the induction step with $ s = \frac{n-1}{2} $:
\begin{align*}
	\ad^n_{\tilde{S}} (\tilde{a}_q^* \tilde{a}_q)
	& = \tilde{\bE}_n + \tilde{\bE}_n^* + \tilde{\bD}_{1, n-1} + \tilde{\bF}_{n-1, 1} + \sum_{m = 1}^{n - 2} \binom{n-1}{m} (\tilde{\bD}_{n-m, m} + \tilde{\bF}_{m+1, n-m-1})\\
	& = \tilde{\bE}_n + \tilde{\bE}_n^* + \sum_{m = 1}^s \binom{n}{m} (\tilde{\bD}_{n-m, m} + \tilde{\bF}_{m, n-m}) \qquad \qquad = \widetilde{\ad}^n_{q, (\b)}\;.	\qedhere
\end{align*}
\end{proof}

\section{Formal Infinite Volume Limit}
\label{subsec:infvolapp}

In this appendix we take the limit of the formula describing the momentum distribution as the size of the torus $L \to \infty$. This is formal in the sense that we do not control the error terms in the derivation of the momentum distribution uniformly in $L$. For simplicity we assume that $ V $ and thus also $ \hat{V} $ are radial. As long as the side length $ L $ of the torus is fixed, the proof of \cref{thm:main} carries through unchanged, so
\begin{equation}
\begin{aligned}
n_q(L)
	\approx &\sum_{k \in \tilde{\cC}^q\cap L^{-1} \ZZZ^3} \frac{1}{\pi} \int_0^\infty \frac{g_k (\mu^2 - \lambda_{q,k}^2)(\mu^2 + \lambda_{q,k}^2)^{-2}}{1 + Q_k^{(0)}(\mu)} \; \d \mu\;.
\end{aligned}
\end{equation}
Still $ g_k = \frac{\hat{V}_k}{2 \hbar \kappa N |k|} $, and \cref{eq:rho} renders $ \frac{N}{L^3} = 8 \pi^3 \rho = \frac{4 \pi k_{\F}^3}{3}(1 + \cO(N^{-\frac{1}{3}})) $, so
\[
	n_q(L)
	\approx L^{-3} \sum_{k \in \tilde{\cC}^q\cap L^{-1} \ZZZ^3} \frac{3 \hat{V}_k}{8 \pi^2 \hbar k_{\F}^3 |k| \kappa} (1 + \cO(k_{\F}^{-1})) \int_0^\infty \frac{ (\mu^2 - \lambda_{q,k}^2)(\mu^2 + \lambda_{q,k}^2)^{-2}}{1 + Q_k^{(0)}(\mu)} \; \d \mu\;.
\]
Note that $ \lambda_{q,k} = |\hat{k} \cdot \hat{q}| $ and $ Q_k^{(0)}(\mu) $ both depend on $ k $, but not on $ L $. So we are able to take the limit $ L \to \infty $, in which the Riemann sum $ L^{-3} \sum_k $ becomes an integral
\begin{equation}
\label{eq:integral}
\begin{split}
	n_q
	&	\approx \int_{\tilde{\cC}^q} \d k \frac{3 \hat{V}_k}{8 \pi^2 \hbar k_{\F}^3 |k| \kappa} (1 + \cO(k_{\F}^{-1})) \int_0^\infty \frac{ (\mu^2 - \lambda_{q,k}^2)(\mu^2 + \lambda_{q,k}^2)^{-2}}{1 + Q_k^{(0)}(\mu)} \; \d \mu\;.
	\end{split}
\end{equation}
For the approximate evaluation of this integral, we assume that the Fermi surface is locally flat and that $ q $ keeps sufficient distance to the boundary of its patch, so $ \tilde{\cC}^q = \cC^q $. The integral is then evaluated in spherical coordinates, as shown in \cref{fig:Cqreflection}. We consider only the case $ q \in B_{\F}^c $, as $ q \in B_{\F} $ can be treated analogously. The integrand is symmetric under reflection $ k \mapsto -k $, so we can replace $ \cC^q $ by $ \cD^q $, see \eqref{eq:abbreviations1}. The radial integral over $ |k| $ starts where the sphere of radius $ |k| $ touches the Fermi surface, which is at
\begin{equation}
 \label{eq:Rq}
 R_q := ||q| - k_{\F}| \;.
\end{equation}
The integral over $|k|$ runs up to the maximal momentum transfer $ R $ given by the diameter of $\supp \hat{V}$. The integration over $ \theta $ runs from 0 to $ \theta_{\max} $ with $ \cos \theta_{\max} \approx \frac{R_q}{|k|} =: \lambda_{\min} $. Thus
\begin{equation}
\begin{aligned}
	n_q
	\approx &\int_{R_q}^R \d |k| |k|^2 \frac{3 \hat{V}_k}{8 \pi^2 \hbar k_{\F}^3 |k| \kappa} \int_0^{\theta_{\max}} \d \theta \sin \theta \; 2 \pi \int_0^\infty \frac{\mu^2 - \cos^2 \theta}{(\mu^2 + \cos^2 \theta)^2} \frac{\d \mu}{1 + Q_k^{(0)}(\mu)}\\
	= &\int_{R_q}^R \d |k| |k| \frac{3 \hat{V}_k}{4 \pi \hbar k_{\F}^3 \kappa} \int_{\lambda_{\min}}^{1} \d \lambda\int_0^\infty \frac{\mu^2 - \lambda^2}{(\mu^2 + \lambda^2)^2} \frac{\d \mu}{1 + Q_k^{(0)}(\mu)}\;.
\end{aligned}
\end{equation}
\begin{figure}
	\centering
	\scalebox{0.7}{\begin{tikzpicture}

\fill[opacity = .1, blue] (-3.5,3.2) -- ({-3.2/sqrt(3)-0.75},3.2) -- ({3/sqrt(3)-0.75},-3) -- (-3.5,-3);
\draw[thick, blue] ({-3.2/sqrt(3)-0.75},3.2) -- ({3/sqrt(3)-0.75},-3);
\draw[blue] ({-2.6/sqrt(3)-0.75},2.6) -- ++(0.4,0.2) node[anchor = west]{$ \partial B_{\F} $};
\node[blue] at (-3,2.5) {$ B_{\F} $};
\node[gray] at (3,2.5) {$ B_{\F}^c $};

\draw[dashed] (-3,0) -- (3,0);
\filldraw[thick, blue!50!red, fill opacity = .1] (0,0) circle (2.5);
\draw[blue!50!red] (1.5,-2) -- ++(0.4,-0.2) node[anchor = west]{$ B_R(q) $};

\draw[dashed, blue] ({-3.2/sqrt(3)+0.75},3.2) -- ({3/sqrt(3)+0.75},-3);
\fill[pattern = north east lines, pattern color = blue!75!red, opacity = 0.2] (0.75,0) -- (2.5,0) arc(0 : {104.94135262633} : 2.5);
\fill[blue!75!red, opacity = 0.4] (-0.75,0) -- ({2.5*cos(135.0586473736610)},{2.5*sin(135.0586473736610)}) arc(135.0586473736610 : 180 : 2.5);

\fill[blue!75!red, opacity = 0.4] (-0.75,0) -- (-2.5,0) arc(180 : {180+104.94135262633} : 2.5);

\fill [red] (0,0) circle (0.08) node[anchor = north west]{$ q $};
\draw[line width = 2, blue!75!red, ->] (1.2,0.8) .. controls ++(0,-1) and ++(0.8,0.2) .. (-0.6,-1.5);

\end{tikzpicture}}
\hspace{4em}
	\scalebox{0.7}{\begin{tikzpicture}

\fill[opacity = .1, blue] (-3.5,-1) rectangle ++(7,-2);
\draw[thick, blue] (-3.5,-1) -- ++(7,0);
\draw[blue] (-3.3,-1) -- ++(0.2,0.4) node[anchor = south]{$ \partial B_{\F} $};
\node[blue] at (-3,-2.5) {$ B_{\F} $};
\node[gray] at (-3,2.5) {$ B_{\F}^c $};

\filldraw[thick, blue!50!red, fill opacity = .1] (0,0) circle (2.5);
\draw[blue!50!red] (1.5,2) -- ++(0.4,0.2) node[anchor = west]{$ B_R(q) $};

\draw[dashed] (0,0) circle (2);
\draw[line width = 2, opacity = .6, blue!75!red] (-{sqrt(3)},-1) arc ({270 - acos(0.5)} : {270 + acos(0.5)} : 2);
\draw[blue!75!red] (1.2,-1.6) -- ++(0.4,-0.8) node[anchor = north]{integration domain};

\draw[-, thick] (-{0.25*sqrt(3)},-0.25) arc ({270 - acos(0.5)} : {270} : 0.5);
\node at (-0.4, -0.6) {\footnotesize $ \theta_{\max} $};

\draw (0,0) -- (0,-2.5);
\draw[thick] (0,0) -- (0,-2);
\draw[thick] (0,0) -- (-{sqrt(3)},-1);
\draw (0,0) -- ++(-0.14,{0.14*sqrt(3)});
\draw (-{sqrt(3)},-1) -- ++(-0.14,{0.14*sqrt(3)});
\draw[<->] (-0.07,{0.07*sqrt(3)}) -- (-{sqrt(3)-0.07},{-1+0.07*sqrt(3)});
\node at (-1.1, -0.1) {\footnotesize $ |k| $};

\draw (0,-2.5) -- ++(0.24,0);
\draw (0,0) -- ++(0.4,0);
\draw[<->] (0.12,0) -- ++(0,-2.5);
\node at (0.3,-1.5) {\footnotesize $ R $};
\draw[<->] (0.3,0) -- ++(0,-1);
\node at (0.5,-0.6) {\footnotesize $ R_q $};

\fill [red] (0,0) circle (0.08) node[anchor = south west]{$ q $};

\end{tikzpicture}}
\caption{\textbf{Left:} Reflecting part of $ \cC^q $ renders $ \cD^q $. \textbf{Right:} The integration range for a fixed $ |k| $ in spherical coordinates.}
\label{fig:Cqreflection}
\end{figure}
Since the potential is radial, $ Q_k^{(0)}(\mu) $ depends only on $ |k| $ and not on $ \lambda $, so we may compute the integral over $ \lambda $ explicitly to be
\begin{equation}
	\int_{\lambda_{\min}}^{1}
	\frac{\mu^2 - \lambda^2}{(\mu^2 + \lambda^2)^2} \; \d \lambda
	= \left[ \frac{\lambda}{\mu^2 + \lambda^2} \right]_{\lambda_{\min}}^1
	= \frac{1}{1 + \mu^2} - \frac{R_q |k|^{-1}}{R_q^2 |k|^{-2} + \mu^2}\;.
\end{equation}
The final result is
\begin{equation}
\label{eq:nqbfinalresult}
n_q \approx \int_{R_q}^R \!\!\!\d |k| |k| \int_0^\infty \!\!\!\frac{3 \hat{V}_k}{4 \pi \hbar k_{\F}^3 \kappa} \left( \frac{1}{1 + \mu^2} - \frac{R_q |k|^{-1}}{R_q^2 |k|^{-2} + \mu^2} \right) \frac{\d \mu}{1 + Q_k^{(0)}(\mu)}\,.
\end{equation}

\section{Comparison with Daniel and Vosko}
\label{subsec:DV60comp}
The standard reference for the momentum distribution in the random phase approximation is \cite{DV60}. Daniel and Vosko use a Hellmann-Feynman argument to obtain the momentum distribution from a derivative of the ground state energy with respect to an artificial parameter in a modified Hamiltonian, where the energy is computed by the perturbative resummation of \cite{GB57}. This approach may not very reliable because the change in occupation numbers that the Hellmann-Feynman argument tests for corresponds to changes in the energy of order $\hbar^2$, which is beyond the energy resolution that the rigorous results provide. But at least formally we may compare \cref{eq:nqbfinalresult} to Daniel and Vosko's momentum distribution, given for the Coulomb potential and in the thermodynamic limit in \cite[Eq.~(8)]{DV60} for $ q \in B_{\F}^c $ as\footnote{The variables $ q, k $ and $ u $ from \cite{DV60} correspond to $ \frac{|k|}{k_{\F}}$, $\frac{|q|}{k_{\F}} $ and $ \mu $ in  our notation.}
\begin{equation}
\label{eq:nqDV}
\begin{aligned}
	n_q^{(\DV, \mathrm{out})}
	& = \frac{\alpha}{|q|} \int_{|q| - k_{\F}}^{|q| + k_{\F}} \d |k| |k| \int_0^\infty \Bigg[ \frac{|q| - \frac{|k|}{2}}{\left( |q| - \frac{|k|}{2} \right)^2 + k_{\F}^2 \mu^2} - \frac{\frac{|q|^2 - k_{\F}^2}{2 |k|}}{\left( \frac{|q|^2 - k_{\F}^2}{2 |k|} \right)^2 + k_{\F}^2 \mu^2} \Bigg] \\
	& \hspace{11em}\times
	\left(|k|^2 k_{\F}^{-2} + \alpha Q^{(\DV)}_k(\mu)\right)^{-1} \d \mu\;,
\end{aligned}
\end{equation}
with coupling constant $ \alpha = \frac{e_{\Coul}^2}{\pi^2 k_{\F}} $ and
\begin{equation}
\label{eq:QDV}
\begin{aligned}
	Q^{(\DV)}_k(\mu) & = 2 \pi \Bigg[ 1
	+ \frac{k_{\F}^2 (1 + \mu^2) - \frac{|k|^2}{4}}{2 |k| k_{\F}} \log\left( \frac{ \left( k_{\F} + \frac{|k|}{2} \right)^2 + k_{\F}^2 \mu^2}{\left( k_{\F} - \frac{|k|}{2} \right)^2 + k_{\F}^2 \mu^2} \right) \\
	& \qquad\qquad  - \mu \arctan \left( \frac{1 + \frac{|k|}{2 k_{\F}}}{\mu} \right)
	- \mu \arctan \left( \frac{1 - \frac{|k|}{2 k_{\F}}}{\mu} \right) \Bigg]\;.
\end{aligned}
\end{equation}
Due to the long range of the Coulomb potential, there is a separate formula \cite[Eq.~(9)]{DV60} for momenta inside the Fermi ball:
\begin{align*}
	&n_q^{(\DV, \mathrm{in})} \tagg{eq:nqDVin}\\
	& = \frac{\alpha}{|q|} \int_{k_{\F} - |q|}^{k_{\F} + |q|} \!\! \d |k| |k| \int_0^\infty \! \Bigg( \! \frac{|q| + \frac{|k|}{2}}{\left( \! |q| + \frac{|k|}{2} \! \right)^2 \!\! + k_{\F}^2 \mu^2} - \frac{\frac{k_{\F}^2 - |q|^2}{2 |k|}}{\left( \frac{k_{\F}^2 - |q|^2}{2 |k|} \right)^2 \!\! + k_{\F}^2 \mu^2} \! \Bigg) \!
	\frac{\d \mu}{|k|^2 k_{\F}^{-2} + \alpha Q^{(\DV)}_k(\mu)}\\
	&\quad + \frac{\alpha}{|q|} \int_{k_{\F} + |q|}^\infty \!\!\!\! \d |k| |k| \int_0^\infty \! \Bigg( \! \frac{|q| + \frac{|k|}{2}}{\left( \! |q| + \frac{|k|}{2} \! \right)^2 \!\! + k_{\F}^2 \mu^2} - \frac{\frac{|k|}{2} - |q|}{\left( \! \frac{|k|}{2} - |q| \! \right)^2 \!\! + k_{\F}^2 \mu^2} \! \Bigg) \!
	\frac{\d \mu}{|k|^2 k_{\F}^{-2} + \alpha Q^{(\DV)}_k(\mu)}.
\end{align*}
We take a short-range approximation of \cref{eq:nqDV} and \cref{eq:nqDVin} by cutting off the interaction at some $R$ independent of $N$, so that in particular $ |k| \leq R \ll k_{\F} $. This allows for simplifying $ Q^{(\DV)}_k(\mu) $; in fact, its contributions can be approximated as
\begin{equation}
	\frac{k_{\F}^2 (1 + \mu^2) - \frac{|k|^2}{4}}{2 |k| k_{\F}}
	= \frac{k_{\F}^2 (1 + \mu^2) + \cO(1)}{2 |k| k_{\F}}
	= \frac{k_{\F} (1 + \mu^2)}{2 |k|} (1 + \cO(k_{\F}^{-2}))
\end{equation}
and
\begin{align*}
	&\log\left( \frac{ \left( k_{\F} + \frac{|k|}{2} \right)^2 + k_{\F}^2 \mu^2}{\left( k_{\F} - \frac{|k|}{2} \right)^2 + k_{\F}^2 \mu^2} \right)
	= \log\left( \frac{ k_{\F}^2(1 + \mu^2) + k_{\F} |k| + \cO(1) }{ k_{\F}^2(1 + \mu^2) - k_{\F} |k| + \cO(1) } \right)
	= \frac{2 |k| (1 + \cO(k_{\F}^{-1}))}{ k_{\F}(1 + \mu^2)}
\end{align*}
and
\begin{equation}
	\mu \arctan \left( \frac{1 \pm \frac{|k|}{2 k_{\F}}}{\mu} \right)
	= \mu \arctan \left( \frac{1}{\mu} (1 + \cO(k_{\F}^{-1})) \right)
	= \mu \arctan \left( \frac{1}{\mu} \right) + \cO(k_{\F}^{-1})\;.
\end{equation}
Thus, with (SR) indicating the short-range approximation,
\begin{equation}
\label{eq:QSR}
	Q^{(\DV)}_k(\mu) = Q^{(\SR)}_k(\mu) + \cO(k_{\F}^{-1}) \quad \text{with} \quad
	Q^{(\SR)}_k(\mu) := 4 \pi \left( 1 - \mu \arctan \left( \frac{1}{\mu} \right) \right) \;.
\end{equation}
Then outside the Fermi ball and with $R_q$ as in \cref{eq:Rq}, \cref{eq:nqDV} becomes
\[
\label{eq:nqDVSR}
	n_q^{(\DV, \SR)}
	=  \int_{R_q}^{R} \!\! \d |k| \frac{\alpha k_{\F}}{|q| |k|} \int_0^\infty \! \left( \frac{1 + \cO(k_{\F}^{-1}) }{1 + \mu^2} - \frac{ R_q |k|^{-1} + \cO(k_{\F}^{-1})}{R_q^2 |k|^{-2} + \mu^2 } \right)
	\frac{\d \mu}{1 + \alpha |k|^{-2} k_{\F}^2 Q^{(\SR)}_k(\mu)}.
\]
A comparison with \cref{eq:nqbfinalresult} and \cref{eq:abbreviations1} shows that $ Q_k^{(0)}(\mu) $ should be identified with the quantity $ \alpha |k|^{-2} k_{\F}^2 Q^{(\SR)}_k(\mu) $, which corresponds to the following choice of the potential:
\begin{equation}
	\frac{3 \hat{V}_k}{2 \kappa \hbar k_{\F}} = \alpha |k|^{-2} k_{\F}^2 4 \pi \quad \Leftrightarrow \quad
	\hat{V}_k
	= \frac{8 \pi \kappa \hbar k_{\F}^3}{3} \alpha |k|^{-2} 
	= \frac{8 \kappa e_{\Coul}^2 \hbar k_{\F}^2}{3 \pi |k|^2}\;.
\end{equation}
With this identification, the $ \hat{V}_k $--dependent factor in \cref{eq:nqbfinalresult} amounts to
\begin{equation}
	 \frac{3 \hat{V}_k}{4 \pi \hbar k_{\F}^3 \kappa}
	 = \frac{2 e_{\Coul}^2}{\pi^2 k_{\F} |k|^2} = \frac{2 \alpha}{|k|^2}\;.
\end{equation}
As $ \frac{k_{\F}}{|q|} = 1 + \cO(k_{\F}^{-1}) $, we can equivalently write
\begin{align*}
	n_q^{(\DV, \SR)} \tagg{eq:nqDVSRfinal}
	& =\int_{R_q}^{R} \d |k| \frac{|k|  3 \hat{V}_k}{2 \pi \hbar k_{\F}^3 \kappa} \int_0^\infty \left( \frac{1 + \cO(k_{\F}^{-1}) }{1 + \mu^2} - \frac{ R_q |k|^{-1} + \cO(k_{\F}^{-1})}{R_q^2 |k|^{-2} + \mu^2 } \right)
	\frac{(1 + \cO(k_{\F}^{-1})) \d \mu}{1 + Q_k^{(0)}(\mu)} \;.
\end{align*}

Inside the Fermi ball, considering $ n_q^{(\DV, \mathrm{in})} $ in \cref{eq:nqDVin}, the second of the two integrals over $ |k| $ vanishes in the short-range approximation $ |k| \le R $ as soon as $ k_{\F} $ large enough. The first term is identical to $ n_q^{(\DV, \mathrm{in})} $ up to a replacement of $ |q| - \frac{|k|}{2} $ by $ |q| + \frac{|k|}{2} $ in two places,
of $ |q| - k_{\F} $ by $ k_{\F} - |q| $ in the integral limits,
and of $ |q|^2 - k_{\F}^2 $ by $ k_{\F}^2 - |q|^2 $ in two other places.
Thus, the expansion of $ n_q^{(\DV, \mathrm{in})} $ in the short-range approximation is identical to $ n_q^{(\DV, \SR)} $, again agreeing with half our result \cref{eq:nqbfinalresult} as $ k_{\F} \to \infty $.

\section*{Acknowledgments}
The authors were supported by the European Union (ERC \textsc{FermiMath} nr.~101040991). Views and opinions expressed are those of the authors and do not necessarily reflect those of the European Union or the European Research Council Executive Agency. Neither the European Union nor the granting authority can be held responsible for them. The authors were partially supported by Gruppo Nazionale per la Fisica Matematica in Italy.

\section*{Statements and Declarations}
The authors have no competing interests to declare.

\section*{Data Availability}
As purely mathematical research, there are no datasets related to the article.

\end{document}